%

\documentclass[11pt]{article}

\usepackage{authordate1-4}
\usepackage{FRdina4}
\usepackage{amssymb}  
\usepackage{MyMacros}
\usepackage{color}
\usepackage{url}

\parskip1.3ex plus 0.3ex minus 0.2ex
\parindent0pt

\begin{document}
\pagestyle{headings}

\title{\textbf{Information and Set Algebras:\\Interpretation and Uniqueness of Conditional Independence}}

\author{Juerg Kohlas \\
\small Department of Informatics DIUF \\ 
\small University of Fribourg \\ 
\small CH -- 1700 Fribourg (Switzerland) \\ 
\small E-mail: \texttt{juerg.kohlas@unifr.ch} \\
\small \texttt{http://diuf.unifr.ch/drupal/tns/juerg\_kohlas
}
}
\date{\today}

\maketitle


\begin{abstract}
A new seemingly weak axiomatic formulation of information algebras is given. It is shown how such information algebras can be embedded into set (information) algebras. In set algebras there is a natural relation of conditional independence between partitions. Via the embedding of information algebras this relation carries over to information algebras. The new axiomatic formulation is thereby shown to be equivalent to the one given in \cite{kohlas17}. In this way the abstract concept of conditional independence in information algebras gets a concrete interpretation in terms of set theoretical relations.
\end{abstract}

\tableofcontents


\section{Introduction and Overview}

In \cite{kohlas03,kohlasschmid14,kohlasschmid16,kohlas17} various axiomatic definitions of information algebras are given. Although these formulations are not fully equivalent, they all model the idea that information comes in pieces, can be aggregated or combined, that pieces of information refer to questions or domains, and that the part of a piece of information related to a given question can be extracted. The operations of combination and extraction are subject to some axiomatic requirements and define thus algebraic structures called information algebras. The axiomatic formulation given in \cite{kohlas17} is so far the most general one. It is based on a relation of \textit{conditional independence} between domains. The concept of conditional independence is fundamental for any formalism modelling information, as is known from probability theory and the theory of relational databases for example, and in many more systems, see for instance \cite{studeny93,shenoy94b,studeny95}. In \cite{dawid01} a fundamental mathematical structure called separoid, capturing the essence of conditional independence, is discussed. The axiomatic structure proposed in \cite{kohlas17} is based on a slightly more general structure, called \textit{quasi-separoid}. This is a purely abstract formulation of conditional independence. The purpose of this paper is to give a more concrete interpretation of this concept, based on set-theoretic concepts. This is achieved by representing abstract information algebras by \textit{set algebras}, based on set-theoretic operations. 

The paper starts with yet another axiomatic system for information algebras (Section \ref{sec:FomFreeAlg}), which is seemingly much weaker than the system proposed in \cite{kohlas17}. A main result of the paper is to show that in fact it is equivalent to the one in \cite{kohlas17}. To show this, the concept of set (information) algebras is introduced in Section Ê\ref{sec:SetAlgebra}. It is then shown that any information algebra in the sense of Section \ref{sec:FomFreeAlg} can be embedded into a set algebra, that is, is isomorphic to some set algebra. This is discussed using the concept of order-generating sets (or meet-dense sets) (Section \ref{sec:InfAndSetAlg}). It turns out that, depending on structural properties of the information algebra, different order-generating sets and hence different embeddings exist (Section \ref{sec:Exmpls}). The essential point is then that between partitions of a set a natural conditional independence relation can be defined. This relation forms a quasi-separoid. On the base of this relation two computationally import properties of set algebras, the combination and extraction properties, can be derived. Via the set algebra representations of an information algebra, these q-separoids of partitions induce a conditional independence relation among domains of the information algebra and the combination and extraction properties of the set algebra are via the embedding inherited in the information algebra too. In \cite{kohlas17}  these properties were postulated as axioms, whereas we show here that they can be derived from weaker postulates. Even if there are several different embeddings of an information algebra into different set algebras, it turns out that the conditional independence relation induced in the information algebra via the sets algebras is unique, does not depend on the particular set algebra used to induce it. This is discussed in Section \ref{sec:CondIndep} and constitutes the main result of this paper. As a further consequence, the present axiomatic definition of an information algebra covers all previous formulations \cite{kohlas03,kohlasschmid14,kohlasschmid16,kohlas17}, which turn out to be special cases.

Representations of information algebras by set algebras have already been discussed in \cite{kohlas03,kohlasschmid16} in the context of their respective axiomatic systems. The set representations presented here generalize the representations of the papers cited. Since it is shown that the system of information algebras proposed in \cite{kohlas17} can be derived from the present axiomatic system, all results in this reference remain valid.


\section{Information Algebras} \label{}

\subsection{Domain-free Algebras} \label{sec:FomFreeAlg}

In an algebraic view of information we consider first that information comes in pieces, which can be combined or aggregated to new pieces of information. Later on, we shall argue that pieces of information refer to certain questions, represent (at least partial) answers to them and foremost that from a piece of information the part relating to some given questions can be extracted. 

So, in a first step let $\Psi$ be a set of elements representing pieces of information. Pieces of information can be combined, an operation in $\Psi$ which is represented by a binary operation 
\begin{eqnarray*}
\cdot : \Psi \times \Psi \rightarrow \Psi, \quad (\phi,\psi) \mapsto \phi  \cdot \psi.
\end{eqnarray*}
The product $\phi \cdot \psi$ is thought to represent the piece of information obtained by combining $\phi$ and $\psi$. This operation is assumed to be associative, commutative and idempotent. Further, it is assumed that there is a unit element $1 \in \Psi$ such that $\psi \cdot 1 = 1 \cdot \psi = \psi$ for all elements $\psi$ of $\Psi$. The unit element represents vacuous information, combining it with any other piece of information changes nothing. In addition, a null element $0$ is also assumed in  $\Psi$, such that $\psi \cdot 0 = 0 \cdot \psi = 0$. The element $0$ represents contradictory information, which by combining with any other information destructs it. So, the signature $(\Psi;\cdot,1,0)$ represents a commutative. idempotent semigroup with unit and null element. 

The idempotency of the semigroup permits to introduce an order into $\Psi$. In fact, this can be done in two ways. We choose to define $\phi \leq \psi$ if $\phi \cdot \psi = \psi$. The idea is  that $\phi$ is less informative than $\psi$, if combining with $\psi$ gives nothing new; $\psi$ is then more informative than $\phi$. It can easily be verified that $\leq$ is a partial order in $\Psi$, $1$ is the smallest and $0$ the largest element in this order. We call this the \textit{information order} in $\Psi$. It can also readily be verified that combination of two elements results in the \textit{supremum} of the two element in this order,
\begin{eqnarray*}
\phi \cdot \psi = \sup\{\phi,\psi\} = \phi \vee \psi.
\end{eqnarray*}
So, the semigroup $(\Psi;\cdot,1,0)$ can also be seen as a bounded join-semilattice $(\Psi;\leq)$. Depending on the case we shall stress the combination or the order aspects of $\Psi$.

Next we consider the second aspect of pieces information, namely that they refer to questions. At this place we do not try to describe the internal structure of questions, we rather think of questions as represented by some abstract domains, describing or representing somehow the possible answers to the questions associated; in the simplest case for instance by listing the possible answers. Let then $D$ be a set whose elements are thought to represent questions or domains. Its generic elements will denoted by lowercase letters like $x$, $y$, $z, \ldots$. We assume however that domains or questions can be compared with respect to their granularity or fineness. Therefore, we require $(D;\leq)$ to be a partial order, where $x \leq y$ means that $y$ is finer than $x$, that is, answers to $y$ will be more informative than answers to $x$. Moreover, if $x$ and $y$ are two elements of $D$, we want to able to consider the combined question represented by $x$ and $y$. This combined question is surely finer than both $x$ and $y$; it is in fact the coarsest questions finer than $x$ and $y$, that is the supremum of $x$ and $y$ or the join $x \vee y$ with respect to the order in $D$. So, we assume $(D;\leq)$ to be a join-semilattice. 

Now, it must be possible to extract from any piece of information $\psi$ in $\Psi$ the part relating to a domain or question $x$. This is achieved by \textit{extraction} maps $\epsilon_x : \Psi \rightarrow \Psi$, where $\epsilon_x(\psi)$ represents the part of $\psi$ referring to domain $x$. Let $\mathcal{E}$ be the set of extraction maps $\{\epsilon_x:x \in D\}$. We require for each extraction map $\epsilon_x$ to satisfy the following conditions:
\begin{description}
\item[E1] $\epsilon_x(0) = 0$,
\item[E2] $\psi \cdot \epsilon_x(\psi) = \psi$ for all $\psi \in \Psi$,
\item[E3] $\epsilon_x(\epsilon_x(\phi) \cdot \psi) = \epsilon_x(\phi) \cdot \epsilon_x(\psi)$.
\end{description}
E1 says that contradiction cannot be eliminated by extraction. E2 states that information extracted from a piece of information $\psi$ is contained in $\psi$. E3 requires that the part relative to a domain $x$ of a combination of a piece of information relating to $x$ with any other piece, can be obtained by combining the piece relating to $x$ with the part relating to $x$ of the second piece of information. It is a crucial condition as we shall see. 

We may restate this conditions also in order-theoretic terms:
\begin{description}
\item[E1] $\epsilon_x(0) = 0$,
\item[E2] $\epsilon_x(\psi) \leq \psi$ for all $\psi \in \Psi$,
\item[E3] $\epsilon_x(\epsilon_x(\phi) \vee \psi) = \epsilon_x(\phi) \vee \epsilon_x(\psi)$.
\end{description}
An operator satisfying these three conditions is called an \textit{existential quantifier} in algebraic logic, although in the relevant literature the opposite order rather than our information order is used to define existential quantifiers.

A domain $x \in D$ is called a support of $\psi \in \Psi$ if $\epsilon_x(\psi) = \psi$. We add two additional requirements for extraction operators:
\begin{description}
\item[E4] $\forall \psi \in \Psi$, there is a $x \in D$ so that $\epsilon_x(\psi) = \psi$,
\item[E5] If $\epsilon_x(\psi) = \psi$ and $x \leq y$, then $\epsilon_y(\psi) = \psi$.
\end{description}
Every $\psi \in \Psi$ has a support $x$ or relates fully to some domain $x$. This means that any piece of information refers at least to one of the questions or domains in $D$. If $\psi$ has a support $x$ and $x \leq y$, then $\psi$ has also support $y$; if $\psi$ refers to a domain $x$, then it refers also to any finer domain $y$.

A system $(\Psi;\mathcal{E},\cdot,1,0)$, where $(\Psi;\cdot,1,0)$ is a idempotent, commutative semigroup and $\mathcal{E}$ a family of operators $\epsilon_x$ for $x \in D$, satisfying conditons E1 to E5 relative to a join-semilattice $(D;\leq)$ is called an \textit{information algebra} or more precisely, a \textit{domain-free} information algebra. There is also a related version, called a \textit{labeled} information algebra. In \cite{kohlas03} and \cite{kohlasschmid16} different axiomatic formulations of information algebras are given, which we shall show to be special cases of the present one. Also in \cite{kohlas17} still another axiomatic is given, which turns below out to be essentially equivalent to the one above. In \cite{kohlas03} and \cite{kohlas17} the labeled versions of information algebras are presented; here we shall not discuss the labeled version.

Here follow a few elementary properties of support and extraction:

\begin{lemma} \label{le:SuppProp}
If $(\Psi;\mathcal{E},\cdot,1,0)$ is an information algebra, then the following holds for $x,y \in D$ and $\phi,\psi \in \Psi$:
\begin{enumerate}
\item $\epsilon_x(1) = 1$,
\item $\phi \leq \psi$ implies $\epsilon_x(\phi) \leq \epsilon_x(\psi)$,
\item $x$ is a support of $\epsilon_{x}(\phi)$, $\epsilon_{x}(\epsilon_{x}(\phi)) = \epsilon_{x}(\phi)$,
\item if $x \leq y$, then $\epsilon_x(\psi) \leq \epsilon_y(\psi)$ for all $\psi \in \Psi$,
\item if $x \leq y$, then $\epsilon_x(\epsilon_y(\psi)) = \epsilon_x(\psi)$,
\item if $x$ is a support of both $\phi$ and $\psi$, then it is also a support of $\phi \cdot \psi$, $\epsilon_{x}(\phi \cdot \psi) = \phi \cdot \psi$,
\item if $x$ is a support of $\phi$ and $y$ of $\psi$, then $x \vee y$ is a support of $\phi \cdot \psi$, $\epsilon_{x \vee y}(\phi \cdot \psi) = \phi \cdot \psi$.
\end{enumerate}
\end{lemma}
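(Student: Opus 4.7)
The plan is to prove the seven items essentially in order, since each later item tends to use earlier ones. Throughout I will freely translate between the equational form $\phi \cdot \psi = \psi$ and the order form $\phi \leq \psi$, and I will use commutativity/idempotence of $\cdot$ without comment.

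First I would dispatch item 1 directly from E2: $\epsilon_x(1) \leq 1$, and since $1$ is the bottom element of the information order, equality follows. For item 2 (monotonicity), starting from $\phi \leq \psi$, I get by E2 that $\epsilon_x(\phi) \leq \phi \leq \psi$, so $\epsilon_x(\phi) \cdot \psi = \psi$; applying $\epsilon_x$ to both sides and using E3 yields $\epsilon_x(\psi) = \epsilon_x(\epsilon_x(\phi) \cdot \psi) = \epsilon_x(\phi) \cdot \epsilon_x(\psi)$, i.e.\ $\epsilon_x(\phi) \leq \epsilon_x(\psi)$. Item 3 (idempotence of $\epsilon_x$) then follows by instantiating E3 with $\psi = 1$ and combining with item 1: $\epsilon_x(\epsilon_x(\phi)) = \epsilon_x(\epsilon_x(\phi) \cdot 1) = \epsilon_x(\phi) \cdot \epsilon_x(1) = \epsilon_x(\phi)$, which also shows that $x$ is a support of $\epsilon_x(\phi)$.

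For item 4, I would use item 3 to note that $x$ is a support of $\epsilon_x(\psi)$, whence E5 with $x \leq y$ gives $\epsilon_y(\epsilon_x(\psi)) = \epsilon_x(\psi)$; combining this with monotonicity (item 2) applied to $\epsilon_x(\psi) \leq \psi$ from E2 gives $\epsilon_x(\psi) = \epsilon_y(\epsilon_x(\psi)) \leq \epsilon_y(\psi)$. For item 5, monotonicity applied to E2 gives $\epsilon_x(\epsilon_y(\psi)) \leq \epsilon_x(\psi)$, while the reverse inequality comes from applying $\epsilon_x$ to item 4 and using item 3: $\epsilon_x(\psi) = \epsilon_x(\epsilon_x(\psi)) \leq \epsilon_x(\epsilon_y(\psi))$.

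Items 6 and 7 then fall out immediately. For item 6, if $\epsilon_x(\phi) = \phi$ and $\epsilon_x(\psi) = \psi$, then E3 gives $\epsilon_x(\phi \cdot \psi) = \epsilon_x(\epsilon_x(\phi) \cdot \psi) = \epsilon_x(\phi) \cdot \epsilon_x(\psi) = \phi \cdot \psi$. For item 7, E5 promotes supports $x$ of $\phi$ and $y$ of $\psi$ to the common support $x \vee y$ for both, and item 6 then finishes.

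The whole argument is essentially bookkeeping with E1--E5; no step is a real obstacle, but the one requiring the most care is item 2, since E3 is not immediately in the shape needed for monotonicity and one must first absorb $\psi$ on top of $\epsilon_x(\phi)$ using E2 before E3 can be applied. Once monotonicity and idempotence of $\epsilon_x$ are in hand, the remaining items are quick consequences.
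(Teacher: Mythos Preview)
Your proof is correct and follows essentially the same approach as the paper's, using the axioms E2, E3, E5 together with the already-established items in the same order and with the same key manipulations. Your argument for item~5 is in fact slightly cleaner than the paper's (you use monotonicity and idempotence directly for both inequalities, whereas the paper re-derives one direction via E3), but this is a cosmetic difference, not a genuinely different route.
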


\begin{proof}
1.) By E2 we have $1 \cdot \epsilon_x(1) = 1$, hence $\epsilon_x(1) \leq 1$, but we have also $1 \leq \epsilon_x(1)$, since the unit is the smallest element in $(\Psi,\leq)$, therefore $\epsilon_x(1) = 1$.

2.) $\phi \leq \psi$ means $\phi \cdot \psi = \psi$. Hence we have by E3 und E2
\begin{eqnarray*}
\epsilon_x(\phi) \cdot \epsilon_x(\psi) = \epsilon_x(\epsilon_x(\phi) \cdot \psi) = \epsilon_x(\epsilon_x(\phi) \cdot \phi \cdot \psi) = \epsilon_x(\phi \cdot \psi) = \epsilon_x(\psi).
\end{eqnarray*}
This shows that $\epsilon_x(\phi) \leq \epsilon_x(\psi)$.

3.) This follows from E3 and item 1 in the following way:
\begin{eqnarray*}
\epsilon_x(\epsilon_x(\psi)) = \epsilon_x(\epsilon_x(\psi) \cdot 1) = \epsilon_x(\psi) \cdot \epsilon_x(1) = \epsilon_x(\psi) \cdot 1 = \epsilon_x(\psi).
\end{eqnarray*}

4.) Since $x$ is a support of $\epsilon_x(\psi)$ and $\epsilon_x(\psi) \leq \psi$, we have by E5 and item 2
\begin{eqnarray*}
\epsilon_x(\psi) \cdot \epsilon_y(\psi) = \epsilon_y(\epsilon_x(\psi)) \cdot \epsilon_y(\psi) = \epsilon_y(\psi).
\end{eqnarray*}

5.) Since $\epsilon_x(\psi) \leq \epsilon_y(\psi)$ (item 4) we have by E3 $\epsilon_x(\epsilon_y(\psi)) = \epsilon_x(\epsilon_x(\psi) \cdot \epsilon_y(\psi)) = \epsilon_x(\psi) \cdot \epsilon_x(\epsilon_y(\psi))$. But $\epsilon_y(\psi) \leq \psi$, hence $\epsilon_x(\epsilon_x(\psi)) \leq \epsilon_x(\psi)$ and therefore we conclude that $\epsilon_x(\epsilon_y(\psi)) = \epsilon_x(\psi)$.

6.) Assuming $x$ is a support of $\phi$ and $\psi$, using E3 we obtain
\begin{eqnarray*}
\epsilon_x(\phi \cdot \psi) = \epsilon_x(\epsilon_x(\phi) \cdot \psi) = \epsilon_x(\phi) \cdot \epsilon_x(\psi) = \phi \cdot \psi.
\end{eqnarray*}
So $x$ is also a support for $\phi \cdot \psi$.

7.) By E5 $x \vee y$ is a support both of $\phi$ and $\psi$. Then the claim follows from item 4 above.
\end{proof}

We use these results in the sequel without explicit reference. Examples for information algebras may be found in the references \cite{kohlas03,kohlasschmid16,kohlas17} although with respect to less general axiomatics. Further examples will be presented below in due course.

\subsection{Homomorphisms and Subalgebras}

Consider two information algebras $(\Psi_1;\mathcal{E}_1,\cdot_1,1_1,0_1)$ and $(\Psi_2;\mathcal{E}_2,\cdot_2,1_2,0_2)$, where the first is based on a join-semilattice $(D_1;\leq_1)$ whereas the second one on the join-semilattice $(D_2;\leq_2)$. Here we define what we understand by a homomorphism of the first algebra into the second one.

\begin{definition}
\textit{Homomorphism:} A pair of maps $(f,g)$,
\begin{eqnarray*}
f : \Psi_1 \rightarrow \Psi_2, \quad g : D_1 \rightarrow D_2,
\end{eqnarray*}
is called a homomorphism between the two information algebras $(\Psi_1;E_1,\cdot_1,1_1,0_1)$ and $(\Psi_2;E_2,\cdot_2,1_2,0_2)$, if
\begin{enumerate}
\item $g$ is a join-homomorphism, i.e.for $x,y \in D_1$, $x \leq_1 y$ implies $g(x) \leq_2 g(y)$ in $D_2$, and $g(x \vee_1 y) = g(x) \vee_2 g(y)$,
\item $f$ preserves combination (or join), null and unit i.e. 
\begin{enumerate}
\item for $\phi,\psi \in \Psi_1$, $f(\phi \cdot_1 \psi) = f(\phi) \cdot_2 f(\psi)$,
\item $f(0_1) = 0_2$ and $f(1_1) = 1_1$,
\end{enumerate}
\item for $\psi \in \Psi_1$ and $x \in D_1$,
\begin{eqnarray*}
f(\epsilon_x(\psi)) = \epsilon_{g(x)}(f(\psi)).
\end{eqnarray*}
\end{enumerate}
If $g$ is a bijection (one-to-one and onto) and $f$ an injection (one-to-one) then $(f,g)$ is called an embedding and if $f$ is also a bijection, then $(f,g)$ is called an information-algebra isomorphism and the two algebras are called isomorphic.
\end{definition}

Next, we turn to subalgebras.

\begin{definition}
\textit{Subalgebra:} If $\Psi_1$ and $D_1$ are subsets of $\Psi_2$ and $D_2$ such that the inclusion maps define a homomorphism, then $(\Psi_1;\mathcal{E}_1,\cdot_1,1_1,0_1)$ is a subalgebra of  $(\Psi_2;\mathcal{E}_2,\cdot_2,1_2,0_2)$. That is, $\Psi_1$ is closed in $\Psi_2$ under formation of combination and extractions from $D_1$, and contains the null and unit elements $0$ and $1$, whereas $D_1$ is closed in $D_2$ under formation of joins
\end{definition}

A subalgebra of an information algebra is again an information algebra. We remark that in an information algebra $(\Psi;\mathcal{E},\cdot,1,0)$ the sets $\epsilon_x(\Psi) = \{\psi \in \Psi:\epsilon_x(\psi) = \psi\}$ of all elements with support $x$ and $D_x = \{y \in D:y \leq x\}$ define a subalgebra of $(\Psi;\mathcal{E},\cdot,1,0)$. Clearly $D_x$ is closed in $D$ under joins and $\epsilon_x(\Psi)$ is closed in $\Psi$ under combination, since by Lemma \ref{le:SuppProp}, we have $\epsilon_x(\phi \cdot \psi) = \phi \cdot \psi$ if $\phi$ and $\psi$ have support $x$. Further, if $\epsilon_x(\psi) = \psi$ and $y \leq x$, again by Lemma \ref{le:SuppProp}, since $\epsilon_y(\psi)$ has support $y$, we have by E5 $\epsilon_x(\epsilon_y(\psi)) = \epsilon_y(\psi)$ and $\epsilon_x(\Psi)$ is closed under extraction operators for $y \in D_x$. So, if $\mathcal{E}_x$ is the set of extraction operators $\epsilon_y$ with $y \leq x$, then $(\epsilon_x(\Psi);\mathcal{E}_x,\cdot,1,0)$ is still an information algebra, a subalgebra of $(\Psi;\mathcal{E},\cdot,1,0)$.

\subsection{Ideal Completion} \label{subsec:IdCompl}

In an information algebra $(\Psi;\mathcal{E};\cdot,1,0)$, a \textit{consistent} set of pieces of information $I$ is a nonempty subset $I$ of $\Psi$ such that (i) with any element $\phi \in I$ also all elements $\psi \leq \phi$ implied by $\phi$ or contained in $\phi$ belong to $I$, and (ii) with any two elements $\phi,\psi \in I$ also their combination $\phi \cdot \psi$ belongs to $I$. Such sets are just \textit{ideals} in the context of the join-semilattice $(\Psi;\leq)$. Ideals not equal to $\Psi$ are called proper. Consistent sets (also called \textit{theories}) may also be thought of as pieces of information. In fact, we may define among them operations of combination and extraction as follows.

Let $I_{\Psi}$ denote the family of all ideals contained in $\Psi$. We define the following two operations for ideals $I_{1},I_{2},I\in I_{\Psi}$ and $x \in D$:

\begin{enumerate}
\item \textit{Combination:} $I_{1} \cdot I_{2} = \{\phi \in \Psi:\phi \leq \phi_{1} \cdot \phi_{2} \textrm{ for some}\ \phi_{1} \in I_{1},\phi_{2} \in I_{2}\}$,
\item \textit{Extraction:} $\bar{\epsilon}_x(I) = \{\phi \in \Psi:\phi \leq \epsilon_x(\psi) \textrm{ for some}\ \psi \in I\}$.
\end{enumerate}

It turns out that the system $(I_{\Psi};\bar{\mathcal{E}},\cdot,\{1\},\Psi)$ with $\bar{\mathcal{E}} = \{\bar{\epsilon}_x:x \in D\}$ is an information algebra \cite{kohlas03,kohlasschmid14}, called the \textit{ ideal completion} of $(\Psi;\mathcal{E},\cdot,1,0)$. Moreover, the original algebra $(\Psi;\mathcal{E},\cdot,1,0)$ may be embedded into its ideal completion by the map $\psi \mapsto \downarrow\!\psi$, where the \textit{down-set} $\downarrow\!\psi = \{\phi: \phi\leq \psi\}$ is the principal ideal generated by $\psi$. Ideal completions will play an important role for Boolean information algebras (Section \ref{subsec:GenBooleInfAlg}) and distributive lattice information algebras (Section \ref{subsec:GenDistrLattInfAlg}). It is well-known that $I_{\Psi}$, ordered by set inclusion, is a complete lattice.

For later reference, we need the following result:

\begin{lemma} \label{le:ExtrOfideals}
Let $(\Psi;\mathcal{E},\cdot,1,0)$  be an information algebra and $(I_{\Psi};\bar{\mathcal{E}},\cdot,\{1\},\Psi)$ its ideal completion. Then for $x \in D$,  $\bar{\epsilon}_x(I) = \bar{\epsilon}_x(J)$ iff $I \cap \epsilon_x(\Psi) = J \cap \epsilon_x(\Psi)$.
\end{lemma}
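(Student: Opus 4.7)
The plan is to reduce both equalities $\bar{\epsilon}_x(I) = \bar{\epsilon}_x(J)$ and $I \cap \epsilon_x(\Psi) = J \cap \epsilon_x(\Psi)$ to a common object, namely $I \cap \epsilon_x(\Psi)$ viewed as a generator of $\bar{\epsilon}_x(I)$. First I would establish the key identity $\{\epsilon_x(\psi) : \psi \in I\} = I \cap \epsilon_x(\Psi)$ for every ideal $I \in I_{\Psi}$. The inclusion $\subseteq$ follows from E2, which gives $\epsilon_x(\psi) \leq \psi \in I$ so that downward closure of the ideal forces $\epsilon_x(\psi) \in I$, together with item~3 of Lemma~\ref{le:SuppProp} which places $\epsilon_x(\psi)$ in $\epsilon_x(\Psi)$. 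The reverse inclusion is immediate, since any $\phi \in \epsilon_x(\Psi)$ equals $\epsilon_x(\phi)$. This already exhibits $\bar{\epsilon}_x(I)$ as the down-closure of $I \cap \epsilon_x(\Psi)$ inside $\Psi$.

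For the $(\Leftarrow)$ direction I would take $\phi \in \bar{\epsilon}_x(I)$, pick $\psi \in I$ with $\phi \leq \epsilon_x(\psi)$, and set $\psi' = \epsilon_x(\psi)$. By the identity just established, $\psi' \in I \cap \epsilon_x(\Psi) = J \cap \epsilon_x(\Psi)$, and idempotence of the extraction operator ($\epsilon_x(\psi') = \psi' = \epsilon_x(\psi)$) then rewrites $\phi \leq \epsilon_x(\psi')$ with $\psi' \in J$; hence $\phi \in \bar{\epsilon}_x(J)$. Swapping $I$ and $J$ yields the reverse inclusion.

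For the $(\Rightarrow)$ direction I would take $\phi \in I \cap \epsilon_x(\Psi)$, use $\phi = \epsilon_x(\phi)$ together with $\phi \in I$ to obtain $\phi \in \bar{\epsilon}_x(I) = \bar{\epsilon}_x(J)$, so that $\phi \leq \epsilon_x(\psi)$ for some $\psi \in J$. Applying E2 and downward closure of $J$ twice then gives first $\epsilon_x(\psi) \in J$ and next $\phi \in J$, whence $\phi \in J \cap \epsilon_x(\Psi)$. A symmetric argument completes the proof.

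I do not expect a real obstacle. The two levers doing all the work are (i) the combination of E2 with downward closure of ideals, which lets one upgrade an inequality $\phi \leq \epsilon_x(\psi)$ to genuine membership $\phi \in I$, and (ii) idempotence of $\epsilon_x$ from Lemma~\ref{le:SuppProp}, which on demand produces a witness already sitting in $\epsilon_x(\Psi)$. The only bookkeeping is to ensure that the constructed witnesses land simultaneously in the relevant ideal and in $\epsilon_x(\Psi)$.
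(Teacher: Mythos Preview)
Your proof is correct and follows essentially the same route as the paper: both directions hinge on the observation that $\epsilon_x(\psi)\in I\cap\epsilon_x(\Psi)$ whenever $\psi\in I$ (by E2 and downward closure), and that any element of $I\cap\epsilon_x(\Psi)$ already lies in $\bar{\epsilon}_x(I)$ because it is its own $\epsilon_x$-image. The paper's write-up is slightly terser (it invokes $\bar{\epsilon}_x(J)\subseteq J$ directly rather than unfolding it), but the underlying argument is the same.
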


\begin{proof}
Assume $\bar{\epsilon}_x(I) = \bar{\epsilon}_x(J)$ and consider $\psi \in I \cap \epsilon_x(\Psi)$. Then we have also $\psi \in \bar{\epsilon}_x(I)$, hence $\psi \in \bar{\epsilon}_x(J)$. But since $\psi = \epsilon_x(\psi)$ and $\bar{\epsilon}_x(J) \subseteq J$,  we have also $\psi \in J \cap \epsilon_x(\Psi)$. By symmetry this implies $I \cap \epsilon_x(\Psi) = J \cap \epsilon_x(\Psi)$.

Conversely, assume $I \cap \epsilon_x(\Psi) = J \cap \epsilon_x(\Psi)$ and consider $\phi \in \bar{\epsilon}_x(I)$. Then there is a $\psi \in I$ such that $\phi \leq \epsilon_x(\psi)$. But then $\epsilon_x(\psi) \in I \cap \epsilon_x(\Psi)$. Now we have $\phi \leq \epsilon_x(\psi) \in J \cap \epsilon_x(\Psi)$ and this implies $\phi \in \bar{\epsilon}_x(J)$. By symmetry it follows that $\bar{\epsilon}_x(I) = \bar{\epsilon}_x(J)$, concluding the proof.
\end{proof}

In the the next section an important and basic class of information algebras is introduced, later, in Section \ref{subsc:ExplInfAlg} some further examples are presented.

\section{Set Algebras} \label{sec:SetAlgebra}

So far the set $\Psi$ of pieces of information as well as the set $D$ of domains have been arbitrary abstract sets, subject only to the axioms specified for combination and extraction. We will now define a special type of an information algebras, called set algebras, whose information elements are subsets of some universe and the information operations are described by set-theoretical constructs. It will then be discussed in this paper to what extend general, abstract information algebras can be represented by or identified to such special set algebras.

We consider a base set $U$ ($U \not= \emptyset$) , the universe, which can be visualized as a set of possible worlds. The power set of $U$ will be denoted by $2^U$. Domains $x$, representing questions, will be modeled by equivalence relations $\equiv_x$ on $U$. The idea is that for $u$ and $u'$ in $U$ we have $u \equiv_x u'$ iff question $x$ has the same answer in worlds $u$ and $u'$. Equivalent relations $\equiv_x$ induce partitions $P_x$ of the base set $U$ whose blocks are the equivalence classes $[u]_x$ of the equivalence relation. A question $x$ will be considered to be finer than a question $y$, iff $u \equiv_x u'$ implies $u \equiv_y u'$, or equivalently, iff every block of $P_y$ is contained in a (unique) block of $P_x$. We denote this situation by $x \leq y$, read as ``$x$ is coarser than $y$'' or ``$y$ is finder than $x$''. It is obvious that $\leq$ defines a partial order in the family of all equivalence relations on $U$ respective all partitions of $U$, $Part(U)$. The order $(Part(U);\leq)$ has the partition $\{\{u\}:u \in U\}$, where all blocks consist of a single element as top element, as finest partition, and $\{U\}$ the partition consisting of the single block $U$ as coarsest element.

This order between partitions is motivated by information-theoretic considerations. In lattice theory usually the opposite order is considered. But in both cases, its is well known that the order $(Part(U);\leq)$ is a lattice, where the join $P_1 \vee P_2$ of any two partitions $P_1$ and $P_2$ in our information order is the partition consisting of all blocks of the form $B_1 \cap B_2 \not= \emptyset$ where $B_1 \in P_1$ and $B_2 \in P_2$. The meet operation is a bit more involved, its discussion is postponed to a later part, since we do not need it for the moment. 

Now, consider a join-sublattice $(\mathcal{D};\leq)$ of $(Part(U);\leq)$. To any partition of the universe $U$ a saturation operator $\sigma_P$ defined by
\begin{eqnarray*}
\sigma_P(X) = \{u \in U:\exists u' \in U \textrm{ such that}\ u \equiv_P u'\},
\end{eqnarray*}
where $u \equiv_P u'$ if $u$ and $u'$ belong to the same block of $P$. Let the $\mathcal{E}$ be the set of all saturation operators $\sigma_P$ for $P \in \mathcal{D}$. Now, any saturation operator is an existential quantifier relative to any partition in $Part(U)$, which follows from the following lemma.

\begin{lemma}   \label{saturation operators}
Let $\sigma_P$, $P\in Part(U)$, be a saturation operator on $U$. Then for all $X,Y\subseteq U$
\begin{enumerate}
\item  $\sigma_P(\emptyset) =  \emptyset$,
\item $X\subseteq\sigma_P(X)$,
\item$X\subseteq Y$ implies $\sigma_P(X)\subseteq\sigma_P(Y)$,
\item $\sigma_P(\sigma_P(X)\cap Y) = \sigma_P(X)\cap\sigma_P(Y)$.
\end{enumerate}
\end{lemma}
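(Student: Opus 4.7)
The key observation is that, reading the definition with the obvious correction that $u'$ should range over $X$ (so $\sigma_P(X)=\{u\in U:\exists u'\in X,\ u\equiv_P u'\}$), the set $\sigma_P(X)$ is exactly the union of all blocks of $P$ that meet $X$. In particular $\sigma_P(X)$ is itself $P$-saturated: if $u\in\sigma_P(X)$ and $v\equiv_P u$, then $v\in\sigma_P(X)$. With this reformulation in hand, items 1--3 will be essentially immediate, and item 4 will follow from a clean two-inclusion argument exploiting transitivity of $\equiv_P$.

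For 1, there is no $u'\in\emptyset$ with which $u$ could be equivalent. For 2, every $u\in X$ satisfies $u\equiv_P u$ and $u\in X$, so $u\in\sigma_P(X)$. For 3, if $u\equiv_P u'$ with $u'\in X\subseteq Y$, then $u'\in Y$ witnesses $u\in\sigma_P(Y)$.

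For 4, I would prove both inclusions separately. For the inclusion ``$\supseteq$'', take $u\in\sigma_P(X)\cap\sigma_P(Y)$; since $u\in\sigma_P(Y)$ there is $u''\in Y$ with $u\equiv_P u''$, and since $\sigma_P(X)$ is $P$-saturated and contains $u$, it contains $u''$ as well, so $u''\in \sigma_P(X)\cap Y$ and hence $u\in\sigma_P(\sigma_P(X)\cap Y)$. For the inclusion ``$\subseteq$'', take $u\in\sigma_P(\sigma_P(X)\cap Y)$: there is $u'\in\sigma_P(X)\cap Y$ with $u\equiv_P u'$; from $u'\in Y$ we get $u\in\sigma_P(Y)$, and from $u'\in\sigma_P(X)$ there is $u''\in X$ with $u'\equiv_P u''$, whence by transitivity $u\equiv_P u''$ and $u\in\sigma_P(X)$.

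I do not anticipate any serious obstacle; the only subtle point is the implicit but essential fact that $\sigma_P$ maps arbitrary subsets to $P$-saturated subsets, which is what drives the nontrivial direction of item 4 and makes the transitivity of $\equiv_P$ available on both sides. Items 1--3 are a direct unfolding of the definition.
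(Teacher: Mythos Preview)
Your proposal is correct and follows essentially the same approach as the paper. The paper phrases things in block language (writing $\sigma_P(X)$ as the union of all $B\in P$ meeting $X$) while you phrase things in equivalence-relation language, but the content is identical: both hinge on the observation that $\sigma_P(X)$ is $P$-saturated, and for item~4 both prove the two inclusions using exactly this fact together with monotonicity/transitivity.
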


\begin{proof}
For 1. we have  $\sigma_P(\emptyset) = \bigcup\{B\in P: B\cap\emptyset\neq\emptyset\} = \emptyset$.

Items 2. and 3. are obvious.

For 4., observe that $\sigma_P(X)\cap Y\subseteq \sigma_P(X)\cap\sigma_P(Y)$,  so $\sigma_P(\sigma_P(X)\cap Y) \subseteq \sigma_P(X)\cap\sigma_P(Y)$ by 3. Now $\sigma_P(X)\cap\sigma_P(Y)$ is the union of all $B\in P$ satisying $B\cap X\neq\emptyset\neq B\cap Y$. Obviously, for each such $B$ we have $B\cap\sigma_P(X) = B$, so $B\cap\sigma_P(X) \cap Y\neq \emptyset$ and $B$ participates in the union of all $B'\in P$ forming  $\sigma_P(\sigma_P(X)\cap Y)$.
\end{proof}

It follows from this lemma that the elements of $\mathcal{E}$ satisfy requirements E1 to E3 of extraction operators. Now consider a family $\mathcal{S}$ of subsets of $U$ which are saturated with respect to a partition in $\mathcal{D}$, that is are a union of blocks of some partition $P \in \mathcal{D}$. That is, any set $X \in \mathcal{S}$ has a a partition $P$ in $\mathcal{D}$ as support, $\sigma_P(X) = X$. So requirement E4 for extraction operators is satisfied. Further if $P' \leq P$ and $P'$ is a support for $X$, then so is $P$; therefore E5 holds too. Finally $\mathcal{S}$ is obviously closed under intersection, contains the empty set and the universe $U$ as null and unit elements. So we conclude that $(\mathcal{S};\mathcal{E},\cap,U,\emptyset)$ is an information algebra. It is called a \textit{set algebra}, because its elements are subsets and combination and extraction are set-theoretical operations. The signature $(2^U;\mathcal{E},\cap,U,\emptyset)$ is also an information algebra \textit{except} that E4 does not hold in general, there may be subsets of $U$ which are saturated for no saturation operator in $\mathcal{E}$. Nevertheless, we call these weaker systems also set algebras. The point is, that if any information algebra is embedded into such a set algebra $(2^U;\mathcal{E},\cap,U,\emptyset)$, its image is a subalgebra which satisfies E4.

We now define a relation of (conditional) independence between partitions. For a finite set of partitions $P_1,\ldots,P_n$, $n \geq 2$ define
\begin{eqnarray*}
R(P_1,\ldots,P_n) = \{(B_1,\ldots,B_n):B_i \in P_i,\cap_{i=1}^n B_i \not= \emptyset\}.
\end{eqnarray*}
So, $R$ contains the tuples of mutually compatible blocks, representing compatible answers to the $n$ questions modelled by the partitions $P_1,\ldots,P_n$. We call the partitions \textit{independent}, if $R(P_1,\ldots,P_n)$ is the Cartesian product of $P_1,\ldots,P_n$,
\begin{eqnarray*}
R(P_1,\ldots,P_n) = P_1 \times \cdots \times P_n.
\end{eqnarray*}
This means that if an answer to a question $P_i$ is known to be in some block $B_i$, this does not constrain the answers to the other questions, or in other words, the answer to question $P_i$ contains no information relative to the other questions $P_1,\ldots,P_n$. Further, if $B$ is a block of a partition $P$ (contained or not in $P_1,\ldots,P_n$), then define for $n \geq 1$,
\begin{eqnarray*}
R_B(P_1,\ldots,P_n) = \{(B_1,\ldots,B_n):B_i \in P_i,\cap_{i=1}^n B_i \cap B \not= \emptyset\}.
\end{eqnarray*}
This represents the tuples of blocks of $P_1,\ldots,P_n$ compatible among themselves and with block $B \in P$. We call $P_1,\ldots,P_n$ \textit{conditionally indpendent} given $P$, if
\begin{eqnarray*}
R_B(P_1,\ldots,P_n) = R_B(P_1) \times \cdots \times R_B(P_n).
\end{eqnarray*}
So, knowing an answer to $P_i$, compatible with $B \in P$, gives no information on the answers to the other questions, except that they must each be compatible with $B$. Note that if this relation holds, then $B_i \cap B \not= \emptyset$ for all $i=1,\ldots,n$, imply that $B_1 \cap \ldots \cap B_n \cap B \not= \emptyset$. In this case we write $\bot\{P_1,\ldots,P_n\} \vert P$, or, for $n =2$, $P_1 \bot P_2 \vert P$. We may also say that  $P_1 \bot P_2 \vert P$, if $u \equiv_P u'$, implies that there is an element $v \in U$ such that $u \equiv_{P_1 \vee P} v$ and $u' \equiv_{P_2 \vee P} v$.

The three-place relation $P_1 \bot P_2 \vert P$ among partitions has the following properties:

\begin{theorem} \label{th:QSepOfPart}
\
\begin{description}
\item[C1] $P_1 \bot P_2 \vert P_2$,
\item[C2] $P_1 \bot P_2 \vert P$ implies $P_2 \bot P_1 \vert P$,
\item[C3] $P_1 \bot P_2 \vert P$ and $Q \leq P_2$ implies $P_1 \bot Q \vert P$,
\item[C4] $P_1 \bot P_2 \vert P$ implies $P_1 \bot P_2 \vee P \vert P$
\end{description}
\end{theorem}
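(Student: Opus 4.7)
The plan is to unwind each of C1--C4 directly against the block-level definition $R_B(P_1,P_2) = R_B(P_1)\times R_B(P_2)$ required for all $B \in P$. All four statements then become short set-theoretic checks using two elementary facts: distinct blocks of a partition are disjoint, and every block of the join $P_2 \vee P$ (in the information order) has the form $B_2 \cap B'$ with $B_2 \in P_2$ and $B' \in P$, as recorded earlier in the section. Only the inclusion $R_B(P_1)\times R_B(\cdot) \subseteq R_B(P_1,\cdot)$ needs argument; the reverse inclusion is trivial in every case.

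For C1, with $P = P_2$, the only block of $P_2$ meeting a fixed $B \in P_2$ is $B$ itself, so $R_B(P_2) = \{B\}$ and both sides of the defining identity collapse to $\{(B_1,B) : B_1 \in P_1,\ B_1 \cap B \neq \emptyset\}$. C2 is essentially by inspection: swapping the two coordinates in the definitions of $R_B(P_1,P_2)$ and $R_B(P_1) \times R_B(P_2)$ exchanges the roles of the two partitions, giving the symmetry immediately.

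For C3 I would use that $Q \leq P_2$ means $Q$ is coarser than $P_2$, so every block of $P_2$ is contained in a unique block of $Q$. Given a pair $(B_1, C) \in R_B(P_1) \times R_B(Q)$ compatible with $B \in P$, pick any $u \in C \cap B$, let $B_2 \in P_2$ be its $P_2$-block, and note that $B_2 \subseteq C$ while $B_2 \cap B \neq \emptyset$. Applying the hypothesis to $(B_1, B_2)$ gives $B_1 \cap B_2 \cap B \neq \emptyset$, and this set is contained in $B_1 \cap C \cap B$, yielding the required conclusion.

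The only step requiring any thought is C4, where the conditioning structure really matters: conditioning on $B \in P$ has to kill most blocks of $P_2 \vee P$. Writing an arbitrary block of $P_2 \vee P$ as $C = B_2 \cap B'$ with $B_2 \in P_2$ and $B' \in P$, the assumption $C \cap B \neq \emptyset$ forces $B' \cap B \neq \emptyset$, and then disjointness of distinct $P$-blocks forces $B' = B$. Hence $B_2 \cap B \neq \emptyset$, the hypothesis $P_1 \bot P_2 \vert P$ delivers $B_1 \cap B_2 \cap B \neq \emptyset$, and since $B' = B$ one has $B_1 \cap C \cap B = B_1 \cap B_2 \cap B \neq \emptyset$. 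I expect this to be the only part of the proof where one has to invoke the explicit form of the join of partitions rather than just the partition axioms.
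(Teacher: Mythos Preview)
Your proof is correct. The difference from the paper is one of formulation rather than substance: the paper works with the alternative element-level characterization of conditional independence recorded just before the theorem (namely, $P_1 \bot P_2 \vert P$ holds iff $u \equiv_P u'$ implies the existence of $v$ with $u \equiv_{P_1 \vee P} v$ and $u' \equiv_{P_2 \vee P} v$), whereas you work directly with the block-level identity $R_B(P_1,P_2) = R_B(P_1)\times R_B(P_2)$. In the element formulation C3 and C4 become one-line observations---for C3 one simply notes that $u' \equiv_{P_2 \vee P} v$ implies $u' \equiv_{Q \vee P} v$ when $Q \leq P_2$, and C4 is immediate since the witness $v$ already satisfies the required equivalences. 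Your block-level argument is a little longer because it unpacks these implications into intersections of blocks, but it has the virtue of staying with the primary definition and making the role of the join $P_2 \vee P$ in C4 completely explicit. Neither approach has a real advantage over the other; they are two sides of the same elementary verification.
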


\begin{proof}
C1 and C2 are obvious. To prove C3 assume $P_1 \bot P_2 \vert P$ and $Q \leq P$. Then $u \equiv_P u'$ implies the existence of an element $v$ such that $u \equiv_{P_1 \vee P} v$ and $u' \equiv_{P_2 \vee P} v$. But $Q \leq P_2$ means that $u' \equiv_{P_2 \vee P} v$ implies $u' \equiv_{Q \vee P} v$, and this means that $P_1 \bot Q \vert P$. Similarly, $u \equiv_P u'$ implies the existence of an element $v$ such that $u \equiv_{P_1 \vee P} v$ and $u' \equiv_{P_2 \vee P} v$, says also that $P_1 \bot P_2 \vee P \vert P$, hence C4.
\end{proof}

A three-place relation like $P_1 \bot P_2 \vert P$ satisfying C1 to C4 has been called a \textit{quasi-separoid} (q-separoid) in \cite{kohlas17}. It is a reduct of a separoid, a concept discussed in \cite{dawid01}. Conditional independence structures can be exploited for computational purposes within a set algebra, and as we shall see later also within an information algebra \cite{kohlas17}. The base for this is the next theorem. The issue of conditional independence will be further discussed later in Section \ref{sec:CondIndep}.

\begin{theorem} \label{th:LocCompBas}
Let $(\mathcal{S};\mathcal{E},\cap,U,\emptyset)$ be a set algebra.
\begin{enumerate}
\item If $P_1 \bot P_2 \vert P$ for $P_1,P_2,P \in \mathcal{D}$ and $\sigma_{P_1}(X) = X$, $\sigma_{P_2}(Y) = Y$ for $X,Y \in \mathcal{S}$, then
\begin{eqnarray} \label{eq:CombAxPart}
\sigma_P(X \cap Y) = \sigma_P(X) \cap \sigma_P(Y).
\end{eqnarray}
\item If $P_1 \bot P_2 \vert P$ for $P_1,P_2,P \in \mathcal{D}$ and $\sigma_{P_1}(X) = X$, then
\begin{eqnarray} \label{eq:ExtAxPart}
\sigma_{P_2}(X) = \sigma_{P_2}(\sigma_P(X)).
\end{eqnarray}
\end{enumerate}
\end{theorem}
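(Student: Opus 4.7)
The plan is to prove each equality by two inclusions, using the witness reformulation of conditional independence stated just before the theorem: $P_1 \bot P_2 \vert P$ means that whenever $u \equiv_P u'$ there exists $v \in U$ with $u \equiv_{P_1 \vee P} v$ and $u' \equiv_{P_2 \vee P} v$. In both parts, one inclusion is routine from monotonicity of $\sigma_P$ (item 3 of Lemma \ref{saturation operators}), while the other is exactly where the independence hypothesis and the saturation assumptions on $X$ (and $Y$) have to be combined; the axioms C1--C4 of the quasi-separoid play no further role, only the witness property itself.

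For part~1, the inclusion $\sigma_P(X \cap Y) \subseteq \sigma_P(X) \cap \sigma_P(Y)$ follows from $X \cap Y \subseteq X, Y$ and monotonicity. For the reverse inclusion I would take $u \in \sigma_P(X) \cap \sigma_P(Y)$, produce $x \in X$ and $y \in Y$ with $u \equiv_P x$ and $u \equiv_P y$, hence $x \equiv_P y$ by transitivity, and then apply the witness characterization to obtain $v$ with $x \equiv_{P_1 \vee P} v$ and $y \equiv_{P_2 \vee P} v$. Since $x \equiv_{P_1} v$ and $\sigma_{P_1}(X) = X$, we get $v \in X$; symmetrically $v \in Y$; and $u \equiv_P x \equiv_P v$ places $u$ in $\sigma_P(X \cap Y)$. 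For part~2, the easy inclusion $\sigma_{P_2}(X) \subseteq \sigma_{P_2}(\sigma_P(X))$ comes from $X \subseteq \sigma_P(X)$ and monotonicity. For the other direction, I would pick $u \in \sigma_{P_2}(\sigma_P(X))$, extract $w \in \sigma_P(X)$ with $u \equiv_{P_2} w$ and then $x \in X$ with $w \equiv_P x$; applying the witness property to the pair $x \equiv_P w$ yields $v$ with $x \equiv_{P_1 \vee P} v$ and $w \equiv_{P_2 \vee P} v$. Saturation of $X$ under $\sigma_{P_1}$ gives $v \in X$, and chaining $u \equiv_{P_2} w \equiv_{P_2} v$ yields $u \in \sigma_{P_2}(X)$.

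The main subtlety is an asymmetry in part~2: since $X$ is assumed saturated only under $P_1$, the witness $v$ must be $(P_1 \vee P)$-equivalent to the element of $X$ (namely $x$), not to the intermediate point $w$. One must therefore apply the independence property to the pair in the correct order (or else appeal to symmetry C2). Beyond this bookkeeping, everything else reduces to transitivity of the equivalence relations $\equiv_P$ and $\equiv_{P_2}$ and to the defining property $u \in \sigma_Q(Z) \iff \exists z \in Z\colon u \equiv_Q z$, so I do not expect any real obstacle once the witness picture is set up.
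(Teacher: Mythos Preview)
Your proposal is correct and follows essentially the same approach as the paper's proof. The only cosmetic difference is that the paper phrases the key step in block language (from $B_1 \cap B \neq \emptyset$ and $B_2 \cap B \neq \emptyset$ conclude $B_1 \cap B_2 \cap B \neq \emptyset$ and pick $v$ there), whereas you use the equivalent witness reformulation of $P_1 \bot P_2 \vert P$ directly; the underlying argument, including the asymmetry you flag in part~2, is identical.
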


\begin{proof}
1.) Saturation operators are monotone. Therefore, from $X \cap Y \subseteq X,Y$ it follows that $\sigma_P(X \cap Y) \subseteq \sigma_P(X) \cap \sigma_P(Y)$. Consider now an element $u \in \sigma_P(X) \cap \sigma_P(Y)$. Then, $u \in \sigma_P(X)$ implies that there is an element $u' \in X$ such that $u \equiv_P u'$, hence such that $u,u'$ are together in some block $B'$ of $P$. Also, since $\sigma_{P_1}(X) = X$, we have  $u' \in B_1 \subseteq X$ for some block $B_1$ of $P_1$. In the same way, we have $u \equiv_P u''$ for some element $u'' \in B_2 \cap B''$ for some block $B_2 \subseteq Y$ of $P_2$ and $B''$ of $P$. It follows that $B' = B'' = B$, hence $B_1 \cap B \not= \emptyset$ and $B_2 \cap B \not= \emptyset$. Then $P_1 \bot P_2 \vert P$ implies $B_1 \cap B_2 \cap B \not= \emptyset$ and we have $\emptyset \not= B_1 \cap B_2 \subseteq X \cap Y$. So there is an element $v \in B_1 \cap B_2 \cap B$ such that $v \equiv_P u$ and $v \in X \cap Y$, hence $u \in \sigma_P(X \cap Y)$.

2.) From $X \subseteq \sigma_P(X)$ it follows that $\sigma_{P_2}(X) \subseteq \sigma_{P_2}(\sigma_P(X))$. Consider an element $u \in \sigma_{P_2}(\sigma_P(X))$. Then there is an element $u' \in \sigma_P(X)$ such that $u \equiv_{P_2} u'$. Hence there is a block $B_2$ of $P_2$ containing $u$ and $u'$. Further there is a $u'' \in X$ such that $u' \equiv_P u''$ and there is a block $B$ of $P$ containing $u',u''$. Then we have $B_2 \cap B \not= \emptyset$. Further, since $\sigma_{P_1}(X) = X$, there is a block $B_1 \subseteq X$ of $P_1$ containing $u''$, so that $B_1 \cap B \not= \emptyset$. Then $P_1 \bot P_2 \vert P$ implies that $B_1 \cap B_2 \cap B \not= \emptyset$ and we have $\emptyset \not= B_1 \cap B_2 \subseteq X$. Select an element $v \in B_1 \cap B_2 \cap B$; then $v \equiv_{P_2} u$ and $v \in X$, so that we have $u \in \sigma_{P_2}(X)$.
\end{proof}

We call (\ref{eq:CombAxPart}) and (\ref{eq:ExtAxPart}) the \textit{combination property} and the \textit{extraction property} respectively of the set algebra. We shall see later that these properties induce similar properties in an information algebra.

To conclude this section, an important special class of set algebras will be introduced. Given two partitions $P_1$ and $P_2$ of some universe $U$ with associated saturation operators $\sigma_1$ and $\sigma_2$, define the map
\begin{eqnarray*}
\sigma(X) = \bigcup_{k \in \omega} \sigma_1 \circ \sigma_2 \circ \sigma_3 \circ \sigma_4 \circ \cdots \sigma_k(X),
\end{eqnarray*}
for subsets $X$ of $U$, where $\sigma_k = \sigma_1$, if $k$ is odd, and $\sigma_k = \sigma_2$, if $k$ is even. Clearly, $\sigma(X)$ is the smallest set containing $X$ and which is the union of $P_1$-blocks as well as $P_2$-blocks. It follows that $\sigma = \sigma_{P_1 \wedge P_2}$. If $\sigma_1 \circ \sigma_2 = \sigma_2 \circ \sigma_1$, then, since saturation operators are idempotent, we have $\sigma = \sigma_1 \circ \sigma_2$. In this case we say that partitions $P_1$ and $P_2$ \textit{commute}. 
Note that $P_1$ and $P_2$ commute iff for any block $C$ of  $P_1 \wedge P_2$ and $B_1,B_2 \subseteq C$, where $B_1$ and $B_2$ are blocks of $P_1$ and $P_2$ respectively, it follows that $B_1 \cap B_2 \not= \emptyset$.

Let $(\mathcal{D};\leq)$ be a sublattice of $(Part(U);\leq)$ of commuting partitions. This turns out to be the necessary and sufficient condition that $P_1 \bot P_2 \vert P$ if and only if $(P_1 \vee P) \wedge (P_2 \vee P) = P$.

\begin{theorem} \label{th:CommCondIndep}
Let $(\mathcal{D};\leq)$ be a sublattice of $(Part(U);\leq)$. Then $P_1 \bot P_2 \vert P \Leftrightarrow (P_1 \vee P) \wedge (P_2 \vee P) = P$ iff the partitions in $\mathcal{D}$ pairwise commute.
\end{theorem}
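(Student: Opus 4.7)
The outer biconditional asserts that, on the sublattice $(\mathcal{D};\leq)$, the inner equivalence $P_1 \bot P_2 \vert P \Leftrightarrow (P_1 \vee P) \wedge (P_2 \vee P) = P$ holds for every triple iff all pairs from $\mathcal{D}$ commute. My plan is to split the inner equivalence into its two implications: one turns out to hold unconditionally, and the other is responsible for the commutativity content. Throughout I abbreviate $Q_i := P_i \vee P$ and make repeated use of the block-level characterization of commutativity recalled just before the theorem.

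I would first dispose of the unconditional implication $P_1 \bot P_2 \vert P \Rightarrow Q_1 \wedge Q_2 = P$. Since $P \leq Q_i$ for $i=1,2$, the universal property of the meet gives $P \leq Q_1 \wedge Q_2$. For the reverse $Q_1 \wedge Q_2 \leq P$, I translate to equivalence relations: given $u \equiv_P u'$, the independence condition yields a witness $v$ with $u \equiv_{Q_1} v$ and $v \equiv_{Q_2} u'$; since $Q_1 \wedge Q_2 \leq Q_i$ forces $\equiv_{Q_i} \subseteq \equiv_{Q_1 \wedge Q_2}$, transitivity of this relation gives $u \equiv_{Q_1 \wedge Q_2} u'$, so $\equiv_P \subseteq \equiv_{Q_1 \wedge Q_2}$.

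For the half of the inner equivalence that needs commutativity, assume $\mathcal{D}$ is pairwise commuting and $Q_1 \wedge Q_2 = P$. Since $\mathcal{D}$ is a sublattice, both $Q_1$ and $Q_2$ lie in $\mathcal{D}$ and therefore commute. For any $u \equiv_P u'$ with common $P$-block $C$, the classes $[u]_{Q_1}$ and $[u']_{Q_2}$ both sit inside $C$ because $P \leq Q_i$ and $P = Q_1 \wedge Q_2$; the block criterion for commutativity then yields $[u]_{Q_1} \cap [u']_{Q_2} \neq \emptyset$, and any element of this intersection is a witness for $P_1 \bot P_2 \vert P$.

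For the converse outer direction, assume the inner equivalence holds on all of $\mathcal{D}$. Given $P_1, P_2 \in \mathcal{D}$, I set $P := P_1 \wedge P_2 \in \mathcal{D}$, so that $P_1 \vee P = P_1$ and $P_2 \vee P = P_2$; then $(P_1 \vee P) \wedge (P_2 \vee P) = P$ holds trivially and the hypothesis delivers $P_1 \bot P_2 \vert P_1 \wedge P_2$. To verify the block criterion, take any block $C$ of $P_1 \wedge P_2$ and blocks $B_1 \subseteq C$ of $P_1$, $B_2 \subseteq C$ of $P_2$; picking $u \in B_1$ and $u' \in B_2$ gives $u \equiv_{P_1 \wedge P_2} u'$, and the independence witness $v$ satisfies $v \in [u]_{P_1} \cap [u']_{P_2} = B_1 \cap B_2$, establishing commutativity of $P_1$ and $P_2$. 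The main obstacle I anticipate is purely bookkeeping: keeping the information-order convention straight (join $=$ finer, meet $=$ coarser, so $P \leq Q_i$ translates to $\equiv_{Q_i} \subseteq \equiv_P$) and checking that the sublattice hypothesis on $\mathcal{D}$ is invoked precisely where derived partitions such as $P_1 \vee P$ and $P_1 \wedge P_2$ must belong to $\mathcal{D}$.
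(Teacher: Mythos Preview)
Your proof is correct and follows essentially the same three-part structure as the paper: the unconditional forward implication, deriving commutativity from the inner equivalence via the specialization $P = P_1 \wedge P_2$, and deriving the reverse implication from pairwise commutativity via the block criterion. The one minor difference is in the unconditional direction $P_1 \bot P_2 \vert P \Rightarrow (P_1 \vee P)\wedge(P_2 \vee P) = P$: the paper invokes the q-separoid axioms C2--C4 (from Theorem~\ref{th:QSepOfPart}) to reduce to $Q \bot Q \vert P$ and thence to $Q \leq P$, whereas you give a direct element-chasing argument through the witness $v$, which is arguably cleaner and avoids the dependency on the earlier theorem. You are also slightly more explicit than the paper in noting that the sublattice hypothesis is what guarantees $Q_1, Q_2 \in \mathcal{D}$ so that commutativity applies to them (the paper writes ``since $P_1$ and $P_2$ commute'' where ``since $Q_1$ and $Q_2$ commute'' is what is actually used).
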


\begin{proof}
Assume first $P_1 \bot P_2 \vert P$. Define $Q = (P_1 \vee P) \wedge (P_2 \vee P)$. Now, by C3 we have $P_1 \vee P \bot P_2 \vee P \vert P$ and from C4 is follows that $Q \bot Q \vert P$, since $Q \leq P_1 \vee P, P_2 \vee P$. So, if $u \equiv_P u'$ there is an element $v$ such that $u \equiv_Q v$ and $u' \equiv_Q v$, hence $u \equiv_Q u'$. But this mean that $Q \leq P$. On the other hand we have $(P_1 \vee P) \wedge (P_2 \vee P) \geq P$, so that $Q = P$. Therefore we have always $P_1 \bot P_2 \vert P \Rightarrow (P_1 \vee P) \wedge (P_2 \vee P) = P$.

Assume then that $(P_1 \vee P) \wedge (P_2 \vee P) = P$ implies $P_1 \bot P_2 \vert P$, so that in particular $P_1 \bot P_2 \vert P_1 \wedge P_2$. Hence, if $C$, $B_1$ and $B_2$ are blocks of partitions $P_1 \wedge P_2$ and $P_1$, $P_2$ respectively, then $B_1 \cap C \not= \emptyset$ and $B_2 \cap C \not= \emptyset$ imply $B_1 \cap B_2 \cap C \not= \emptyset$. But since $P_1 \wedge P_2 \leq P1,P_2$ we have $B_1,B_2 \subseteq C$, then $B_1 \cap B_2 \not= \emptyset$ means that the partitions $P_1$ and $P_2$ commute. 

Conversely, assume that $P_1$ and $P_2$ commute and that $(P_1 \vee P) \wedge (P_2 \vee P) = P$. Consider blocks $B_1$, $B_2$ and $B$ of partitions $P_1$, $P_2$ and $P$ respectively, and, further, let $B'_1 = B_1 \cap B \not= \emptyset$ and $B'_2 = B_2 \cap B \not= \emptyset$ such that $B'_1$ and $B'_2$ are blocks of partitions $P_1 \vee P$ and $P_2 \vee P$, and both subsets of $B$. Then, since $P_1$ and $P_2$ commute, we have $B'_1 \cap B'_2 \not= \emptyset$ hence $B_1 \cap B_2 \cap B \not= \emptyset$, and so indeed $P_1 \bot P_2 \vert P$.
\end{proof}

Consider now a set algebra $(\mathcal{S};\mathcal{E},\cap,\emptyset,U)$, where $\mathcal{E}$ is the set of saturation operators from a sublattice $(\mathcal{D};\leq)$ of commuting partitions. Then $\mathcal{E}$ is closed under composition, since 
\begin{eqnarray*}
\sigma_{P_1} \circ \sigma_{P_2} = \sigma_{P_2} \circ \sigma_{P_1} = \sigma_{P_1 \wedge P_2}
\end{eqnarray*}
for any pair of partitions $P_1$ and $P_2$ from $\mathcal{D}$. So, $(\mathcal{E};\circ)$ is in this case an idempotent commutative semigroup under composition. Therefore, the set algebra $(\mathcal{S};\mathcal{E},\cap,\emptyset,U)$ is called \textit{commutative} in this case. This gives rise to an important special class of information algebras, as we shall see. 

The following is an important example of a a commutative set algebra.

\textit{Multivariate Algebras}

In many applications a set of variables is considered and the information one is interested in concerns the values of certain groups of variables, similar to ordinary relational algebra in database theory, see \cite{kohlas03} for more general relational information algebras. So, consider a countable family of variables $ X = \{X_i: i\in \mathbb{N}\}$,  and let $V_{i}$ denote the set of possible values of the variable $X_i$. For a subset $s\subseteq X$ of variables define
\begin{eqnarray}
V_s = \prod_{X_i \in s} V_i
\nonumber
\end{eqnarray}
to be the set of possible answers relative to $s$.
Let
\begin{eqnarray}
V_{\omega} = \prod_{i=1}^{\infty} V_{i}.
\nonumber
\end{eqnarray}
and let $\Psi = 2^{V_\omega}$, the power set of $V_{\omega}$. Note that the elements of $V_{\omega}$ are the sequences $t = (t_{1},t_{2},\ldots)$ with $t_{i} \in V_{i}$. An element $\phi \in \Psi$ may be interpreted as a piece of information, which states that a generic element $t\in V_{\omega} $ belongs to the set $\phi$. Within $\Psi$ we define combination by  set intersection, which represents aggregation of information:
\begin{eqnarray}
\phi \cdot \psi = \phi \cap \psi.
\nonumber
\end{eqnarray}
Equipped with this operation, $\Psi$ is an idempotent commutative semigroup with least element $V_{\omega}$ and greatest element $\emptyset$ under the associated information order (given by $\psi \leq \phi$ iff $\phi \subseteq \psi$). The smaller the subset representing a piece of information about elements of $V_{\omega}$, the more information it contains.

Let $s$ be any subset of $X$. Define, for any sequence $t$ in $V_{\omega}$, its {\em restriction} to $ s$, denoted by  $t \vert s$, as follows: If $s = \{X_{i_1},X_{i_2},\dots\}$, then  $t \vert s = (t_{i_1},t_{i_2},\dots)$. Also, define an equivalence relation $\equiv_s$ in $V_{\omega}$ by
\begin{eqnarray}
t \equiv_{s} t' \textrm{ iff}\ t \vert s = t' \vert s.
\nonumber
\end{eqnarray}
It is easy to see that the relational product $\equiv_s\ \bowtie\ \equiv_{s'}$ of two such equivalence relations is $ \equiv_{s\cap s'}$. It follows that any two of such equivalence relations commute, and thus so do their associated partitions $P_s$ of $V_{\omega}$, as well as the saturation operators $\sigma_s$ associated with $P_s$. Let $\mathcal{F}$ be the set of all partitons $P_s$ ($s\subseteq X$) and write $S_{\mathcal{F}}$ for the set of all saturation operators $\sigma_s$  associated with the partitions $P_s\in \mathcal{F}$.

It is immediate that $\sigma_s$ maps $\Psi$ into $\Psi$ and that $\sigma_s(\emptyset) = \emptyset$ for all $s\subseteq X$. So $(\Psi;S_{\mathcal{F}},\cdot,V_{\omega},\emptyset,\circ)$ is a commutative set algebra. It is called a \textit{multivariate} information algebra; also,  the sets $\sigma_{s}(\phi)$ are called \textit{cylindric} over $s$.

We are going to show that any abstract information algebra is in some sense part of or contained in a set algebra commutative or not, or, more precisely,  \textit{embedded} into a set algebra. But before we present two further examples of information algebras.

\section{Some Examples of Information Algebras} \label{subsc:ExplInfAlg}

%
%

\subsection{Algebra of Strings}

Consider a finite alphabet $\Sigma$, the set $\Sigma^{*}$ of finite strings over $\Sigma$, including the empty string $\epsilon$, and the set $\Sigma^{\omega}$ of infinite strings over $\Sigma$. Let $\Sigma^{**} = \Sigma^{*} \cup \Sigma^{\omega} \cup \{0\}$, where $0$ is a symbol not contained in $\Sigma$. For two strings $r,s \in \Sigma^{**}$, define $r \leq s$, if $r$ is a prefix of $s$ or if $s = 0$. The empty string is a prefix of any of any string. Define a combination operation in $\Sigma^{**}$ as follows:

\begin{eqnarray}
r \cdot s = \left\{ \begin{array}{ll} s, & \textrm{if}\ r \leq s, \\ r & \textrm{if}\ s \leq r, \\ 0, & \textrm{otherwise}. \end{array} \right.
\nonumber
\end{eqnarray}

Clearly, $(\Sigma^{**},\cdot)$ is a commutative idempotent semigroup. The empty string $\epsilon$ is the unit element, and the adjoined element $0$ is the null element of combination. For extraction, we define operators $\epsilon_{n}$ for any $n \in \mathbb{N}$ and also for $n = \infty$. Let $\epsilon_{n}(s)$ be the prefix of length $n$ of string $s$, if the length of $s$ is at least $n$, and let $\epsilon_{n}(s) = s$ otherwise. In particular, define $\epsilon_{\infty}(s) = s$ for any string $s$ and $\epsilon_{n}(0) = 0$ for any $n$. It is easy to verify that any $\epsilon_{n}$ maps $\Sigma^{**}$ into itself, and that it satisfies conditions E1 to E5 for an extraction operator. So, the so-called \textit{string algebra} $(\Sigma^{**};\mathcal{E},\cdot,\epsilon,0)$ is an instance of an  information algebra. Moreover, in this example, as in many others, $\mathcal{E} = \{\epsilon_{n}: n \in \mathbb{N} \cup \{\infty\}\}$ is a commutative and idempotent semigroup under composition of maps. Algebras with this particular property will be called commutative, just like commutative set algebras (see Section \ref{subsec:CommAlg}).

\subsection{Lattice-Valued Algebras} \label{subsec:LattValAlg}

Consider a family of partitions $P_x$, $x \in D$, of an universe $U$, such that all the blocks of each partition $P_x$ have only a finite number of elements and let $\Lambda$ be a bounded distributive lattice with least element $\bot$ and greatest element $\top$. Consider a set $\Psi$ of maps $\psi : U \rightarrow \Lambda$. We assume that for each map $\psi \inÊ\Psi$, there is a partition $P_x$ such that $\psi(u) = \psi(v)$ whenever $u \equiv_x v$, that is, if $u$ und $v$ belong to a same block of $P_x$. Further, we assume that the family of partitions $P_x$ forms a join-semilattice for $x \in D$. Within $\Psi$,  define an operation of \textit{combination} $\phi \cdot \psi$ by
\begin{eqnarray}
(\phi \cdot \psi)(u) = \phi(u) \wedge \psi(u) \textrm{ for all}\ u \in U.
\nonumber
\end{eqnarray}
Clearly, under this combination operation, $\Psi$ becomes a semigroup with unit element $1$ defined by $1(u) = \top$ for all $u \in U$ and null element $0$ defined by $0(u) = \bot$ for all $u$. Note that the information order $\phi \leq \psi$ in $\Psi$ is given by  $\phi(u) \geq \psi(u)$ for all $u \in U$, hence the information order is related to the opposite order in the lattice $(\Lambda;\leq)$. In the information algebra combination is the join in information order. There exists also the infimum for any two elements $\phi$ and $\psi$ of $\Psi$, and in fact
\begin{eqnarray*}
(\phi \wedge \psi)(u) = \phi(u) \vee \psi(u).
\end{eqnarray*}
So, $(\Psi,\leq)$ is a bounded lattice, and in fact a distributive one, since the distributive laws are inherited form the distributive lattice $\Lambda$. 

For any partition $P_x$ we introduce an operator $\epsilon_x$ mapping $\Psi$ into $\Psi$, defined by
\begin{eqnarray}
\epsilon_x(\psi)(u) = \vee\{\psi(v): v \equiv_x u\}.
\nonumber
\end{eqnarray}
Here we use the finiteness of blocks of the partitions $P_x$ which guarantees that the supremum always exists. Using the distributivity of the lattice $\Lambda$ it is easy to verify that all of these operators $\epsilon_x$ are existential quantifiers, that is satisfy conditions E1 to E3 (see Section \ref{sec:FomFreeAlg}).

Further, E4 holds by assumption. And if $P_x \leq P_y$, then $u \equiv_y v$ implies $u \equiv_x v$ so that if $\epsilon_x(\psi) = \psi$, we have
\begin{eqnarray*}
\epsilon_y(\psi)(u) = \vee\{\psi(v):v \equiv_y u\} = \vee\{\psi(v):v \equiv_x u\} = \psi(u).
\end{eqnarray*}
This shows that condition E5 is satisfied too. Thus $(\Psi;\mathcal{E},\cdot,1,0)$, where $\mathcal{E}$ is the set of all operators $\epsilon_x$ for all partitions $P_x$ in the join-semilattice, is an information algebra. Moreover, by the associative law it follows that for any $u$ in $U$
\begin{eqnarray*}
\epsilon_x(\phi \wedge \psi)(u) = \epsilon_x(\psi)(u) \wedge \epsilon_x(\phi)(u),
\end{eqnarray*}
hence we have $\epsilon_x(\phi \wedge \psi) = \epsilon_x(\phi) \wedge \epsilon_x(\psi)$. This means that the information algebra is a in fact a distributive lattice information algebra as considered in Section \ref{subsec:DistLattInfAlg} below. Moreover, it is a particular case of a semiring induced valuation algebra, see \cite{kohlaswilson06}

There are many more instances  of information algebras related to algebraic logic, graph theory,  linear algebra and convex sets, and other topics as well. We refer to \cite{kohlas03,poulykohlas11,kohlasschmid14} for further examples.


\section{Information and Set Algebras} \label{sec:InfAndSetAlg}

\subsection{Order-Generating Sets} \label{subsec:OrdGenSets}

We are going to associate  abstract information algebras $(\Psi;\mathcal{E},\cdot,1,0)$ where $\mathcal{E} = \{\epsilon_x:x \in D)$ and $(D;\leq)$ a join semilattice with some appropriate set algebras. The key notion is the one of an order-generating set.

\begin{definition}
\textit{Order-Generating Set:} A subset $X$ of $\Psi$ not containing the null element $0$ in an information algebra $(\Psi;\mathcal{E},\cdot,1,0)$ is called oder-generating, iff for all $\psi \in \Psi$, $\psi \not= 0$,
\begin{eqnarray*}
\psi = \inf\{\uparrow\!\psi \cap X\}.
\end{eqnarray*}
\end{definition}

Here $\uparrow\!\psi$ is the set of all $\phi \in \Psi$ which are greater (more informative) than $\psi$. Note that $\Psi/\{0\}$ is an order generating set. Further, and more interesting examples of order-generating sets will be given in Section \ref{sec:Exmpls} below. Order-generating sets are also sometimes called meet-dense sets \cite{daveypriestley97}. We give here a well-known result on a characterization of order-generatings sets \cite{gierz03}.

\begin{theorem} \label{th:CharOrdGen}
A subset $X$ of the join-semilattice $(\Psi;\leq)$ is order generating iff $\phi \not\leq \psi$ implies that there is a $\chi \in X$ such that $\psi \leq \chi$ and $\phi \not\leq \chi$.
\end{theorem}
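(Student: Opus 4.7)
The claim is a standard characterization of meet-dense subsets in a join-semilattice with a top element (here $0$), adapted to the information order. My plan is a routine equivalence argument.

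For the forward direction ($\Rightarrow$), suppose $X$ is order-generating and $\phi\not\leq\psi$. Since $0$ is the top of $(\Psi,\leq)$, $\phi\not\leq\psi$ forces $\psi\neq 0$, so the defining equation $\psi=\inf(\uparrow\!\psi\cap X)$ applies. I would argue by contradiction: assume that every $\chi\in X$ with $\psi\leq\chi$ already satisfies $\phi\leq\chi$. Then $\phi$ is a lower bound of $\uparrow\!\psi\cap X$. Since $\psi$ is the \emph{greatest} lower bound of this set, we obtain $\phi\leq\psi$, contradicting the assumption. Hence the required separating $\chi\in X$ exists.

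For the reverse direction ($\Leftarrow$), assume the separation property and fix $\psi\neq 0$; I need to show $\psi=\inf(\uparrow\!\psi\cap X)$. First, the set $\uparrow\!\psi\cap X$ is nonempty: applying the separation property with $\phi=0$ (noting $0\not\leq\psi$ because $0$ is the top and $\psi\neq 0$) yields some $\chi\in X$ with $\psi\leq\chi$. Next, $\psi$ is clearly a lower bound of $\uparrow\!\psi\cap X$ by construction. To see it is the greatest lower bound, take any lower bound $\phi$, i.e.\ $\phi\leq\chi$ for every $\chi\in X$ with $\psi\leq\chi$. If $\phi\not\leq\psi$, the hypothesis supplies $\chi\in X$ with $\psi\leq\chi$ and $\phi\not\leq\chi$, contradicting the fact that $\phi$ is a lower bound. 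Thus $\phi\leq\psi$, so $\psi$ is the infimum, as required.

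Both directions are really just unwinding the definitions of infimum and lower bound, together with the observation that the top element $0$ controls the edge case $\uparrow\!\psi\cap X=\emptyset$. I do not anticipate any serious obstacle; the only subtlety worth stressing in the write-up is the role of the top element (which in this paper is denoted $0$ because of the information order convention) in guaranteeing that $\uparrow\!\psi\cap X$ is nonempty for $\psi\neq 0$—without this, the convention $\inf\emptyset=0$ would spuriously force $\psi=0$.
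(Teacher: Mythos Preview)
Your proof is correct and follows essentially the same approach as the paper's: both directions proceed by the obvious contradiction argument using the definition of infimum. Your write-up is in fact slightly more careful than the paper's, since you explicitly note that $\phi\not\leq\psi$ forces $\psi\neq 0$ (so the order-generating hypothesis applies) and you verify nonemptiness of $\uparrow\!\psi\cap X$ in the converse direction; the paper leaves both points implicit.
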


\begin{proof}
If $X$ is order generating then $\psi = \inf P$ for some subset $P$ of $X$. But then $\phi \not\leq \psi$ implies that there is a $\chi \in P$ such that $\phi \not\leq \chi$. Conversely $\psi$ is a lower bound of $\uparrow\!\psi \cap X$. Assume $\phi$ to be another lower bound of this set. We claim that $\phi \leq \psi$. Assume on the contrary that $\phi \not\leq \psi$. But then there is by assumption a $\chi \in \uparrow\! \psi \cap X$ so that $\psi \leq \chi$ but $\phi \not\leq \chi$ which contradicts the assumption that $\phi$ is a lower bound of $\uparrow\!\psi \cap X$. Therefore $\phi \leq \psi$ and $\psi = \inf \uparrow\!\psi \cap X$.
\end{proof}

We consider the universe $X$ and define a mapping $f$ of the information algebra $\Psi$ into the power set of $2^X$ by
\begin{eqnarray*}
\psi \mapsto f(\psi) = \uparrow\!\psi \cap X.
\end{eqnarray*}

Now, obviously, $f$ is a join-homomorphism of $(\Psi;\leq)$ into $(2^X;\subseteq)^\vartheta$. So combination (or join) in $\Psi$ maps to intersection in $2^X$, the null element to the emptyset and the unit to $X$.
\begin{eqnarray*}
f(\phi \cdot \psi) &=& \uparrow\!(\phi \cdot \psi) \cap X = (\uparrow\!\phi \cap X) \cap ( \uparrow\!\psi \cap X) = f(\phi) \cap f(\psi), \\
f(1) &=& X, \\
f(0) &=& \emptyset.
\end{eqnarray*}
Furthermore, $f$ is injective.

Next, we associate the extraction operators $\epsilon_x$ in $\mathcal{E}$ with some saturation operators of partitions in $X$. For this purpose consider in $X$ the equivalence relations
\begin{eqnarray*} 
\alpha \equiv_x \beta \textrm{ iff}\ \epsilon_x(\alpha) = \epsilon_x(\beta), \textrm{ where}\ \alpha,\beta \in X.
\end{eqnarray*} 
The blocks of the associated partitions $P_x$ of $\Psi$ are determined by $\{\epsilon_x(\beta):\beta \in X\} = \epsilon_x(X)$. We now make for the rest of the paper the additional assumption that
\begin{eqnarray} \label{eq:Infomorph}
x \not= y \Rightarrow \epsilon_x(\Psi) \not= \epsilon_y(\Psi).
\end{eqnarray}
Informally, different domains represent different answers. Let $g$ be the map from domains $D$ to partitions of $\Psi$ defined by
\begin{eqnarray*} 
g(x) = P_x, \textrm{ where}\ P_x \textrm{ is the partition of $X$ associated with}\ \equiv_x.
\end{eqnarray*}
The image $g(D)$ of $D$ is the family of partitions $\mathcal{P}(D) = \{P_x:x \in D\}$ and by the assumption above $g$ is injective and also surjective on $\mathcal{P}(D)$. It is furthermore a join-hommorphism,
\begin{eqnarray*}
x \leq y \Rightarrow P_x \leq P_y, \quad g(x \vee y) = P_{x \vee y} = P_x \vee P_y.
\end{eqnarray*}
So, $(D;\leq)$ and $(\mathcal{P}(D);\leq)$ are isomorphic join-semilattices. This induces also a mapping $g$ from $\mathcal{E}$ into the set of saturation operators $\sigma_{P_x}$ associated with the partitions $P_x$ in $\mathcal{P}(D)$,
\begin{eqnarray*}
g(\epsilon_x) = \sigma_{g(x)} = \sigma_{P_x}.
\end{eqnarray*}

For the pair of maps $(f,g)$ to be a homomorphism between the information algebra $(\Psi;\mathcal{E},\cdot,1,0)$ and the set algebra $(2^X;g(\mathcal{E}),\cap,\Psi,\emptyset)$, we need in addition that 
\begin{eqnarray*}
f(\epsilon_x(\psi)) = \sigma_{g(x)}(f(\psi)).
\end{eqnarray*}
For this condition to hold, we require an additional condition for the order-generating set, as given in the following definition:

\begin{definition}
\textit{Strongly Order-Generating Sets:} A subset $X$ not containing the null element $0$ of $\Psi$ in an information algebra $(\Psi;\mathcal{E},\cdot,1,0)$ is called strongly oder-generating, if it is order-generating, and whenever $\epsilon_x(\alpha) \geq \epsilon_x(\psi)$, for $\alpha \in X$, there is a $\gamma \in X$ such that $\alpha \equiv_x \gamma$ and $\gamma \geq \psi$.
\end{definition}

Now we have the following general representation theorem for information algebras.

\begin{theorem} \label{th:GenRep_1}
Let $(\Psi;\mathcal{E},\cdot,1,0)$ be an information algebra and $X \subseteq \Psi$ a strongly order-generating set. Then the pair of maps $(f,g)$ as defined above define an embedding of the information algebra $(\Psi;\mathcal{E},\cdot,1,0)$ into the set algebra $(2^X;g(\mathcal{E}),\cap,\Psi,\emptyset)$.
\end{theorem}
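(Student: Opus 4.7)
The bulk of the verification has essentially already been laid out in the discussion leading up to the theorem: $f$ sends combination to intersection, sends $1$ to $X$ and $0$ to $\emptyset$, and is injective (if $f(\phi)=f(\psi)$ then $\phi=\inf(\uparrow\!\phi\cap X)=\inf(\uparrow\!\psi\cap X)=\psi$ by order-generation); and $g$ is a join-homomorphism from $(D;\leq)$ onto $(\mathcal{P}(D);\leq)$ which, thanks to the standing assumption (\ref{eq:Infomorph}) that distinct $x$ give distinct $\epsilon_x(\Psi)$, is a bijection onto its image. So the only genuine work is to establish the extraction-compatibility identity
\begin{eqnarray*}
f(\epsilon_x(\psi)) \;=\; \sigma_{P_x}(f(\psi))
\end{eqnarray*}
for all $\psi \in \Psi$ and all $x \in D$, and this is where the strongly order-generating hypothesis is meant to be used.

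I would prove this identity by unfolding both sides as subsets of $X$ and showing mutual inclusion. For $\sigma_{P_x}(f(\psi)) \subseteq f(\epsilon_x(\psi))$, suppose $\alpha \in X$ and $\alpha \equiv_x \beta$ for some $\beta \in X$ with $\psi \leq \beta$. Monotonicity of $\epsilon_x$ (Lemma \ref{le:SuppProp}(2)) gives $\epsilon_x(\psi) \leq \epsilon_x(\beta) = \epsilon_x(\alpha)$, and E2 gives $\epsilon_x(\alpha) \leq \alpha$; combining these yields $\epsilon_x(\psi) \leq \alpha$, i.e., $\alpha \in f(\epsilon_x(\psi))$. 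This direction uses only the standard properties of extraction and does not need the strongly order-generating hypothesis.

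For the converse $f(\epsilon_x(\psi)) \subseteq \sigma_{P_x}(f(\psi))$, suppose $\alpha \in X$ with $\epsilon_x(\psi) \leq \alpha$. Applying $\epsilon_x$ and using idempotence (Lemma \ref{le:SuppProp}(3)) gives $\epsilon_x(\psi) \leq \epsilon_x(\alpha)$. Now I invoke strong order-generation: there exists $\gamma \in X$ with $\gamma \equiv_x \alpha$ and $\gamma \geq \psi$. Then $\gamma \in f(\psi)$ and $\alpha \equiv_x \gamma$, so $\alpha \in \sigma_{P_x}(f(\psi))$ by the definition of the saturation operator attached to $P_x = g(x)$.

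The main obstacle is precisely this second inclusion. Order-generation alone is not enough: from $\epsilon_x(\psi) \leq \alpha$ one gets an inequality of $\epsilon_x$-images but no direct way to exhibit a witness in $X$ lying above $\psi$ inside the $\equiv_x$-class of $\alpha$. The definition of strongly order-generating is engineered exactly to supply such a witness $\gamma$, which is what converts an inequality on extractions back into membership in a saturated set. Once this identity is in hand, the remaining clauses of the homomorphism definition are immediate from the preparatory calculations, and since $f$ is injective and $g$ is a bijection onto $g(D)$, $(f,g)$ is an embedding in the sense of the earlier definition.
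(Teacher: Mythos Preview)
Your proof is correct and follows essentially the same route as the paper: both reduce to verifying $\uparrow\!\epsilon_x(\psi)\cap X=\sigma_{P_x}(\uparrow\!\psi\cap X)$ by mutual inclusion, using monotonicity and E2 for one direction and the strongly order-generating hypothesis for the other. You are in fact slightly more careful than the paper in the second inclusion, explicitly passing from $\alpha\geq\epsilon_x(\psi)$ to $\epsilon_x(\alpha)\geq\epsilon_x(\psi)$ before invoking the definition, which is exactly the form the hypothesis requires.
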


\begin{proof}
It remains only to show that
\begin{eqnarray*}
\uparrow\!\epsilon_x(\psi) \cap X = \sigma_{P_x}(\uparrow\!\psi \cap X).
\end{eqnarray*}
Consider first  $\alpha \in \sigma_{P_x}(\uparrow\!\psi \cap X)$. Then there is an element $\beta$ in $X$ such that $\beta \equiv_x \alpha$ and $\beta \geq \psi$. But then $\alpha \geq \epsilon_x(\alpha) = \epsilon_x(\beta) \geq \epsilon_x(\psi)$, hence $\alpha \in \uparrow\!\epsilon_x(\psi) \cap X$.

Conversely, let $\alpha \in\ \uparrow\!\epsilon_x(\psi) \cap X$ such that $\alpha \geq \epsilon_x(\psi)$. Then, because $X$ is strongly order-generating, there is an element $\beta \in X$ such that $\alpha \equiv_x \beta$ and $\beta \geq \psi$. So, $\alpha \in \sigma_{P_x}(\uparrow\!\psi \cap X)$. This proves then that $\uparrow\!\epsilon_x(\psi) = \sigma_{P_x}(\uparrow\!\psi)$.
\end{proof}

This is a first general representation theorem, whose application to concrete situations will be presented in Section \ref{sec:Exmpls}. In the next section a generalization of this representation theorem will be discussed.

\subsection{Locally Order-Generating Sets} \label{subsec:LocOrdGenSet}

Let $X$ be an order-generating set of an information algebra $(\Psi;\mathcal{E},\cdot,1,0)$. Then, we claim that $\epsilon_x(X)$ is order-generating in $\epsilon_x(\Psi)$. In fact, consider
\begin{eqnarray*}
\inf\{\alpha:\alpha \in \epsilon_x(X),\alpha \geq \epsilon_x(\psi)\} \geq \epsilon_x(\psi).
\end{eqnarray*}
But, if $\alpha \in X$ such that $\alpha \geq \epsilon_x(\psi)$, then $\epsilon_x(\alpha) \in \epsilon_x(X)$ and $\epsilon_x(\alpha) \geq \epsilon_x(\psi)$. Therefore,
\begin{eqnarray*}
\lefteqn{\epsilon_x(\psi) = \inf\{\alpha \in X:\alpha \geq \epsilon_x(\psi)\} \geq \inf\{\epsilon_x(\alpha),\alpha \in X,\alpha \geq \epsilon_x(\psi)\}} \\
&&= \inf \{\chi:\chi \in \epsilon_x(\Psi),\chi \geq \epsilon_x(\psi)\}.
\end{eqnarray*}
Therefore, we have
\begin{eqnarray*}
\inf\{\chi:\chi \in \epsilon_x(X),\chi \geq \epsilon_x(\psi)\} = \epsilon_x(\psi)
\end{eqnarray*}
which shows that the set $\epsilon_x(X)$ is order-generating in $\epsilon_x(\Psi)$,
\begin{eqnarray*}
\epsilon_x(\psi) = \inf\{\uparrow\!\epsilon_x(\psi) \cap \epsilon_x(X)\}.
\end{eqnarray*}
We say that the sets $\epsilon_x(X)$ are \textit{locally order-generating}. 

Let $T = \bigcup_{x \in D} \epsilon_x(X)$ and define for $t \in T$ a labeling function $d(t)$ by $d(t) = x$ if $t \in \epsilon_x(X)$ and define further a partial function $\pi_x(t)$ for $x \leq d(t)$ by $\pi_x(t) = \epsilon_x(t)$. Then, if $X$ is strongly order-generating, we have the following properties:
\begin{enumerate}
\item If $x \leq d(t)$, then $d(\pi_x(t)) = x$,
\item if $x \leq y \leq d(t)$, then $\pi_x(\pi_y(t)) = \pi_x(t)$,
\item if $d(t) = x$, then $\pi_x(t) = t$,
\item if $d(t) = x$, $x \leq y$, then there is a $s \in T$ such that $d(s) = y$ and $\pi_x(s) = t$,
\item if $x \leq y$, $t \in T$ with $d(t) = x$ and $\psi \in \Psi$ with support $y$ such that $t \geq \epsilon_x(\psi)$, then there is an element $s \in T$ with domain $y$ and such that $\pi_x(s) = t$ and $s \geq \psi$.
\end{enumerate}
Such a system is called a \textit{tuple system}; a particular variant of such a system has been introduced in \cite{kohlas03}.

We may now also consider a family $X_x \subseteq \epsilon_x(\Psi)$ of locally order-generating sets for $x \in D$, so that for all $\psi \in \Psi$
\begin{eqnarray*}
\epsilon_x(\psi) = \inf\{\uparrow\!\epsilon_x(\psi) \cap X_x)\}.
\end{eqnarray*}
We require $T = \bigvee_{x \in D} X_x$ to be a tuple system, with labeling and projection functions as defined above. As we have seen, a strongly order-generating set $X$ in $\Psi$ induces such a tuple system, the converse however does not hold; there is, in general, no order-generating subset $X$ in $\Psi$ inducing such a family of locally order-generating systems.

We consider now, relative to a tuple system $X_x$ for $x \in D$, maps
\begin{eqnarray*}
a : D \rightarrow \bigcup_{x \in D} X_x, \textrm{ such that}\ a_x \in X_x,
\end{eqnarray*}
and we require for these maps that the additional consistency condition that $x \leq y$ implies $a_x = \pi_x(a_y)$. This implies also that any system of components $a_{x_1},\ldots a_{x_n}$ of a consistent $a$ are compatible,
\begin{eqnarray*}
a_{x_1} \cdot \ldots \cdot a_{x_n} = \epsilon_{x_1}(a_{x_1 \vee \ldots x_n}) \cdot \ldots \cdot \epsilon_{x_n}(a_{x_1 \vee \ldots x_n}) \leq a_{x_1 \vee \ldots x_n} \not= 0.
\end{eqnarray*}
Let $U$ be the set of all consistent maps $a$. We define for any $\psi \in \Psi$ the sets $X_x(\psi) = \uparrow\!\epsilon_x(\psi) \cap X_x$, and 
\begin{eqnarray*}
\mathbf{X}(\psi) = \{a \in U:a_x \in X_x(\psi) \textrm{ for all}\ x \in D\}.
\end{eqnarray*}
Note that if $a_x \in X_x(\psi)$ for some $x \in D$, then $a_y \in X_y(\psi)$ for any other $y \in D$. In fact, we have $\epsilon_x(\psi) \leq a_x = \epsilon_x(a_{x \vee y})$ and $\epsilon_y(\psi) \leq \epsilon_y(\epsilon_x(a_{x \vee y})) \leq \epsilon_y(a_{x \vee y}) = a_y$.

As with order-generating sets, we define a map $f : \Psi \rightarrow U$ by
\begin{eqnarray*}
\psi \mapsto f(\psi) = \mathbf{X}(\psi).
\end{eqnarray*}
This is again a join-homomorphism for $(\Psi;\leq)$ into $(2^U;\subseteq)^\vartheta$:

\begin{lemma} \label{le:LocOrdSetHomom}
For the map $f$ as defined above we have
\begin{enumerate}
\item $f(\phi \cdot \psi) = \mathbf{X}(\phi) \cap \mathbf{X}(\psi)$,
\item $f(0) = \emptyset$,
\item $f(1) = U$.
\end{enumerate}
\end{lemma}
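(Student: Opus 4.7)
\medskip

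\textbf{Proof plan.} The plan is to dispatch items 2 and 3 by unwinding the definitions, then handle item 1 by proving the two inclusions separately. The forward inclusion $\mathbf{X}(\phi\cdot\psi)\subseteq\mathbf{X}(\phi)\cap\mathbf{X}(\psi)$ follows just from monotonicity of the $\epsilon_x$; the reverse inclusion is the main obstacle, and requires exploiting the consistency condition on the maps $a\in U$ together with the support axiom E4 in order to compare the componentwise inequalities at a common domain where $\phi$ and $\psi$ are both supported.

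For item 2, we compute $X_x(0)=\uparrow\!\epsilon_x(0)\cap X_x=\uparrow\!0\cap X_x$; since $0$ is the largest element of the information order, $\uparrow\!0=\{0\}$, and by the assumption that the locally order-generating sets exclude $0$ this intersection is empty. Hence no $a\in U$ can satisfy $a_x\in X_x(0)$, giving $\mathbf{X}(0)=\emptyset$. For item 3, using $\epsilon_x(1)=1$ (Lemma \ref{le:SuppProp}) and the fact that $1$ is the smallest element, $X_x(1)=\uparrow\!1\cap X_x=X_x$, so the condition defining $\mathbf{X}(1)$ reduces to the defining condition of $U$ and thus $\mathbf{X}(1)=U$.

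For item 1, the inclusion $\mathbf{X}(\phi\cdot\psi)\subseteq\mathbf{X}(\phi)\cap\mathbf{X}(\psi)$ is immediate: since $\phi,\psi\leq\phi\cdot\psi$, monotonicity of $\epsilon_x$ gives $\epsilon_x(\phi),\epsilon_x(\psi)\leq\epsilon_x(\phi\cdot\psi)$, so any $a$ with $a_x\geq\epsilon_x(\phi\cdot\psi)$ for all $x$ also satisfies $a_x\geq\epsilon_x(\phi)$ and $a_x\geq\epsilon_x(\psi)$ for all $x$.

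For the reverse inclusion, fix $a\in\mathbf{X}(\phi)\cap\mathbf{X}(\psi)$ and $x\in D$. Invoking E4, choose supports $y_1$ of $\phi$ and $y_2$ of $\psi$, and set $w=x\vee y_1\vee y_2$. By E5, $w$ is a support of both $\phi$ and $\psi$, so $\epsilon_w(\phi)=\phi$ and $\epsilon_w(\psi)=\psi$. From $a\in\mathbf{X}(\phi)\cap\mathbf{X}(\psi)$ we then get $a_w\geq\phi$ and $a_w\geq\psi$, hence $a_w\geq\phi\cdot\psi$, and in particular $\phi\cdot\psi\neq 0$ (since $a_w\in X_w$ so $a_w\neq 0$); the degenerate case $\phi\cdot\psi=0$ is simultaneously handled, as then the intersection is empty and matches $\mathbf{X}(0)=\emptyset$ computed above. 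Since $x\leq w$, the consistency condition on $a$ gives $a_x=\pi_x(a_w)=\epsilon_x(a_w)$, and then by monotonicity $a_x=\epsilon_x(a_w)\geq\epsilon_x(\phi\cdot\psi)$. As $x$ was arbitrary, $a\in\mathbf{X}(\phi\cdot\psi)$, completing the proof.
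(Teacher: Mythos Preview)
Your proof is correct and follows essentially the same route as the paper. The only cosmetic difference is in the reverse inclusion of item~1: the paper works at a single common support $x\vee y$ (with $x,y$ supports of $\phi,\psi$) and then invokes the remark preceding the lemma that membership of $a$ in $\mathbf{X}(\psi)$ is detected at any one domain, whereas you fix an arbitrary $x$ up front and pass through $w=x\vee y_1\vee y_2$ to come back down via the consistency condition $a_x=\epsilon_x(a_w)$; both amount to the same idea.
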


\begin{proof}
The last two statements are obvious. If $a \in \mathbf{X}(\phi \cdot \psi)$, then for any $x$ in $D$, $a_x \in At_x(\phi \cdot \psi)$, hence 
\begin{eqnarray*}
a_x \geq \epsilon_x(\phi \cdot \psi) \geq \epsilon_x(\phi) \cdot \epsilon_x(\psi).
\end{eqnarray*}
So, we see that $a_x \in X_x(\phi)$ and $a_x \in X_x(\psi)$, hence $a \in \mathbf{X}(\phi) \cap \mathbf{X}(\psi)$. Conversely if $a \in \mathbf{X}(\phi) \cap \mathbf{X}(\psi)$, then we have $a_x \geq \phi$ and $a_y \geq \psi$, if $x$ and $y$ are supports of $\phi$ and $\psi$ respectively. Then we have also $a_{x \vee y} \geq \epsilon_{x \vee y}(\phi)$ and $a_{x \vee y} \geq \epsilon_{x \vee y}(\psi)$ and therefore
\begin{eqnarray*}
a_{x \vee y} \geq \epsilon_{x \vee y}(\phi) \cdot \epsilon_{x \vee y}(\psi) = \phi \cdot \psi,
\end{eqnarray*}
hence $a \in \mathbf{X}(\phi \cdot \psi)$, which concludes the proof.
\end{proof}

Next we define in $U$ equivalence relations $a \equiv_x b$ iff $a_x = b_x$ and denote the corresponding partitions in $U$ by $P_x$, which have the saturation operators
\begin{eqnarray*}
\sigma_{P_x}(S) = \{a \in U:\exists b \in U \textrm{ so that}\ b \in S, a \equiv_x b\}.
\end{eqnarray*}
The map $g : D \rightarrow Part(U)$ defined by $g(x) = P_x$ is injective, if we again assume (\ref{eq:Infomorph}).

As before, it remains to show that 
\begin{eqnarray*}
f(\epsilon_x(\psi)) = \sigma_{P_x}(f(\psi)),
\end{eqnarray*}
if we want to show that the information algebra $(\Psi;\mathcal{E}),\cdot,1,0)$ is embedded in the set algebra $(2^U;g(\mathcal{E}),\cap,U,\emptyset)$. Assume first, that
$a \in \mathbf{X}(\epsilon_x(\psi))$ so that $a_x \geq \epsilon_x(\psi)$. Let $\psi$ have support $y$. Then by property 5 of a tuple system there is a tuple $s$ with domain $x \vee y$ such that $\pi_x(s) = t$ and $s \geq \psi$. Let $b \in U$ such that $b_x = s$, then $a \equiv_x b$ and we see that $a \in \sigma_{P_x}(\mathbf{X}(\psi))$. And if, conversely, $a \in \sigma_{P_x}(\mathbf{X}(\psi))$, then there is a $b \in \mathbf{X}(\psi)$ such that $a_x = b_x$. If $y$ is a support of $\psi$, then $x \vee y$ is also a support of $\psi$, hence $b_{x \vee y} \geq \psi$ and therefore $a_x = b_x = \pi_x(b_{x \vee y}) \geq \epsilon_x(\psi)$, hence $a \in \mathbf{X}(\epsilon_x(\psi))$. This proves that
\begin{eqnarray*}
\mathbf{X}(\epsilon_x(\psi)) = \sigma_{P_x}(\mathbf{X}(\psi)).
\end{eqnarray*}
This completes the proof of the following new representation theorem.

\begin{theorem} \label{th:GenRep_2}
Let $(\Psi;\mathcal{E},\cdot,1,0)$ be an information algebra and $X_x$ for $x \in D$ a tuple system of locally order-generating sets. Then the pair of maps $(f,g)$ as defined above define an embedding of the information algebra $(\Psi;\mathcal{E},\cdot,1,0)$ into the set algebra $(2^U;g(\mathcal{E}),\cap,U,\emptyset)$.
\end{theorem}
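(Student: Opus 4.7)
The plan is to collect the three conditions in the Definition of homomorphism and verify injectivity of $f$ and $g$, most of which has already been prepared in the discussion immediately preceding the statement.

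First, Lemma \ref{le:LocOrdSetHomom} establishes condition 2 of the homomorphism definition, namely $f(\phi \cdot \psi) = f(\phi) \cap f(\psi)$, $f(0) = \emptyset$, and $f(1) = U$. The paragraph immediately preceding the theorem already derives $\mathbf{X}(\epsilon_x(\psi)) = \sigma_{P_x}(\mathbf{X}(\psi))$, which is condition 3 of the definition. The key ingredient there is property 5 of a tuple system, which ensures that whenever $a_x \geq \epsilon_x(\psi)$ we can find a witness $s$ with $d(s) = x \vee y$, $\pi_x(s) = a_x$, and $s \geq \psi$, and then extend this to a full consistent map $b \in U$ with $b_x = s$.

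Second, for condition 1 I need $g$ to be a join-homomorphism. If $x \leq y$ and $a \equiv_y b$ (i.e.\ $a_y = b_y$), consistency forces $a_x = \pi_x(a_y) = \pi_x(b_y) = b_x$, so $a \equiv_x b$ and hence $P_x \leq P_y$. For joins, $a \equiv_{P_x \vee P_y} b$ holds iff $a_x = b_x$ and $a_y = b_y$; by the tuple-system identities $a_x = \pi_x(a_{x \vee y})$ and $a_y = \pi_y(a_{x \vee y})$ this is equivalent to $a_{x \vee y} = b_{x \vee y}$, i.e.\ $a \equiv_{x \vee y} b$. Thus $g(x \vee y) = g(x) \vee g(y)$. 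Injectivity of $g$ follows from the standing assumption (\ref{eq:Infomorph}), exactly as in the paragraph introducing $g$.

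Third, I still need $f$ to be injective. Suppose $\phi \neq \psi$; without loss of generality $\phi \not\leq \psi$. By E4 and E5 there is a common support $z$ of $\phi$ and $\psi$, so $\phi,\psi \in \epsilon_z(\Psi)$. Because $X_z$ is locally order-generating in $\epsilon_z(\Psi)$, the characterization in Theorem \ref{th:CharOrdGen} yields a $t \in X_z$ with $t \geq \psi$ but $t \not\geq \phi$. Using the tuple-system extension property (property 4, applied iteratively) one builds a consistent map $a \in U$ with $a_z = t$; by the component argument noted just before Lemma \ref{le:LocOrdSetHomom}, $a_z \geq \epsilon_z(\psi)$ propagates to $a_x \geq \epsilon_x(\psi)$ for every $x \in D$, so $a \in f(\psi)$, whereas $a_z = t \not\geq \phi$ gives $a \notin f(\phi)$, and thus $f(\phi) \neq f(\psi)$.

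The main obstacle is the construction of the global consistent extension $a \in U$ from the single component $t \in X_z$. Property 4 only supplies one-step extensions to strictly larger domains, so assembling a globally consistent assignment across all of $D$ calls for a Zorn-style argument on partial consistent selections; the same subtle point is already used silently in the derivation preceding the theorem when the text writes ``Let $b \in U$ such that $b_x = s$''. Once this extension step is granted, the remaining checks are routine: the target $(2^U; g(\mathcal{E}), \cap, U, \emptyset)$ is a set algebra in the weaker sense introduced at the end of Section \ref{sec:SetAlgebra}, and the image $f(\Psi)$ sits inside it as a subalgebra that additionally satisfies E4.
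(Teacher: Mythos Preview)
Your approach is essentially the paper's: the paper treats the preceding discussion (Lemma \ref{le:LocOrdSetHomom} plus the derivation of $\mathbf{X}(\epsilon_x(\psi)) = \sigma_{P_x}(\mathbf{X}(\psi))$) as the full proof and simply announces the theorem. You are more thorough on two points the paper skips entirely: the explicit check that $f$ is injective, and the honest identification of the Zorn-type gap in producing a globally consistent $a \in U$ from a single component $t \in X_z$ (the paper indeed writes ``Let $b \in U$ such that $b_x = s$'' without comment).

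One slip to flag: in your verification that $g$ preserves joins, the asserted equivalence ``$a_x = b_x$ and $a_y = b_y$ is equivalent to $a_{x \vee y} = b_{x \vee y}$'' is only half right. The backward implication follows from the consistency condition $a_x = \pi_x(a_{x\vee y})$, but the forward implication would require that an element of $X_{x\vee y}$ be determined by its $x$- and $y$-projections, and that is not among the tuple-system properties. So what you actually get is $P_{x\vee y} \geq P_x \vee P_y$ in $Part(U)$, not equality. The paper asserts the analogous equality in Section \ref{subsec:OrdGenSets} without proof and does not revisit it here, so this is a shared gap rather than a defect specific to your argument.
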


These abstract general results shall be applied in the next Section to various cases generating thus different representation theorems for information algebras.


\section{Examples of Embeddings} \label{sec:Exmpls}

\subsection{General Information Algebra}

In any information algebra $(\Psi;\mathcal{E},\cdot,1,0)$ the set $X = \Psi/\{0\}$ is order generating, since
\begin{eqnarray*}
\psi = \inf \uparrow\!\psi/\{0\}.
\end{eqnarray*}
It is even strongly order-generating as the following lemma shows.

\begin{lemma} \label{le:StrongOrdGen}
Assume for $\phi,\psi \in \Psi$, different from 0, that $\epsilon_x(\phi) \geq \epsilon_x(\psi)$. Then there is an element $\chi \in \Psi$ such that $\psi \leq \chi$ and $\epsilon_x(\chi) = \epsilon_x(\phi)$.
\end{lemma}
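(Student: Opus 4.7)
The plan is to take the obvious candidate $\chi := \psi \cdot \epsilon_x(\phi)$ and verify the two required properties directly from the axioms. The inequality $\psi \leq \chi$ is immediate, since combination realises the supremum in the information order, so $\chi = \psi \vee \epsilon_x(\phi) \geq \psi$. For the extraction equality, I would compute
\begin{eqnarray*}
\epsilon_x(\chi) \;=\; \epsilon_x(\psi \cdot \epsilon_x(\phi)) \;=\; \epsilon_x(\epsilon_x(\phi) \cdot \psi) \;=\; \epsilon_x(\phi) \cdot \epsilon_x(\psi) \;=\; \epsilon_x(\phi),
\end{eqnarray*}
using commutativity, axiom E3, and the hypothesis $\epsilon_x(\phi) \geq \epsilon_x(\psi)$ for the final step.

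The one non-trivial point, and the only place where some care is needed, is that $\chi$ must actually live in $X = \Psi \setminus \{0\}$ for the lemma to witness the strongly order-generating property that is being targeted. I would handle this by contraposition: if $\chi = 0$ then E1 forces $\epsilon_x(\chi) = 0$, so by the computation above $\epsilon_x(\phi) = 0$. But $\phi \neq 0$ together with E2 ($\phi = \phi \cdot \epsilon_x(\phi)$) would then give $\phi = \phi \cdot 0 = 0$, a contradiction. Hence $\chi \neq 0$.

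No genuine obstacle arises; the proof is essentially a one-line application of E3 once the right candidate is identified. The role of the lemma is really to show that the equivalence class $[\phi]_{\equiv_x}$ intersects $\uparrow\!\psi$, and closing under the operation of combining $\psi$ with the fixed $x$-part $\epsilon_x(\phi)$ is the canonical way to produce such a representative without leaving the non-contradictory part of $\Psi$.
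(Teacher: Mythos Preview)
Your proof is correct and essentially identical to the paper's: both take $\chi = \epsilon_x(\phi) \cdot \psi$ and apply E3 together with the hypothesis to get $\epsilon_x(\chi) = \epsilon_x(\phi)$. Your additional verification that $\chi \neq 0$ (which the paper omits) is a worthwhile remark, since it is precisely what is needed for the intended application to the strongly order-generating property of $\Psi \setminus \{0\}$.
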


\begin{proof}
Let $\chi = \epsilon_x(\phi) \cdot \psi$. Then $\psi \leq \chi$ and $\epsilon_x(\chi) = \epsilon_x(\phi) \cdot \epsilon_x(\psi) = \epsilon(\phi)$.
\end{proof}

So, according to Theorem \ref{th:GenRep_1}, the information algebra $(\Psi;\mathcal{E},\cdot,1,0)$ is embedded into the set algebra $(2^{\Psi/\{0\}};g(\mathcal{E}),\cap,\Psi/\{0\},\emptyset)$. The embedding maps are given by
\begin{eqnarray*}
\psi \mapsto \uparrow\!\psi/\{0\}, \quad x \mapsto P_x,
\end{eqnarray*}
where $P_x$ is the partition of $\Psi/\{0\}$ associated with the equivalence relation $\phi \equiv_x \psi$ iff $\epsilon_x(\phi) = \epsilon_x(\psi)$. So we have the following theorem:

\begin{theorem}
An information algebra $(\Psi;\mathcal{E},\cdot,1,0)$ is embedded in the set algebra $(2^{\Psi/\{0\}};g(\mathcal{E}),\cap,\Psi/\{0\},\emptyset)$. It is isomorphic to the sub-setalgebra of upsets $ \uparrow\!\psi/\{0\}$.
\end{theorem}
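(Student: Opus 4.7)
The plan is essentially to assemble three pieces already developed in the paper: (i) the order-generating property of $\Psi \setminus \{0\}$, (ii) Lemma \ref{le:StrongOrdGen} upgrading this to strongly order-generating, and (iii) the general representation Theorem \ref{th:GenRep_1}.

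First I would verify that $X := \Psi \setminus \{0\}$ is order-generating. For any $\psi \in \Psi$ with $\psi \neq 0$, the element $\psi$ itself lies in $\uparrow\!\psi \cap X$, so $\psi$ is simultaneously a member and a lower bound of this set; hence $\psi = \inf(\uparrow\!\psi \cap X)$. Next, Lemma \ref{le:StrongOrdGen} supplies exactly the second clause of the definition of strongly order-generating: given $\phi,\psi \in X$ with $\epsilon_x(\phi) \geq \epsilon_x(\psi)$, the witness $\chi := \epsilon_x(\phi) \cdot \psi$ lies in $X$ (it cannot be $0$ since $\chi \geq \psi \neq 0$), satisfies $\chi \geq \psi$, and by E3 and item 3 of Lemma \ref{le:SuppProp} satisfies $\epsilon_x(\chi) = \epsilon_x(\phi)$, so $\chi \equiv_x \phi$. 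Thus $X$ is strongly order-generating.

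Having verified the hypothesis of Theorem \ref{th:GenRep_1}, I simply invoke it: the pair $(f,g)$ with
\begin{eqnarray*}
f(\psi) = \uparrow\!\psi \cap (\Psi \setminus \{0\}), \qquad g(x) = P_x
\end{eqnarray*}
is an embedding of $(\Psi;\mathcal{E},\cdot,1,0)$ into the set algebra $(2^{\Psi \setminus \{0\}}; g(\mathcal{E}), \cap, \Psi \setminus \{0\}, \emptyset)$, where $P_x$ is the partition induced by $\phi \equiv_x \psi \Leftrightarrow \epsilon_x(\phi) = \epsilon_x(\psi)$.

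For the final clause, note that $f$ is injective (if $\uparrow\!\phi \cap X = \uparrow\!\psi \cap X$, then since both $\phi,\psi$ are the infima of their respective sets when nonzero, $\phi = \psi$; the case $\phi = 0$ or $\psi = 0$ corresponds to $f(\psi) = \emptyset$, which forces $\psi = 0$ by the order-generating property). Therefore $\Psi$ is in bijection with its image $\{\uparrow\!\psi \cap X : \psi \in \Psi\}$, and since $f$ preserves combination, units, null, and extraction by Theorem \ref{th:GenRep_1}, this image is a subalgebra of the set algebra isomorphic to $(\Psi;\mathcal{E},\cdot,1,0)$. There is no real obstacle here — the work has already been done in Lemma \ref{le:StrongOrdGen} and Theorem \ref{th:GenRep_1}; the only mild care needed is to check that excluding $0$ from $X$ is harmless for the infimum characterization and for injectivity of $f$.
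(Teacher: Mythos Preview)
Your approach is exactly the paper's: verify that $X=\Psi\setminus\{0\}$ is order-generating, invoke Lemma~\ref{le:StrongOrdGen} to upgrade to strongly order-generating, and then apply Theorem~\ref{th:GenRep_1}. The paper does nothing more than this.

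One small slip to fix: your justification that $\chi=\epsilon_x(\phi)\cdot\psi\neq 0$ ``since $\chi\geq\psi\neq 0$'' is backwards. In the information order $0$ is the \emph{top} element, so $\chi\geq\psi$ with $\psi\neq 0$ does not rule out $\chi=0$. The correct argument is: $\epsilon_x(\chi)=\epsilon_x(\phi)\cdot\epsilon_x(\psi)=\epsilon_x(\phi)$, and $\epsilon_x(\phi)\neq 0$ (because $\phi\neq 0$ and E2 gives $\epsilon_x(\phi)\leq\phi$, so $\epsilon_x(\phi)=0$ would force $\phi=0$); hence by E1 we cannot have $\chi=0$.
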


This is a very general representation theorem. The embedding has an information-theoretic interpretation: A piece of information $\psi$ may be considered as a partial information (to whatever question considered). All elements $\phi \geq \psi$ are in this sense possible (although possibly still incomplete) completions of $\psi$, and $\uparrow\!\psi/\{0\}$ is the consistent selection of all these completions. For example, in the string algebra, a finite string is extended by all strings which have it as a prefix. The sets of strings form a set algebra into which the string algebra is embedded. This point of view will be enforced by the examples in the following sections. Note also that the algebra of all upsets $U \subseteq \Psi/\{0\}$, that is sets, such that $\psi \in U$ and $\psi \leq \phi$, form a set algebra, a subalgebra of $(2^{\Psi/\{0\}};g(\mathcal{E}),\cap,\Psi/\{0\},\emptyset)$. This follows since upsets are closed under intersections and unions and since for any upset
\begin{eqnarray*}
U = \bigcup_{\psi \in U} \uparrow\!\psi/\{0\},
\end{eqnarray*}
and therefore
\begin{eqnarray*}
\sigma_{P_x}(U) = \bigcup_{\psi \in U} \sigma_{P_x}(\uparrow\!\psi/\{0\}) = \bigcup_{\psi \in U} \uparrow\!\epsilon_x(\psi)/\{0\}.
\end{eqnarray*}

And $(\Psi;\mathcal{E},\cdot,1,0)$ is also embedded into this set-algebra, whose elements have a similar interpretation as a consistent family of completions of partial information.

\subsection{Atomic Information Algebras} \label{sec:AtomInfAlg}

\subsubsection{Atomistic Algebras} \label{subsec:AtomisticInfAlg}

In many information algebras there are maximal elements different from $0$. And in many important cases these maximal elements determine the information algebra fully. In this case we shall show that the algebra is still, as in the general case, essentially a set algebra but of a different type. This section is an extension of material developed in \cite{kohlas03} for labeled information algebras and in \cite{kohlasschmid16} for commutative information algebras. It forms also the base for the representation theorems of Boolean information algebra, Section \ref{sec:BooleInfAlg}.

Consider a generalized information algebra $(\Psi;\mathcal{E},\cdot,1,0)$ where the extraction operators $\epsilon_x \in \mathcal{E}$ as usual are indexed by a join-semilattice $(D;\leq)$. Then we define the concept of an atom in $\Psi$ as follows:

\begin{definition}
\textit{Atom:} An element $\alpha \in \Psi$ is 
called an atom, iff
\begin{enumerate}
\item $\alpha \not= 0$,
\item for all $\psi \in \Psi$, if $\alpha \leq \psi$ then either $\alpha = \psi$ or $\psi = 0$.
\end{enumerate}
\end{definition}
So atoms are maximal elements in $(\Psi;\leq)$ different from $0$ \footnote{In order theory atoms are defined as minimal elements. But for our purposes defining atoms as maximally informative elements makes sense. So, our atoms are co-atoms in order theory.}. Here follow a few useful, elementary properties of atoms.

\begin{lemma} \label{le:AtomProp}
Let $(\Psi;\mathcal{E},\cdot,1,0)$ be an information algebra. Then 
\begin{enumerate}
\item if $\alpha$ is an atom, then for all $\psi \in \Psi$ either $\alpha \cdot \psi = 0$ or $\alpha \cdot \psi = \alpha$,
\item if $\alpha$ is an atom, then for all $\psi \in \Psi$ either $\alpha \cdot \psi = 0$ or $\psi \leq \alpha$,
\item if $\alpha$ and $\beta$ are atoms, then either $\alpha = \beta$ or $\alpha \cdot \beta = 0$.
\end{enumerate}
\end{lemma}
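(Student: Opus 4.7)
The plan is to prove the three items in order, since each one feeds directly into the next. The only tools needed are the semigroup axioms (associativity, commutativity, idempotency, neutrality of $1$, absorption of $0$) and the order definition $\phi \leq \psi \Leftrightarrow \phi \cdot \psi = \psi$; no extraction axiom is required.

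For item 1, I would first note that $\alpha \leq \alpha \cdot \psi$, which is immediate from idempotency: $\alpha \cdot (\alpha \cdot \psi) = (\alpha \cdot \alpha) \cdot \psi = \alpha \cdot \psi$. Since $\alpha$ is an atom and $\alpha \leq \alpha \cdot \psi$, the defining property of atoms forces $\alpha \cdot \psi = \alpha$ or $\alpha \cdot \psi = 0$. Item 2 is then an immediate corollary: if $\alpha \cdot \psi \neq 0$, then by item 1 we have $\alpha \cdot \psi = \alpha$, and by commutativity $\psi \cdot \alpha = \alpha$, which is exactly $\psi \leq \alpha$.

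For item 3, assume $\alpha \cdot \beta \neq 0$. Applying item 2 with the atom $\alpha$ and the element $\beta$ gives $\beta \leq \alpha$. Now I invoke the atom property of $\beta$: since $\beta$ is an atom and $\beta \leq \alpha$, either $\beta = \alpha$ or $\alpha = 0$. The second case is excluded because $\alpha$ is itself an atom (so $\alpha \neq 0$), and hence $\alpha = \beta$.

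There is no real obstacle here; the only subtlety worth flagging is being consistent with the convention that atoms are the \emph{maximal} nonzero elements in the information order (the footnote in the statement already warns the reader). The clean structure of the argument—item 1 by idempotency, item 2 by item 1 and commutativity, item 3 by item 2 and the atom property of the second atom—makes the whole lemma essentially a one-line unrolling of the atom definition, so I would keep the write-up short.
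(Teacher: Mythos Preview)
Your proof is correct and follows essentially the same approach as the paper: both argue that $\alpha \leq \alpha \cdot \psi$ and then invoke the atom definition for items 1 and 2, and for item 3 both use the resulting $\beta \leq \alpha$ together with the atom property of $\beta$ and $\alpha \neq 0$. Your write-up is slightly more explicit (spelling out the idempotency step and the commutativity), but the structure is identical.
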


\begin{proof}
1.)  and 2.) We have $\alpha \leq \alpha \cdot \psi$. If $\alpha$ is an atom, then by definition either $\alpha \cdot \psi = \alpha$ or $\alpha \cdot \psi = 0$. In the first case we have  $\psi \leq \alpha$.

3.) Here we have $\alpha \leq \alpha \cdot \beta$. So, if $\alpha$ is an atom then either $\alpha = \alpha \cdot \beta$ or $\alpha \cdot \beta = 0$. In the first case it follows that $\beta \leq \alpha$. But because $\beta$ is also an atom and $\alpha \not= 0$, this implies $\alpha = \beta$.
\end{proof}

We denote the set of all atoms of an information algebra $\Psi$ by $At(\Psi)$. According to Lemma \ref{le:AtomProp} any element $\psi \in \Psi$ is either contradictory to an atom $\alpha$ ($\alpha \cdot \psi = 0$) or is implied by an atom $\alpha$ ($\psi \leq \alpha$). Let $At(\psi) = \{\alpha \in At(\Psi):\psi \leq \alpha\}$ be the set of all atoms implying $\psi$. We introduce now particular classes of information algebras.

\begin{definition}
\textit{Atomic and Atomistic Information Algebras:} Let $(\Psi;\mathcal{E},\cdot,1,0)$ be an information algebra. Then,
\begin{enumerate}
\item if for all $\psi \in \Psi$, $\psi \not= 0$, the set $At(\psi)$ is not empty, the algebra is called atomic,
\item if or all $\psi \in \Psi$, $\psi \not= 0$,
\begin{eqnarray*}
\psi = \inf  At(\psi)
\end{eqnarray*}
the algebra is called atomistic,
\item if the algebra is atomistic and for all subsets $X$ of $At(\Psi)$ the infimum $\inf X$ exists and belongs to $\Psi$, then the algebra is called completely atomistic.
\end{enumerate}
\end{definition}

So, in an atomistic information algebra the set $At(\Psi)$ of atoms is order-generating. Here the map $f : \Psi \rightarrow 2^{At(\Psi)}$ defined by $f(\psi) = At(\psi)$ is according to the general scheme of order-generating sets a join-homomorphism between $(\Psi;\leq)$ and $(2^{At(\Psi)};\subseteq)^\vartheta$,
\begin{eqnarray*}
f(\phi \cdot \psi) &=& At(\phi \cdot \psi) = At(\phi) \cap At(\psi) = f(\phi) \cap f(\psi), \\
f(0) &=& \emptyset, \\
f(1) &=& At(\Psi).
\end{eqnarray*}

We may consider atoms in $At(\psi)$ as complete possible completions of $\psi$. In a sense they represent possible worlds, and in an atomistic information algebra, any piece of  information is given by a set of possible worlds

The set $At(\Psi)$ is strongly order-generating in atomistic information algebras as we shall see. In order to show this and also to extend the concept of atomic information algebras, we introduce the notion of atoms relative to a domain $x$. Note that the notion of an atom is only related to the idempotent commutative semigroup $(\Psi;\cdot)$ and has so far no relation to extraction operators. We introduce a now a similar concept which is related to domains $x \in D$ of an information algebra.

\begin{definition}
\textit{Relative Atoms:} Let $(\Psi;\mathcal{E},\cdot,1,0)$ be an information algebra. Then an element $\alpha \in \Psi$ is called an atom relative to $x \in D$, if
\begin{enumerate}
\item $\alpha = \epsilon_x(\alpha) \not= 0$,
\item for all $\psi \inÊ\Psi$, if $\psi = \epsilon_x(\psi) \geq \alpha$ then either $\alpha = \psi$ or $\psi = 0$.
\end{enumerate}
\end{definition}
So, atoms relative to a domain $x$ are maximally informative elements in domain $x$, that is supported by domain $x$. In fact, they are atoms in the subalgebra $(\epsilon_x(\Psi);\mathcal{E}_x,\cdot,1,0)$ of $(\Psi;\mathcal{E},\cdot,1,0)$. Relative atoms have therefore similar properties as atoms.

\begin{lemma} \label{le:RelAtomProp1}
Let $(\Psi;\mathcal{E},\cdot,1,0)$ be a generalized information algebra. Then 
\begin{enumerate}
\item if $\alpha$ is an atom relative to $x$, then for all $\psi \in \Psi$ with $\epsilon_x(\psi) = \psi$, either $\alpha \cdot \psi = 0$ or $\alpha \cdot \psi = \alpha$,
\item if $\alpha$ is an atom relative to $x$, then for all $\psi \in \Psi$ either $\alpha \cdot \epsilon_x(\psi) = 0$ or $\epsilon_x(\psi)\ \leq \alpha$,
\item if $\alpha$ and $\beta$ are atoms relative to $x$, then either $\alpha = \beta$ or $\alpha \cdot \beta = 0$.
\end{enumerate}
\end{lemma}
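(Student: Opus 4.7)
My plan is to follow exactly the pattern of Lemma \ref{le:AtomProp}, exploiting the observation the author already made in the remark preceding the statement: an element which is an atom relative to $x$ is just an atom in the subalgebra $(\epsilon_x(\Psi);\mathcal{E}_x,\cdot,1,0)$. Thus, once one has the supports right, the three claims reduce to the corresponding three claims of Lemma \ref{le:AtomProp} applied inside that subalgebra. The only ingredient from the ambient algebra one really needs is item 6 of Lemma \ref{le:SuppProp}, which says that combinations of elements supported by $x$ are again supported by $x$.

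For item 1, I would argue as follows. Assume $\psi = \epsilon_x(\psi)$, so $\psi$ has support $x$, and $\alpha$ also has support $x$ by definition of relative atom. Then by Lemma \ref{le:SuppProp}(6), $\alpha \cdot \psi$ has support $x$ as well, and of course $\alpha \leq \alpha \cdot \psi$. Applying the defining property of a relative atom (condition 2 in the definition) to $\alpha \cdot \psi$ then yields $\alpha \cdot \psi = \alpha$ or $\alpha \cdot \psi = 0$.

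Item 2 is then an immediate corollary: the element $\epsilon_x(\psi)$ has support $x$ by Lemma \ref{le:SuppProp}(3), so item 1 applies with $\psi$ replaced by $\epsilon_x(\psi)$, giving either $\alpha \cdot \epsilon_x(\psi) = 0$ or $\alpha \cdot \epsilon_x(\psi) = \alpha$; the latter says $\epsilon_x(\psi) \leq \alpha$. For item 3, since $\beta$ has support $x$ I apply item 1 with $\psi = \beta$: either $\alpha \cdot \beta = 0$, or $\alpha \cdot \beta = \alpha$, whence $\beta \leq \alpha$. In this latter case I appeal to the fact that $\beta$ is itself a relative atom at $x$ and $\alpha$ has support $x$, so condition 2 of the definition applied to $\alpha$ (viewed through $\beta$'s atomicity) forces $\alpha = \beta$ or $\alpha = 0$; since $\alpha \neq 0$ by part 1 of the definition, we conclude $\alpha = \beta$.

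There is no real obstacle here; the only place one has to be slightly careful is in item 3, where it is tempting to conclude $\alpha = \beta$ directly from $\beta \leq \alpha$ and the atomicity of $\alpha$, but the defining condition of a relative atom $\alpha$ constrains $\psi \geq \alpha$, not $\psi \leq \alpha$, so the symmetric appeal through $\beta$'s own atomicity (together with the fact that $\alpha$ has support $x$) is what closes the argument.
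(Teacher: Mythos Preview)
Your proof is correct and follows exactly the approach the paper indicates: the paper's entire proof is the single sentence ``This follows from Lemma~\ref{le:AtomProp}'', relying on the observation (which you make explicit) that a relative atom at $x$ is an atom in the subalgebra $\epsilon_x(\Psi)$. You have simply unpacked the details, including the use of Lemma~\ref{le:SuppProp}(6) to keep products inside $\epsilon_x(\Psi)$ and the symmetric appeal to $\beta$'s atomicity in item~3.
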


This follows from Lemma \ref{le:AtomProp}. There are however further important properties of relative atoms, which are listed in the next Lemma. 

\begin{lemma} \label{le:RelAtomProp2}
Let $(\Psi;\mathcal{E},\cdot,1,0)$ be an atomic information algebra. Then,
\begin{enumerate}
\item if $\alpha$ is an atom, then, for all $x \in D$, $\epsilon_x(\alpha)$ is an atom relative to $x$,
\item if $\alpha$ is an atom with support $x$, then it is an atom relative to $x$.
\item for all $x \in D$ and atoms $\alpha'$ relative to $x$, there is an atom $\alpha$ such that $\alpha' = \epsilon_x(\alpha)$,
\item if $x \leq y$ and $\alpha$ is an atom relative to $x$, then there is an atom $\beta$ relative to $y$ such that $\alpha = \epsilon_x(\beta)$,
\item if $x \leq y$, $\alpha$ an atom relative to $x$, $\alpha \geq \epsilon_x(\psi)$ for $\psi \in \Psi$ with support $y$, then there exists an atom $\beta$ relative to $y$ such that $\beta \geq \psi$ and $\epsilon_x(\beta) = \alpha$.
\end{enumerate}
\end{lemma}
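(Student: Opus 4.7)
The plan is to prove the five items sequentially, with each one bootstrapping off its predecessors. The key conceptual point is that extraction $\epsilon_x$ carries atoms of $\Psi$ precisely onto relative atoms at $x$, and once item 1 is in place the remaining assertions follow by combining it with Lemma \ref{le:RelAtomProp1} and the basic support calculus of Lemma \ref{le:SuppProp}.

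For item 1, I would first rule out $\epsilon_x(\alpha) = 0$ via E2: the identity $\alpha = \alpha \cdot \epsilon_x(\alpha)$ would otherwise force $\alpha = 0$. Idempotence $\epsilon_x(\epsilon_x(\alpha)) = \epsilon_x(\alpha)$ is Lemma \ref{le:SuppProp}.3, so only maximality remains. Suppose $\psi = \epsilon_x(\psi) \geq \epsilon_x(\alpha)$; by Lemma \ref{le:AtomProp}.1 applied to $\alpha$ and $\psi$, either $\alpha \cdot \psi = \alpha$ or $\alpha \cdot \psi = 0$. In the first case $\psi \leq \alpha$, so $\psi = \epsilon_x(\psi) \leq \epsilon_x(\alpha) \leq \psi$ gives $\psi = \epsilon_x(\alpha)$. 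In the second, rewrite $\alpha \cdot \psi = \alpha \cdot \epsilon_x(\psi)$ and apply E3 to obtain $\epsilon_x(\alpha \cdot \psi) = \epsilon_x(\psi) \cdot \epsilon_x(\alpha) = \psi$; but this also equals $\epsilon_x(0) = 0$, so $\psi = 0$. Item 2 is then immediate, since an atom with support $x$ satisfies $\alpha = \epsilon_x(\alpha)$. For item 3, use the atomic hypothesis to choose an atom $\alpha \geq \alpha'$; then $\epsilon_x(\alpha) \geq \alpha'$, and $\epsilon_x(\alpha)$ is a relative atom at $x$ by item 1. By Lemma \ref{le:RelAtomProp1}.3 the two relative atoms $\alpha'$ and $\epsilon_x(\alpha)$ are either equal or have product $0$, and the latter is impossible since $\alpha' \leq \epsilon_x(\alpha)$ makes their product equal $\epsilon_x(\alpha) \not= 0$. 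Item 4 then follows by combining items 3 and 1: pick an atom $\beta'$ with $\epsilon_x(\beta') = \alpha$, set $\beta := \epsilon_y(\beta')$, note that $\beta$ is a relative atom at $y$ by item 1, and compute $\epsilon_x(\beta) = \epsilon_x(\epsilon_y(\beta')) = \epsilon_x(\beta') = \alpha$ using Lemma \ref{le:SuppProp}.5.

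The main obstacle is item 5, where one must manufacture a relative atom at $y$ that dominates $\psi$ and whose $x$-extract is exactly $\alpha$. The decisive preliminary step is to show $\alpha \cdot \psi \not= 0$: since $\alpha = \epsilon_x(\alpha)$, applying E3 yields $\epsilon_x(\alpha \cdot \psi) = \alpha \cdot \epsilon_x(\psi) = \alpha$ by the hypothesis $\alpha \geq \epsilon_x(\psi)$, and this is nonzero. Atomicity then supplies an atom $\gamma \geq \alpha \cdot \psi$, and I would take $\beta := \epsilon_y(\gamma)$. From $\gamma \geq \psi$ and $\epsilon_y(\psi) = \psi$ we get $\beta \geq \psi$, while item 1 makes $\beta$ a relative atom at $y$. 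Finally, Lemma \ref{le:SuppProp}.5 gives $\epsilon_x(\beta) = \epsilon_x(\gamma) \geq \alpha$; since $\epsilon_x(\gamma)$ is itself a relative atom at $x$ (item 1 again), the same dichotomy from Lemma \ref{le:RelAtomProp1}.3 used in item 3 forces $\epsilon_x(\gamma) = \alpha$, completing the argument.
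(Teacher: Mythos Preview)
Your proof is correct and follows essentially the same approach as the paper: for item~1 you use the atom dichotomy on $\alpha \cdot \psi$ and E3 to transfer it to $\epsilon_x(\alpha)$; for items~3--5 you pick an atom above the relevant nonzero element, push it down via $\epsilon_y$ or $\epsilon_x$, and use item~1 plus the relative-atom dichotomy to pin down the extraction. The only cosmetic difference is that in items~3 and~5 you invoke Lemma~\ref{le:RelAtomProp1}.3 explicitly to conclude equality of two comparable relative atoms, whereas the paper appeals directly to maximality in the definition; these are the same step.
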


\begin{proof}
1.) If $\alpha$ is an atom, then $\alpha \not= 0$, hence $\epsilon_x(\alpha) \not= 0$. Consider now an element $\psi$ with support $x$, $\epsilon_x(\psi) = \psi$. We then have $\epsilon_x(\alpha \cdot \psi) = \epsilon_x(\alpha) \cdot \psi$. But, since $\alpha$ is an atom either $\alpha \cdot \psi = 0$ or $\alpha \cdot \psi = \alpha$. In the second case it follows that $\epsilon_x(\alpha) \cdot \psi = \epsilon_x(\alpha)$, in the first case $\epsilon_x(\alpha) \cdot \psi = 0$. So, $\epsilon_x(\alpha)$ is indeed an atom relative to $x$.

2.) is an immediate consequence of 1.)

3.) Since the algebra is atomic, there is an atom $\alpha \in At(\alpha')$, that is $\alpha' \leq \alpha$. It follows that $\alpha' = \epsilon_x(\alpha') \leq \epsilon_x(\alpha)$. But $\epsilon_x(\alpha) \not= 0$, therefore it follows that $\alpha' = \epsilon_x(\alpha)$.

4.) As before, there is an atom $\gamma \in At(\alpha)$ so that, as above, $\epsilon_x(\gamma) = \alpha$ and $\beta = \epsilon_y(\gamma)$ is an atom relative to $y$. But we have, since $x \leq y$, $\alpha = \epsilon_x(\gamma) = \epsilon_x(\epsilon_y(\gamma)) = \epsilon_x(\beta)$.

5.) We have $\alpha \cdot \psi \not= 0$ since $\epsilon_x(\alpha \cdot \psi) = \alpha \cdot \epsilon_x(\psi) = \alpha \not= 0$. So, there is an atom $\gamma \in At(\alpha \cdot \psi)$  such that $\gamma \geq \alpha \cdot \psi \geq \psi$, $\beta = \epsilon_y(\gamma)$ is an atom relative to $y$, and $\epsilon_x(\beta) = \epsilon_x(\epsilon_y(\gamma)) = \epsilon_x(\gamma) \geq \epsilon_x(\alpha \cdot \psi) = \alpha$. Since both $\alpha$ and $\epsilon_x(\beta)$ are atoms relative to $x$ it follows that $\epsilon_x(\beta) = \alpha$.
\end{proof}

Now, the following result shows that $At(\Psi)$ is a strongly order-generating set.

\begin{lemma} \label{le:StrongOrdGenAtoms}
Let $\alpha$ be an atom so that $\epsilon_x(\alpha) \geq \epsilon_x(\psi)$. Then there is an atom $\beta$ such that $\alpha \equiv_x \beta$ and $\beta \geq \psi$. 
\end{lemma}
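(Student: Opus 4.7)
The plan is to produce the desired atom $\beta$ directly as an atom dominating the element $\epsilon_x(\alpha) \cdot \psi$, and then to show that its $x$-extraction coincides with $\epsilon_x(\alpha)$ by invoking maximality of relative atoms.

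First I would set $\chi = \epsilon_x(\alpha) \cdot \psi$ and verify that $\chi \neq 0$. Using E3 together with the hypothesis $\epsilon_x(\alpha) \geq \epsilon_x(\psi)$, one computes
\begin{eqnarray*}
\epsilon_x(\chi) \;=\; \epsilon_x(\epsilon_x(\alpha) \cdot \psi) \;=\; \epsilon_x(\alpha) \cdot \epsilon_x(\psi) \;=\; \epsilon_x(\alpha),
\end{eqnarray*}
which is nonzero because $\alpha$ is an atom (so $\alpha \neq 0$ and hence $\epsilon_x(\alpha) \neq 0$ by E1 applied contrapositively). In particular $\chi \neq 0$.

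Next, because the ambient information algebra is atomistic (and hence atomic), the set $At(\chi)$ is nonempty. Pick any $\beta \in At(\chi)$; then $\beta \geq \chi \geq \psi$, which already gives the second conclusion $\beta \geq \psi$. Moreover $\beta \geq \chi \geq \epsilon_x(\alpha)$, so applying $\epsilon_x$ and using monotonicity (Lemma~\ref{le:SuppProp}.2) together with the fact that $\epsilon_x(\alpha) = \epsilon_x(\epsilon_x(\alpha))$ yields $\epsilon_x(\beta) \geq \epsilon_x(\alpha)$.

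The final step is to upgrade this inequality to an equality. By Lemma~\ref{le:RelAtomProp2}(1), both $\epsilon_x(\alpha)$ and $\epsilon_x(\beta)$ are atoms relative to $x$. Since $\epsilon_x(\beta)$ is supported by $x$, is $\geq \epsilon_x(\alpha)$, and is not $0$, the defining maximality property of the relative atom $\epsilon_x(\alpha)$ forces $\epsilon_x(\beta) = \epsilon_x(\alpha)$; that is, $\alpha \equiv_x \beta$. The only mildly subtle point is remembering that atomicity is needed to produce $\beta$ in the first place, which is why the argument lives naturally in the atomistic setting; everything else is a direct application of E3, monotonicity, and the maximality built into the definition of a relative atom.
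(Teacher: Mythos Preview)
Your argument is essentially identical to the paper's: set $\chi=\epsilon_x(\alpha)\cdot\psi$, check $\epsilon_x(\chi)=\epsilon_x(\alpha)\neq 0$ so $\chi\neq 0$, pick an atom $\beta\in At(\chi)$, and conclude $\epsilon_x(\beta)=\epsilon_x(\alpha)$ because both are atoms relative to $x$.

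One small correction: the justification ``by E1 applied contrapositively'' is not right. E1 says $\epsilon_x(0)=0$; its contrapositive reads ``$\epsilon_x(\psi)\neq 0 \Rightarrow \psi\neq 0$'', which is the wrong direction. What you actually need is that $\epsilon_x(\alpha)=0$ forces $\alpha=0$, and this follows from E2: since $\epsilon_x(\alpha)\leq\alpha$ and $0$ is the greatest element, $\epsilon_x(\alpha)=0$ would give $0\leq\alpha$, hence $\alpha=0$. This is a citation slip rather than a gap in the argument.
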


\begin{proof}
We claim that $\epsilon_x(\alpha) \cdot \psi \not= 0$. In fact $\epsilon_x(\epsilon_x(\alpha) \cdot \psi) = \epsilon_x(\alpha) \cdot \epsilon_x(\psi) = \epsilon_x(\alpha) \not= 0$ Then there is an atom $\beta \in At(\epsilon_x(\alpha) \cdot \psi)$, so that $\beta \geq \epsilon_x(\alpha) \cdot \psi \geq \psi$ and $\epsilon_x(\beta) \geq \epsilon_x(\alpha)$. But since both $\epsilon_x(\beta)$ and $\epsilon_x(\alpha)$ are atoms relative to $x$ this implies that $\epsilon_x(\beta) = \epsilon_x(\alpha)$.
\end{proof}

So, according to Theorem \ref{th:GenRep_1}, the atomistic information algebra $(\Psi;\mathcal{E},\cdot,1,0)$ is embedded into the set algebra $(2^{At(\Psi)};g(\mathcal{E}),\cap,\Psi/\{0\},\emptyset)$. The embedding maps are given by
\begin{eqnarray*}
\psi \mapsto At(\psi) = \uparrow\!\psi \cap At(\Psi), \quad x \mapsto P_x,
\end{eqnarray*}
where $P_x$ is the partition of $At(\Psi)$ associated with the equivalence relation $\alpha \equiv_x \beta$ iff $\epsilon_x(\alpha) = \epsilon_x(\beta)$ in $At(\Psi)$. 

\begin{theorem}
An atomistic information algebra $(\Psi;\mathcal{E},\cdot,1,0)$ is embedded in the set algebra $(2^{At(\Psi)};g(\mathcal{E}),\cap,At(\Psi),\emptyset)$. 
\end{theorem}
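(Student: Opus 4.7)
The plan is to obtain this as a direct instance of the general representation Theorem \ref{th:GenRep_1} by taking the strongly order-generating set to be $X := At(\Psi)$. All of the substantive work has already been done in Lemmas \ref{le:AtomProp}--\ref{le:StrongOrdGenAtoms}, so the proof reduces to checking the two hypotheses of that theorem for this particular choice of $X$, and then reading off the resulting embedding maps.

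First, I would verify that $At(\Psi)$ is order-generating. By the definition of an atomistic information algebra, $\psi = \inf At(\psi)$ for every $\psi \neq 0$, and clearly $At(\psi) = \{\alpha \in At(\Psi) : \psi \leq \alpha\} = \uparrow\!\psi \cap At(\Psi)$. This is exactly the defining equation for an order-generating set, and no atom equals $0$ by definition, so $0 \notin At(\Psi)$. Second, I would verify strong order-generation: given $\alpha \in At(\Psi)$ with $\epsilon_x(\alpha) \geq \epsilon_x(\psi)$, Lemma \ref{le:StrongOrdGenAtoms} produces an atom $\beta$ satisfying $\alpha \equiv_x \beta$ and $\beta \geq \psi$, which is precisely the condition required in the definition of a strongly order-generating set.

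With both hypotheses in hand, Theorem \ref{th:GenRep_1} applied with $X = At(\Psi)$ produces the desired embedding $(f,g)$, where $f(\psi) = \uparrow\!\psi \cap At(\Psi) = At(\psi)$ and $g(x) = P_x$ is the partition of $At(\Psi)$ induced by the equivalence $\alpha \equiv_x \beta \Leftrightarrow \epsilon_x(\alpha) = \epsilon_x(\beta)$. Injectivity of $g$ uses the standing assumption (\ref{eq:Infomorph}) that distinct domains yield distinct image algebras, which is already incorporated into the hypotheses of Theorem \ref{th:GenRep_1}. There is no substantial obstacle in the argument, since the preceding lemmas were expressly designed to supply the two conditions required; the proof is essentially a one-line appeal to Theorem \ref{th:GenRep_1} together with a pointer to Lemma \ref{le:StrongOrdGenAtoms}.
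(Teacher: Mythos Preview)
Your proposal is correct and matches the paper's approach exactly: the paper also derives the theorem as an immediate application of Theorem \ref{th:GenRep_1} with $X = At(\Psi)$, noting that atomisticity gives order-generation and that Lemma \ref{le:StrongOrdGenAtoms} supplies the strong order-generation condition.
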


A stronger statement is possible, if the information algebra $(\Psi;\mathcal{E},\cdot,1,0)$ is completely atomistic. Then the map $\psi \mapsto At(\psi)$ is surjective on $2^{At(\Psi)}$, and the pair of maps $(f,g)$ determine an isomorphism, hence the two algebras $(\Psi;\mathcal{E},\cdot,1,0)$ and $(2^{At(\Psi)};g(\mathcal{E}),\cap,At(\Psi),\emptyset)$ are isomorphic.

\begin{theorem} \label{th:ComplAtomInfAlg}
A completely atomistic information algebra $(\Psi;\mathcal{E},\cdot,1,0)$ is isomorphic to the set algebra $(2^{At(\Psi)};g(\mathcal{E}),\cap,At(\Psi),\emptyset)$. 
\end{theorem}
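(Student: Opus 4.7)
The plan is to lift the embedding provided by Theorem \ref{th:GenRep_1} to a full isomorphism. Since $At(\Psi)$ is strongly order-generating by Lemma \ref{le:StrongOrdGenAtoms}, Theorem \ref{th:GenRep_1} already gives that the pair $(f,g)$, with $f(\psi)=At(\psi)$ and $g(x)=P_x$, embeds $(\Psi;\mathcal{E},\cdot,1,0)$ into the set algebra on $2^{At(\Psi)}$: $f$ is injective and preserves combination, unit, null and extraction, while $g$ is a bijection of $(D;\leq)$ onto $(\mathcal{P}(D);\leq)$. It remains only to prove that $f$ is surjective onto $2^{At(\Psi)}$.

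For surjectivity I would use complete atomisticity to construct a preimage of each subset. Given $S\subseteq At(\Psi)$, set $\psi_S := \inf S$ (with the convention $\psi_\emptyset := 0$); by hypothesis $\psi_S$ lies in $\Psi$. The goal is then $At(\psi_S)=S$. The inclusion $S\subseteq At(\psi_S)$ is immediate from $\psi_S\leq\alpha$ for every $\alpha\in S$, so the real content is the reverse inclusion $At(\psi_S)\subseteq S$: any atom $\alpha$ above $\psi_S$ must already belong to $S$.

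The main obstacle is exactly this reverse inclusion, and my plan is to argue by contradiction. Suppose $\alpha\in At(\psi_S)\setminus S$. Since $\alpha$ and each $\beta\in S$ are distinct atoms, Lemma \ref{le:AtomProp}(3) yields $\alpha\cdot\beta=0$ for every $\beta\in S$, whereas $\alpha\geq\psi_S$ gives $\alpha\cdot\psi_S=\alpha\neq 0$. The delicate step is to combine these pointwise incompatibilities with the atomistic identity $\psi_S=\inf At(\psi_S)$ to force a contradiction --- morally, $\alpha$ cannot be compatible with $\inf S$ while being incompatible with every element of $S$, and it is here that \emph{completely} atomistic (rather than merely atomistic) is used in an essential way. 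Once $At(\psi_S)=S$ is secured, $f$ is a bijection, so, together with the morphism properties already granted by Theorem \ref{th:GenRep_1}, $(f,g)$ is an isomorphism and the theorem follows.
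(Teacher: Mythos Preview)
Your overall strategy coincides with the paper's: the embedding of Theorem~\ref{th:GenRep_1} (via Lemma~\ref{le:StrongOrdGenAtoms}) reduces everything to surjectivity of $f(\psi)=At(\psi)$, and the paper itself merely asserts this surjectivity in the sentence preceding the theorem, without further argument. Your candidate preimage $\psi_S=\inf S$ is the natural one.

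The problem lies exactly where you call the argument ``delicate'': that step is not merely delicate, it does not go through. From $\alpha\cdot\beta=0$ for every $\beta\in S$ together with $\alpha\cdot\psi_S=\alpha\neq 0$ you want a contradiction, but nothing in the definition of \emph{completely atomistic} forces an element incompatible with each member of $S$ to be incompatible with $\inf S$; that would require a distributivity-type law relating $\cdot$ to arbitrary meets, which is nowhere assumed. A concrete obstruction: take $\Psi=\{1,a,b,c,0\}$ with three pairwise incomparable atoms $a,b,c$ between $1$ and $0$, a one-element $D$, and the identity as the sole extraction operator. This satisfies E1--E5 and is completely atomistic (the infimum of any non-singleton nonempty subset of $\{a,b,c\}$ is $1$). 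For $S=\{a,b\}$ one gets $\psi_S=\inf\{a,b\}=1$ and hence $At(\psi_S)=\{a,b,c\}\neq S$: the atom $c$ satisfies $c\cdot a=c\cdot b=0$ yet $c\geq 1=\psi_S$, so your contradiction never materialises. Indeed $|\Psi|=5<8=|2^{At(\Psi)}|$, so $f$ cannot be onto at all. The upshot is that the reverse inclusion $At(\inf S)\subseteq S$ is in general false under the stated hypotheses; either an additional condition (for instance that distinct subsets of atoms have distinct infima, equivalently that $\Psi$ is a complete atomic Boolean lattice) is being tacitly assumed, or surjectivity must be obtained by a different route than the one you sketch.
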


This representation theorem will be placed in the context of Boolean information algebras in the Section \ref{sec:BooleInfAlg} .

\subsubsection{Locally Atomic Information Algebras} \label{subsec:LocAtomInfAlg}

As we have seen above, the atoms of an atomic information algebra induce relative atoms. Many information algebras have no atoms, but may still have atoms relative to all domains $x$. This is an instance of the situation with locally order-generating sets. Given the concept of relative atoms, we denote the set of atoms relative to a domain $x$ by $At_x(\Psi)$. As a consequence of Lemma \ref{le:RelAtomProp1}, note that $\epsilon_x(At(\psi)) = At_x(\Psi)$. For an element $\psi$ of $\Psi$ we define the set
\begin{eqnarray*}
At_x(\psi) = \{\alpha \in At_x(\Psi):\epsilon_x(\psi) \leq \alpha\} = \uparrow\!\epsilon_x(\psi) \cap At_x(\Psi).
\end{eqnarray*}
of all atoms relative to $x$ which imply $\epsilon_x(\psi)$. 

The concepts of atomic, atomistic and completely atomistic information algebras can be extended to relative atoms. 

\begin{definition}
\textit{Locally Atomic and Atomistic Information Algebras:} Let $(\Psi;\mathcal{E},\cdot,1,0)$ be an information algebra. Then,
\begin{enumerate}
\item if for all $x \in D$, $\psi \not= 0$, the sets $At_x(\psi)$ are not empty, the algebra is called locally atomic,
\item if for  all $x \in D$, $\psi \not= 0$,
\begin{eqnarray*}
\epsilon_x(\psi) = \inf At_x(\psi)
\end{eqnarray*}
the algebra is called locally atomistic,
\item if the algebra is locally atomistic and if for all subsets $X$ of $At_x(\Psi)$, for all $x \in D$, there is an element $\psi \in \Psi$ such that
\begin{eqnarray*}
\psi = \epsilon_x(\psi) = \inf X,
\end{eqnarray*}
then the algebra is called locally completely atomistic.
\end{enumerate}
\end{definition}
We remark that any atomic, atomistic or completely atomistic information algebra is also locally atomic, locally atomistic or locally completely atomistic, but the converse does not hold. In a locally atomistic information algebra, the atoms relative to a domain $x$ may be considered to constitute the possible complete pieces of information relative to this domain, or, in other words, the possible answers to the question represented by $x$.

By Lemma \ref{le:RelAtomProp2}, the system $At_x(\Psi)$ for $x \in D$ is a tuple system of locally order-generating sets in the case of a locally atomistic information algebra. We may therefore apply the theory of Section \ref{subsec:LocOrdGenSet}. In the spirit of this section, we consider maps $a : D \rightarrow \bigcup_{x \in D} At_x(\Psi)$ so that $x \mapsto a_x \in At_x(\Psi)$. We restrict such maps to maps such that for $x \leq y$ we have $a_x = \epsilon_x(a_y)$. Let $U$ be the set of such consistent maps. Then, for every element $\psi$ in $\Psi$ we define
\begin{eqnarray*}
\mathbf{At}(\psi) = \{a \in U:a_x \in At_x(\psi), \forall x \in D\}.
\end{eqnarray*}
Then, according to Section \ref{subsec:LocOrdGenSet} the map $f : \Psi \rightarrow 2^U$ defined by
\begin{eqnarray*}
\psi \mapsto f(\psi) = \mathbf{At}(\psi)
\end{eqnarray*}
is a join homomorphism,
\begin{eqnarray*}
f(\phi \cdot \psi) &=& \mathbf{At}(\phi \cdot \psi) = \mathbf{At}(\phi) \cap \mathbf{At}(\psi) = f(\phi) \cap f(\psi), \\
f(0) &=& \emptyset, \\
f(1) &=& U.
\end{eqnarray*}
As in Section \ref{subsec:LocOrdGenSet}, $g$ maps $x \in D$ to the partition $P_x$ in $U$ defined by $a_x = b_x$. Then the pair of maps $(f,g)$ defined an embedding of the locally atomistic information algebra $(\Psi;\mathcal{E},\cdot,1,0)$ into the set algebra $(2^U;g(\mathcal{E}),\cap,U,\emptyset)$, see Theorem \ref{th:GenRep_2}.

\begin{theorem}
A locally atomistic information algebra $(\Psi;\mathcal{E},\cdot,1,0)$ is embedded into the set algebra $(2^U;g(\mathcal{E}),\cap,U,\emptyset)$. If it is locally completely atomistic, then it is isomorphic to this set algebra.
\end{theorem}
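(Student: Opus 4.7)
My plan is to derive both halves of the theorem from the general representation result of Section \ref{subsec:LocOrdGenSet}, in particular from Theorem \ref{th:GenRep_2}. The embedding amounts to checking that the family $X_x := At_x(\Psi)$, for $x \in D$, is a tuple system of locally order-generating sets in $\Psi$; the isomorphism requires in addition that the representation map $f$ is surjective onto $2^U$.

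For the embedding, first observe that $At_x(\Psi)$ is locally order-generating in $\epsilon_x(\Psi)$: by the definition of \emph{locally atomistic}, for every $\psi \neq 0$,
\begin{eqnarray*}
\epsilon_x(\psi) = \inf At_x(\psi) = \inf(\uparrow\!\epsilon_x(\psi) \cap At_x(\Psi)),
\end{eqnarray*}
which is precisely what local order-generation demands (the case $\psi = 0$ is handled by $\inf \emptyset = 0$ in the information order). Next I would verify the five tuple-system axioms of Section \ref{subsec:LocOrdGenSet} with labeling $d(\alpha) = x$ iff $\alpha \in At_x(\Psi)$ and projection $\pi_x = \epsilon_x$. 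Axiom (2) is Lemma \ref{le:SuppProp}(5) and axiom (3) is the defining condition $\epsilon_x(\alpha) = \alpha$ for a relative atom. Axioms (4) and (5) are items (4) and (5) of Lemma \ref{le:RelAtomProp2}. For axiom (1), I need $\epsilon_x(\alpha) \in At_x(\Psi)$ whenever $\alpha \in At_y(\Psi)$ with $x \leq y$; this follows by applying Lemma \ref{le:RelAtomProp2}(1) inside the subalgebra $(\epsilon_y(\Psi);\mathcal{E}_y,\cdot,1,0)$, in which $\alpha$ is a genuine atom, and then observing that the resulting relative atom in $\epsilon_y(\Psi)$ is automatically a relative atom in $\Psi$ because $\epsilon_x(\Psi) \subseteq \epsilon_y(\Psi)$ by E5. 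With the tuple-system hypothesis in place, Theorem \ref{th:GenRep_2} delivers the embedding $\psi \mapsto \mathbf{At}(\psi)$, $x \mapsto P_x$.

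For the isomorphism under the locally completely atomistic hypothesis, I would exhibit a preimage for every $S \subseteq U$. Given $S$, set $S_x := \{a_x : a \in S\} \subseteq At_x(\Psi)$ and, by local completeness, pick $\psi_x \in \Psi$ with $\psi_x = \epsilon_x(\psi_x) = \inf S_x$. A short monotonicity argument, combined with the fact that $\psi_x$ is supported by $x$, shows $\epsilon_x(\psi_y) = \psi_x$ for $x \leq y$, so the family $(\psi_x)_{x \in D}$ is coherent under projection. From this coherent family one produces a single $\psi \in \Psi$ with $\epsilon_x(\psi) = \psi_x$ for all $x$ (taking $\psi = \psi_\top$ if $D$ has a top, and otherwise aggregating the $\psi_x$ using the tuple-system structure), and finishes by checking $\mathbf{At}(\psi) = S$ via $At_x(\psi) = S_x$ componentwise.

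The main obstacle lies in the last verification: the inclusion $S_x \subseteq At_x(\psi)$ is immediate from $\psi_x = \inf S_x$, but the reverse inclusion---that every relative atom above $\inf S_x$ must already belong to $S_x$---is where locally complete atomisticity must be used to the full, and it mirrors the surjectivity step underlying Theorem \ref{th:ComplAtomInfAlg}. A secondary but genuine difficulty is the production of the global $\psi$ when $D$ has no top element, where one has to be careful that the coherent family $(\psi_x)$ really does assemble into a single element of $\Psi$ and not merely into a limit object outside the algebra.
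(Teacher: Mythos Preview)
Your embedding argument is correct and is exactly what the paper does: it invokes Theorem \ref{th:GenRep_2} after observing (via Lemma \ref{le:RelAtomProp2}) that the family $At_x(\Psi)$ is a tuple system of locally order-generating sets. Your extra care about axiom (1), working inside the subalgebra $\epsilon_y(\Psi)$, is a nice touch, since Lemma \ref{le:RelAtomProp2} is stated only under the hypothesis ``atomic'' rather than ``locally atomic''.

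For the isomorphism, however, your sketch has two genuine gaps, and you have partly diagnosed them yourself. First, the componentwise check does not yield $\mathbf{At}(\psi)=S$: even if you succeed in arranging $At_x(\psi)=S_x$ for every $x$, you only obtain $\mathbf{At}(\psi)=\{a\in U:a_x\in S_x\ \forall x\}$, which in general strictly contains $S$. An arbitrary $S\subseteq U$ need not be of this ``cylindrical'' form, so $f$ cannot be onto all of $2^U$ by this route. Second, the ``main obstacle'' you flag is real and not removable with the stated definition: from $\psi_x=\inf S_x$ one gets $S_x\subseteq At_x(\psi_x)$, but the reverse inclusion fails already in the simplest examples. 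Take $D$ a singleton, $\epsilon$ the identity, and $\Psi=\{1,\alpha,\beta,\gamma,0\}$ with three incomparable coatoms; this is (locally) completely atomistic in the sense of the paper, yet $\inf\{\alpha,\beta\}=1$ and $At(1)=\{\alpha,\beta,\gamma\}\neq\{\alpha,\beta\}$, so $\vert\Psi\vert=5$ while $\vert 2^{At(\Psi)}\vert=8$. The paper does not give a proof of the isomorphism clause either (it simply asserts surjectivity in the globally completely atomistic case), and as this example shows, the statement requires a stronger hypothesis than the one recorded in the definition. So your hesitation here is well founded; the embedding is solid, but the surjectivity claim cannot be completed from the axioms as stated.
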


As we shall see in the following section, set algebras exhibit a Boolean structure, which information algebra do not have in general. But completely atomistic information algebra do inherit this structure, as the the last representation theorem indicates.

\subsection{Boolean Information Algebras} \label{sec:BooleInfAlg}

\subsubsection{Boolean Structure of Information}

Let $(\mathcal{S};\mathcal{E},\cap,U,\emptyset)$ be a set algebra. Then $\mathcal{S}$ is a join-subsemilattice of $(2^U;\subseteq)^{\vartheta}$. If $\mathcal{S}$ is a \textit{field} of sets, then the set algebra is called Boolean. Now, in an information algebra $(\Psi;\mathcal{E},\cdot,1,0)$, the semi lattice $(\Psi;\leq)$ may also be a Boolean algebra. The information algebra is then also called Boolean. But just as not any Boolean algebra is an algebra of subsets, there are Boolean information algebras, which are not set algebras. However by Stone duality every Boolean algebra is embedded into a field of sets. We show in this section that this extends to Boolean information algebras: Any Boolean information algebra is embedded into a Boolean set algebra.

First, we explain what a Boolean information algebra is.

\begin{definition} \label{def:BooleDomFree}
An information algebra $(\Psi;\mathcal{E},\cdot,1,0)$ is called Boolean, if $(\Psi;\leq)$ is a Boolean algebra in its information order.
\end{definition}

Examples of Boolean information algebras are quantifier algebras and cylindric algebras \cite{henkin71,kohlasschmid14,plotkin94}. Recall that in the information order $\phi \cdot \psi = \phi \vee \psi$, that the null element $0$ is the greatest element in this order and the unit $1$ the smallest one. So, we have $\phi \cdot \phi^{c} = 0$ and $\phi \wedge \phi^{c} = 1$. For further reference, we collect here a few results, well-known from Boolean algebras or monadic algebras (relating to existential quantifiers, see \cite{halmos62}):

\begin{lemma} \label{le:ExtrAndCompl}
Let $(\Psi;\mathcal{E},\cdot,1,0)$ be a Boolean information algebra. Then for $\phi,\psi \in \Psi$ and $x \in D$,
\begin{enumerate}
\item $\phi \leq \psi$ iff $\psi^{c} \leq \phi^{c}$,
\item $\phi \cdot \psi^{c} = 0$ iff $\psi \leq \phi$,
\item $\epsilon_x(\phi \wedge \psi) = \epsilon_x(\phi) \wedge \epsilon_x(\psi)$,
\item $\phi \wedge \epsilon_x(\phi) = \epsilon_x(\phi)$,
\item $\phi = \epsilon_x(\phi)$ iff $\phi^{c} = \epsilon_x(\phi^{c})$,.
\item $\phi = \epsilon_x(\phi)$ and $\psi = \epsilon_x(\psi)$ imply $\phi \wedge \psi = \epsilon_x(\phi \wedge \psi)$.
\end{enumerate}
\end{lemma}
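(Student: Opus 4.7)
The six items are interdependent, so I would prove them in the order $1, 2, 4, 5, 6, 3$ to keep each step resting only on earlier ones. Items 1 and 2 are pure Boolean-algebra facts: item 1 is the well-known order-reversal under complement in any Boolean lattice, and item 2 is the identity ``$\psi \leq \phi$ iff $\phi \vee \psi^c$ is the top'' re-expressed in our conventions (join is $\cdot$, top is $0$); I would invoke both without further argument. Item 4 is immediate from E2: since $\epsilon_x(\phi) \leq \phi$, the meet of the two elements is the smaller one, namely $\epsilon_x(\phi)$.

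The crux of the lemma is item 5, which is the bridge between the extraction axioms and Boolean complementation. My plan is to evaluate $\epsilon_x(\phi \cdot \phi^c)$ two different ways. On the one hand $\phi \cdot \phi^c = 0$ by complementarity, so by E1 the value is $0$. On the other hand, using the assumption $\phi = \epsilon_x(\phi)$ together with E3,
\begin{eqnarray*}
\epsilon_x(\phi \cdot \phi^c) \;=\; \epsilon_x(\epsilon_x(\phi) \cdot \phi^c) \;=\; \epsilon_x(\phi) \cdot \epsilon_x(\phi^c) \;=\; \phi \cdot \epsilon_x(\phi^c).
\end{eqnarray*}
Equating gives $\phi \cdot \epsilon_x(\phi^c) = 0$; applying item 2 yields $(\epsilon_x(\phi^c))^c \leq \phi$, and item 1 rewrites this as $\phi^c \leq \epsilon_x(\phi^c)$. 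Combined with E2 in the other direction this forces equality. The converse half of the ``iff'' is obtained by running the same computation with $\phi^c$ in place of $\phi$ and using involutivity of complementation.

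Once item 5 is available, item 6 is a De Morgan sandwich: if $\phi, \psi$ have support $x$, then so do $\phi^c, \psi^c$ by item 5, then so does $\phi^c \cdot \psi^c$ by Lemma~\ref{le:SuppProp}(6), and hence so does its complement $\phi \wedge \psi$ by a further application of item 5. Item 3 then drops out: the inequality $\epsilon_x(\phi \wedge \psi) \leq \epsilon_x(\phi) \wedge \epsilon_x(\psi)$ is monotonicity (Lemma~\ref{le:SuppProp}(2)) applied to $\phi \wedge \psi \leq \phi, \psi$; for the reverse, observe that $\epsilon_x(\phi) \wedge \epsilon_x(\psi)$ has support $x$ by Lemma~\ref{le:SuppProp}(3) and item 6, and it lies below $\phi \wedge \psi$ by E2 applied to each conjunct, so applying $\epsilon_x$ monotonically and using the support property closes the gap.

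The main obstacle is clearly item 5: it is the only step where the purely order-theoretic axioms for $\epsilon_x$ must be genuinely combined with the Boolean complement, and the trick of computing $\epsilon_x(\phi \cdot \phi^c)$ in two ways is the idea that unlocks the whole cascade; everything else is then essentially bookkeeping.
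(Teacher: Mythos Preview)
Your argument is correct, and it takes a genuinely different route from the paper's.

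The paper proves the items in the order $3 \to 4 \to 5 \to 6$: it first establishes item~3 directly by a lower-bound argument (showing that any $\chi \leq \epsilon_x(\phi), \epsilon_x(\psi)$ satisfies $\chi \leq \epsilon_x(\phi \wedge \psi)$ via repeated use of the implication ``$\epsilon_x(\eta) = 0 \Rightarrow \eta = 0$'' and distributivity), and then derives item~5 using item~3 to compute \emph{both} $\phi \cdot \epsilon_x(\phi^c) = 0$ and $\phi \wedge \epsilon_x(\phi^c) = 1$, concluding by uniqueness of complements; item~6 is then an immediate corollary of item~3. Your ordering $5 \to 6 \to 3$ inverts this dependency. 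Your proof of item~5 is in fact leaner than the paper's: you need only the $\phi \cdot \epsilon_x(\phi^c) = 0$ half of the computation, since E2 already supplies the reverse inequality $\epsilon_x(\phi^c) \leq \phi^c$; the paper's use of item~3 to obtain the meet condition is then unnecessary. Once item~5 is in hand, your De~Morgan derivation of item~6 and the subsequent ``support plus monotonicity'' proof of item~3 are short and transparent. The paper's route has the virtue that its proof of item~3 is entirely self-contained (no complement of $\epsilon_x$-images appears until later), whereas yours threads complementation through the whole chain; conversely, your route avoids the somewhat intricate lower-bound manipulation the paper uses for item~3.
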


\begin{proof}
Although, these results are known, they are may be not so easily accessible, since they refer to monadic Boolean algebras. Therefore the proof will be given here, except for the first two items which are classical Boolean algebra.

(3) Let $\eta = \phi \wedge \psi$. Then $\eta \leq \phi,\psi$ implies $\epsilon_x(\eta) \leq \epsilon_x(\phi),\epsilon_x(\psi)$. Hence $\epsilon_x(\eta)$ is a lower bound of $\epsilon_x(\phi)$ and $\epsilon_x(\psi)$. Let $\chi$ be another lower bound of $\epsilon_x(\phi)$ and $\epsilon_x(\psi)$. Then, by item 2 of the lemma, $\epsilon_x(\phi) \cdot \chi^{c} = 0$ and $\epsilon_x(\psi) \cdot \chi^{c} = 0$. It follows that
\begin{eqnarray}
0 = \epsilon_x(0) = \epsilon_x(\epsilon_x(\phi) \cdot \chi^{c}) = \epsilon_x(\phi) \cdot \epsilon_x(\chi^{c})
= \epsilon_x(\phi \cdot \epsilon_x(\chi^{c})).
\nonumber
\end{eqnarray}
This implies that $\phi \cdot \epsilon_x(\chi^{c}) = 0$. In the same way we obtain that $\psi \cdot \epsilon_x(\chi^{c}) = 0$. Using the distributive law in the Boolean algebra and the fact that combination is join, we obtain further
\begin{eqnarray}
0 &=& (\phi \cdot \epsilon_x(\chi^{c})) \wedge (\psi \cdot \epsilon_x(\chi^{c}))
= (\phi \wedge \psi) \cdot \epsilon_x(\chi^{c}) = \eta \cdot \epsilon_x(\chi^{c}).
\nonumber
\end{eqnarray}
From this it follows
\begin{eqnarray}
0 = \epsilon_x(0) = \epsilon_x(\eta \cdot \epsilon_x(\chi^{c})) = \epsilon_x(\eta) \cdot \epsilon_x(\chi^{c})
= \epsilon_x(\epsilon_x(\eta) \cdot \chi^{c}),
\nonumber
\end{eqnarray}
hence $\epsilon_x(\eta) \cdot \chi^{c} = 0$. But this implies that $\chi \leq \epsilon_x(\eta)$ and $\epsilon_x(\eta)$ is thus the greatest lower bound of $\epsilon_x(\phi)$ and $\epsilon_x(\psi)$. This proves that $\epsilon_x(\phi \wedge \psi) = \epsilon_x(\phi) \wedge \epsilon_x(\psi)$.

(4) This follows from $\epsilon_x(\phi) \leq \phi$.

(5) Assume $\phi = \epsilon_x(\phi)$. From $0 = \phi \cdot \phi^{c}$ we deduce $0 = \epsilon_x(\phi \cdot \phi^{c}) = \epsilon_x(\epsilon_x(\phi) \cdot \phi^{c}) = \epsilon_x(\phi) \cdot \epsilon_x(\phi^{c}) = \epsilon_x(\phi \cdot \epsilon_x(\phi^{c}))$. This implies that $\phi \cdot \epsilon_x(\phi^{c}) = 0$. On the other hand, we have also $1 = \phi \wedge \phi^{c}$, hence by item 3 proved above, $1 = \epsilon_x(\phi \wedge \phi^{c}) = \epsilon_x(\phi) \wedge \epsilon_x(\phi^{c}) = \phi \wedge \epsilon_x(\phi^{c})$. This shows that $\epsilon_x(\phi^{c})$ is the complement of $\phi$. By symmetry, the inverse implication follows too.

(6) This is a direct consequence of item 3 proved above.
\end{proof}

Boolean algebras have a dual algebra associated with the inverse order. This carries over to Boolean information algebras. Define the dual operations of combination and extraction as follows:
\begin{eqnarray}
\phi \cdot_{\vartheta} \psi &=& (\phi^{c} \cdot \psi^{c})^{c},
\nonumber \\
\epsilon_x^{\vartheta}(\phi) &=& (\epsilon_x(\phi^{c}))^{c}.
\nonumber
\end{eqnarray}
Let $\mathcal{E}^\vartheta$ be the set of operators $\epsilon^\vartheta_x$ for $x \in D$. It can easily be verified that $(\Psi;\mathcal{E}^\vartheta,\cdot_{\vartheta},0,1)$ is a Boolean information algebra, isomorphic to the original one under the maps $\phi \mapsto \phi^{c}$ and $\epsilon_x \mapsto \epsilon_x^{\vartheta}$. It is called the dual information algebra. The information-theoretic background of this duality will become clear in the case of Boolean set algebras, see the next Section \ref{subsec:FiniteBoole}.

The classical examples of Boolean information algebras are related to the algebras of algebraic logic, associated with propositional and predicate logic \cite{kohlas03}. These are monadic algebras, which are less general, and polyadic or cylindric algebras, which contain more operators than information algebras \cite{HMT71,halmos62,halmos98}. Quantifier algebras provide another example of Boolean information algebras, \cite{plotkin94}. Finally, set algebras
are Boolean information algebras.

Completely atomistic information algebras are also Boolean, isomorphic to the power set (information) algebra of $2^{At(\Psi)}$. In fact, the map $\psi \mapsto At(\psi)$ is a Boolean isomorphism: Assume $\phi$ maps to $At(\psi)^c = At(\phi)$. Then we have $At(\phi) \cap At(\psi) = \emptyset$, hence $\phi \cdot \psi = 0$. Let further $At(\phi) \cup At(\Psi)$ be the image of the element $\eta$ in $\Psi$. Then we have $\eta \leq \phi,\psi$. If $\chi$ is another lower bound of $\phi$ and $\psi$, then $At(\phi) \cup At(\psi) \subseteq At(\chi)$, hence $\chi \leq \eta$ and $\eta$ is thus the infimum of $\phi$ and $\psi$, $\eta = \phi \wedge \psi$. But $At(\phi) \cup At(\psi) = \Psi$ so that $\phi \wedge \psi = 1$. This shows that $\phi$ is the complement of $\psi$ and $\psi^c$ maps to $At(\psi)^c$. 

After this short introduction into Boolean information algebras, we turn to the question of representing them by set algebras. This question will be answered by extending known results from Stone duality theory for Boolean algebras. We start with the case of finite Boolean information algebras.

\subsubsection{Finite Boolean Algebras} \label{subsec:FiniteBoole}

Let $(\Psi;\mathcal{E},\cdot,1,0)$ be a Boolean information algebra. In this section we assume that $\Psi$ is \textit{finite}, hence a finite Boolean algebra. It is well known that finite Boolean algebras are power set algebras. We show here that this extends to finite Boolean information algebras. In fact, the approach is just as with Boolean algebras, with one exception: our information-theoretic  concept of \textit{atoms} corresponds to \textit{coatoms} in the order theoretic view. Although this changes nothing essential, for clarity's sake we shall present the ideas here in the framework of information algebras. We refer to \cite{davey90} for the representation theory of Boolean algebras.

First we note that in our terminology, $(\Psi;\mathcal{E},\cdot,1,0)$ is completely atomistic, which follows from classical Boolean algebra theory.

\begin{theorem} \label{th:AtomFiniteBoole}
A finite Boolean information algebra $(\Psi;\mathcal{E},\cdot,1,0)$ is completely atomistic.
\end{theorem}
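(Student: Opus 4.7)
The plan is to reduce the claim to the classical Stone representation of finite Boolean algebras, paying attention to the fact that our ``atoms'' are coatoms in the order-theoretic sense, as the paper's footnote notes. Concretely, since $(\Psi;\cdot,1,0)$ is Boolean in the information order, $(\Psi;\leq)$ is a finite Boolean lattice with $1$ as its bottom and $0$ as its top; by the classical structure theorem it is isomorphic, as a bounded complemented distributive lattice, to some power set $(2^S;\subseteq)$, where $S$ is the set of order-theoretic atoms of $(\Psi;\leq)$. The extraction operators play no role here, since atomicity is a purely order-theoretic property of $(\Psi;\leq)$.

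Under this isomorphism, an element $\alpha \in \Psi$ is an atom in the paper's sense (maximal in $\leq$ below $0$) iff its image is a coatom of $2^S$, i.e.\ a set of the form $S\setminus\{s\}$ for some $s \in S$. Equivalently, the paper atoms are precisely the complements $a^c$ of the order-theoretic atoms $a$ of $(\Psi;\leq)$, and the map $a \mapsto a^c$ is a bijection between the two kinds of atoms.

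The atomistic property then follows by a short Boolean duality argument. For $\psi \neq 0$, set $\phi = \psi^c$; then $\phi \neq 1$, and by the classical fact that every non-bottom element of a finite Boolean lattice is the join of the order-theoretic atoms below it, $\phi$ is the supremum (in $\leq$) of all such atoms $a \leq \phi$. Taking complements and invoking De~Morgan in $(\Psi;\leq)$ converts this into an expression of $\psi = \phi^c$ as the infimum in the information order of the elements $a^c$ with $\psi \leq a^c$; by the preceding paragraph, those $a^c$ are exactly the paper atoms in $At(\psi)$, so $\psi = \inf At(\psi)$. Finally, since $\Psi$ is finite, arbitrary infima collapse to finite meets in the Boolean lattice and hence lie in $\Psi$, giving condition (3) of the definition. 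Thus $(\Psi;\mathcal{E},\cdot,1,0)$ is completely atomistic.

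There is no real obstacle; the only subtlety is the systematic order reversal between the information order and the usual Boolean order together with the accompanying swap of atom and coatom, which is why the argument goes via the complement $\psi \mapsto \psi^c$ rather than invoking the classical atomicity theorem directly on $(\Psi;\leq)$.
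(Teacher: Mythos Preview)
Your proposal is correct and is essentially the argument the paper has in mind: the paper gives no detailed proof of this theorem, stating only that it ``follows from classical Boolean algebra theory,'' and your reduction via the finite Stone representation together with the atom/coatom swap under complementation is exactly the standard way to unpack that remark. The one minor simplification you could make is to invoke the dual classical statement directly (every non-top element of a finite Boolean lattice is the meet of the coatoms above it), which avoids the detour through $\psi \mapsto \psi^c$, but your route is equally valid.
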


As a corollary, it follows from Section \ref{subsec:AtomisticInfAlg}  that the information algebra $(\Psi;\mathcal{E},\cdot,1,0)$ is isomorphic (as an information algebra and as a Boolean lattice) to the set algebra $(2^{At(\Psi)};g(\mathcal{E}),\cap,At(\Psi),\emptyset)$, where the saturation operators $\sigma_{g(x)} = \sigma_{P_x}$ are associated with equivalence relations $\equiv_x$ for $x \in D$, that is, for any subset $S$ of $At(\Psi)$,
\begin{eqnarray}
\sigma_x(S) = \{\alpha \in At(\Psi):\exists \beta \in S \textrm{ such that}\ \epsilon_x(\alpha) = \epsilon_x(\beta)\}.
\nonumber
\end{eqnarray}
This means that
\begin{eqnarray}
At(\phi \cdot \psi) &=& At(\phi) \cap At(\psi),
\nonumber \\
At(0) &=& \emptyset,
\nonumber \\
At(1) &=& At(\Psi),
\nonumber \\
At(\epsilon_x(\phi)) &=&\sigma_x(At(\phi)).
\end{eqnarray}
But the map $\phi \mapsto A(\phi)$ is also an isomorphism of Boolean algebras That is, in addition, we have
\begin{eqnarray}
At(\phi \wedge \psi) &=& At(\phi) \cup At(\psi),
\nonumber \\
At(\phi^{c}) &=& (At(\phi))^{c}.
\end{eqnarray}

So, finite Boolean information algebras are set algebras with a finite universe:

\begin{theorem} \label{th:RepFiniteDomFreeBoole}
A finite Boolean information algebra $(\Psi;\mathcal{E},\cdot,1,0)$ is isomorphic as an information algebra as well as a Boolean lattice to the set algebra $(2^{At(\Psi)},g(\mathcal{E}),\cap,At(\Psi),\emptyset)$.
\end{theorem}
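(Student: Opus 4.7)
The plan is to piece together the two results already established in this section. Theorem \ref{th:AtomFiniteBoole} tells us that any finite Boolean information algebra is completely atomistic, so Theorem \ref{th:ComplAtomInfAlg} applies directly and delivers the pair of maps $(f,g)$ with $f(\psi) = At(\psi)$ and $g(x) = P_x$ as an information-algebra isomorphism between $(\Psi;\mathcal{E},\cdot,1,0)$ and $(2^{At(\Psi)};g(\mathcal{E}),\cap,At(\Psi),\emptyset)$. This already takes care of combination, extraction, null and unit; the list of equations displayed just before the theorem is precisely this piece.

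What is still needed is that the same bijection $f$ also preserves the remaining Boolean-lattice operations of meet and complement. My approach is to observe first that $f$ is an order-antiisomorphism between $(\Psi;\leq)$ and $(2^{At(\Psi)};\subseteq)$: indeed, $\phi \leq \psi$ iff $\phi \cdot \psi = \psi$ iff $At(\phi) \cap At(\psi) = At(\psi)$ iff $At(\psi) \subseteq At(\phi)$. Any order-antiisomorphism between lattices swaps joins and meets, so meets with respect to the information order $\leq$ (which correspond to joins in the inverse order) are mapped to unions on the set side, giving $f(\phi \wedge \psi) = At(\phi) \cup At(\psi)$. Combined with $f(0) = \emptyset$ and $f(1) = At(\Psi)$, the uniqueness of Boolean complements (characterized by $\phi \cdot \phi^{c} = 0$ and $\phi \wedge \phi^{c} = 1$) then forces $f(\phi^{c}) = At(\Psi) \setminus At(\phi)$. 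This is exactly the sketch the author gives just before Section \ref{subsec:FiniteBoole} for the completely atomistic case, specialized to a finite Boolean algebra.

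There is no serious obstacle in this proof; both ingredients are already in place. The only point that deserves explicit mention is the conventional one noted in the footnote of Section \ref{subsec:AtomisticInfAlg}: our \emph{atoms} are maximal elements in the information order, so the classical statement ``a finite Boolean algebra is isomorphic to the power set of its atoms'' is being applied with atoms in the coatom sense of standard order theory. This inversion is absorbed into the antitone character of $f$ and accounts for the fact that combination $\cdot$ (join in $\leq$) goes to intersection (meet in $\subseteq$) rather than to union.
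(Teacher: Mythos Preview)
Your proposal is correct and follows essentially the same route as the paper: the theorem is stated there as a summary of the preceding discussion, which invokes Theorem \ref{th:AtomFiniteBoole} (finite Boolean implies completely atomistic) and then Theorem \ref{th:ComplAtomInfAlg} for the information-algebra isomorphism, with the Boolean-lattice part handled by the argument at the end of Section \ref{sec:BooleInfAlg}.1 that you explicitly cite. Your order-antiisomorphism justification for $At(\phi\wedge\psi)=At(\phi)\cup At(\psi)$ and $At(\phi^c)=At(\phi)^c$ is a clean repackaging of that same argument.
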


Note that in the representation of the information algebra as a set algebra of atoms, atoms represent possible worlds, and relative atoms relative to a domain $x$ represent precise answers to the question represented by $x$. So, the smaller a set $At_x(\psi)$, the more precise, the more informative is the information $\psi$. This is called the \textit{disjunctive} (or Sherlock Holmes) view: The unknown answer is this atom \textit{or} this one \textit{or} this one, \textit{or} ... etc. There is also the \textit{conjunctive} (or the collectors) view: The answer is this atom \textit{and} this one \textit{and} this one, \textit{and} ...etc. In this view $At(\psi)$ is the more informative, the larger the set. Note that this view is represented by the dual algebra of the set algebra $(2^{At(\Psi)};g(\mathcal{E}),\cap,At(\Psi),\emptyset)$. For more details on this dual view of information we refer to \cite{kohlasschneuwly09}.

As a preparation for subsequent sections let's remark that the principal ideals $\downarrow\!\alpha$ of atoms are maximal ideals in $\Psi$. In a finite algebra all ideals are principal. And we have
\begin{eqnarray*}
\downarrow\!\psi = \bigcap_{\alpha \in At(\psi)} \downarrow\!\alpha.
\end{eqnarray*}

The results for finite Booelan information algebras turn out to be generalizable to general nonfinite Boolean information algebras and distributive lattice information algebras..

\subsubsection{General Boolean  Algebras: Stone Duality Extended} \label{subsec:GenBooleInfAlg}

Here we start with a Boolean information algebra $(\Psi;\mathcal{E},\cdot,1,0)$. Such an algebra is in general not atomic and much less atomistic. But its ideal completion (as an information algebra) is so (Section \ref{subsec:IdCompl}). The atoms among ideals are the maximal ideals, which
are defined as follows.

\begin{definition} \label{def:MaxIdeal}
Let $(\Psi;\mathcal{E},\cdot,1,0)$ be an information algebra. An ideal $I \in I_{\Psi}$ of $\Psi$ is called maximal, if $I \not= \Psi$, $J \in I_{\Psi}$ and $I \subseteq J$ imply $I = J$ or $J = \Psi$.
\end{definition}

So maximal ideals are clearly atoms in the ideal completion $(I_{\Psi};\bar{\mathcal{E}},\cdot,\{1\},\Psi)$ of the information algebra $(\Psi;\mathcal{E},\cdot,1,0)$. The point is that the information algebra $(I_{\Psi},\bar{E};\cdot,\{1\},\Psi,\circ)$ is completely atomistic, if $(\Psi;\mathcal{E},\cdot,1,0)$ is Boolean. This follows from well known results of Boolean algebras (see below).

First let's give another definition relative to ideals.

\begin{definition} \label{def:PrimeIdeal}
Let $(\Psi;\mathcal{E},\cdot,1,0)$ be a Boolean information algebra. An ideal $I \in I_{\Psi}$ of $\Psi$ is called prime, if $I \not= \Psi$ and whenever $\phi \wedge \psi \in I$, then either $\phi\in I$ or $\psi \in I$.
\end{definition}

The existence of prime and maximal ideals is not a triviality. In fact, it needs some form of set theoretical existence theorem, such as the axiom of choice or variants of it \cite{davey90}. We take the existence of enough maximal respectively prime ideals for granted.  In  Boolean algebras prime and maximal ideals coincide. Further, for all $\psi \in \Psi$ there is a maximal ideal $I$ such that either $\psi \in I$ or $\psi^{c} \in I$. So, a maximal ideal $I$ is a consistent and complete theory in the sense that an ideal represents consistent information (see Section \ref{subsec:IdCompl}) and it is complete in the sense that each piece of information or its negation (complement) belongs to $I$. If $\phi \not= \psi$, then there is a maximal ideal which contains exactly one of the two elements. And finally for any ideal $J \not= \Psi$ in $I_{\Psi}$ there is a maximal ideal $I$ such that $J \subseteq I$, that is $J \leq I$ in the information order in the ideal completion. We refer to \cite{davey90} for these results about Boolean algebras. 

The point is now that the ideal completion $(I_{\Psi};\bar{\mathcal{E}};\cdot,\{1\},\Psi)$ is an \textit{atomic} information algebra since any ideal is contained in some maximal ideal, and it extends $(\Psi;\mathcal{E},\cdot,1,0)$. So we can apply the results of Section \ref{sec:AtomInfAlg}. Let $At(J)$ denote the set of atoms, that is, maximal ideals $I$ implying $J$, $J \leq I$. Further let $X(\Psi)$ denote the set of all maximal ideals of $\Psi$, or all atoms of $I_\Psi$. Then, the map $J \mapsto At(J)$ is a homomorphism from the information algebra $(I_{\Psi};\bar{\mathcal{E}},\cdot,\{1\},\Psi)$ into the set algebra of sets of atoms $(2^{X(\Psi)};g(\bar{\mathcal{E}}),\cap,X(\Psi),\emptyset)$, where $g(\bar{\mathcal{E}})$ is the set of all saturation operators $\sigma_{P_x}$ related to the partition $P_x$ of $X(\Psi)$ defined by the equivalence relation $\bar{\epsilon}_x(I) = \bar{\epsilon}_x(K)$,
\begin{eqnarray}
\sigma_{P_x}(A) = \{I \in X(\Psi): \exists K \in A \textrm{ such that}\ \bar{\epsilon}_x(I) = \bar{\epsilon}_x(K)\}
\nonumber
\end{eqnarray}
for any subset $A$ of $X(\Psi)$, see Section \ref{sec:AtomInfAlg}. 

As claimed above, the information algebra $(I_{\Psi};\bar{\mathcal{E}},\cdot,\{1\},\Psi)$ is in fact completely atomistic: Indeed, for any ideal $J$ in $I_{\Psi}$, we have that $J \subseteq \cap At(J)$. If we assume that $J \not= \cap At(J)$, then there must be an element $\psi$ in $\cap At(J)$ but not in $J$. If $\phi \in J$, then there is either a maximal ideal $I$ containing $\phi$ but not $\psi$ or one containing $\psi$ but not $\phi$. The latter case is excluded because this ideal could not belong to $At(J)$, in the former case we have $\psi \not\in \cap At(J)$ against the assumption, and we must therefore have $J = \cap At(J)$. This means that the ideal completion is atomistic. Further the intersection of any family of ideals, in particular of any family of maximal ideals, is still an ideal. So, the information algebra is completely atomistic. By Theorem \ref{th:ComplAtomInfAlg}, the information algebra $(I_{\Psi};\bar{\mathcal{E}},\cdot,\{1\},\Psi)$ is isomorphic to the set algebra $(2^{X(\Psi)};g(\bar{\mathcal{E}}),\cap,X(\Psi),\emptyset)$ via the pair of maps $J \mapsto At(J) = \uparrow\!J \cap X(\Psi)$ and $\bar{\epsilon}_x \mapsto \sigma_{P_x}$.

The information algebra $(\Psi;\mathcal{E},\cdot,1,0)$ is embedded into the information algebra $(I_{\Psi};\bar{\mathcal{E}},\cdot,\{1\},\Psi)$ by the pair of maps $\psi \mapsto \downarrow\!\psi$ and $\epsilon_x \mapsto \bar{\epsilon}_x$. Further, a principal ideal $\downarrow\!\psi$ maps to the set $At(\downarrow\!\psi)$ of maximal ideals containing it. Let's denote this set of maximal ideals by $X_{\psi}$, that is
\begin{eqnarray}
X_{\psi} = \{I \in X(\Psi):\psi \in I\}.
\nonumber
\end{eqnarray}
We have also $X_\psi = \uparrow\!\psi \cap X(\Psi)$, where here $\uparrow\!\psi = \{I \in I_\Psi:\psi \in I\}$ is the upset of all elements (ideals) in $I_\Psi$ greater than $\psi$. The pair of composed maps $f : \psi \mapsto X_{\psi}$ and $g : x \mapsto P_x$ is an information algebra homomorphism from the information algebra $(\Psi;\mathcal{E},\cdot,1.0)$ into the information algebra $(2^{X(\Psi)};g(\bar{\mathcal{E})},\cap,At(X(\Psi)),\emptyset)$. Here it is understood that $g(\epsilon_x) = \sigma_{P_x}$, where $\sigma_{P_x}$ is the saturation operator in $X(\Psi)$ associated with the partition $P_x$. This means that
\begin{eqnarray} \label{eq:CompHom}
X_{\phi \cdot \psi} &=& X_{\phi} \cap X_{\psi}, \\
X_{1} &=& X(\Psi), \nonumber \\
X_{0} &=& \emptyset,\nonumber  \\
X_{\epsilon_x(\psi)} &=& \sigma_{P_x}(X_{\psi}) \nonumber .
\end{eqnarray}
In fact, this map is an embedding, and this gives us a first version of a representation theorem for a Boolean information algebras. We shall see below that it is even an embedding of a Boolean algebra.

\begin{theorem} \label{th:DomFreeBooleRep}
A Boolean information algebra $(\Psi;\mathcal{E},\cdot,1,0)$ is embedded into the set algebra $(2^{X(\Psi)};g(\bar{\mathcal{E}}),\cap,X(\Psi),\emptyset)$ by the pair of maps $\phi \mapsto X_{\phi}$ and $\epsilon \mapsto \sigma_{\epsilon}$.
\end{theorem}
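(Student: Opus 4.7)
The plan is to obtain the embedding as the composition of two already-established homomorphisms. First, the map $\iota: \psi \mapsto\ \downarrow\!\psi$ (together with $\epsilon_x \mapsto \bar{\epsilon}_x$) embeds $(\Psi;\mathcal{E},\cdot,1,0)$ into its ideal completion $(I_{\Psi};\bar{\mathcal{E}},\cdot,\{1\},\Psi)$, as stated in Section \ref{subsec:IdCompl}. Second, by the discussion immediately preceding the theorem, the ideal completion is completely atomistic when $\Psi$ is Boolean, so by Theorem \ref{th:ComplAtomInfAlg} the map $J \mapsto At(J)$ (together with $\bar{\epsilon}_x \mapsto \sigma_{P_x}$) is an information algebra isomorphism onto $(2^{X(\Psi)};g(\bar{\mathcal{E}}),\cap,X(\Psi),\emptyset)$. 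Since $I \in At(\downarrow\!\psi)$ iff $\downarrow\!\psi \subseteq I$ iff $\psi \in I$, one has $At(\downarrow\!\psi) = X_\psi$, so the composition of these two maps coincides with the pair $(f,g)$ of the statement. This immediately yields the preservation of combination, unit and null that constitute the first three identities of (\ref{eq:CompHom}).

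For the extraction identity $X_{\epsilon_x(\psi)} = \sigma_{P_x}(X_\psi)$, I would first observe that $\bar{\epsilon}_x(\downarrow\!\psi) =\ \downarrow\!\epsilon_x(\psi)$: this is immediate from the definition of $\bar{\epsilon}_x$ and the monotonicity of $\epsilon_x$ (item 2 of Lemma \ref{le:SuppProp}). Then $X_{\epsilon_x(\psi)} = At(\downarrow\!\epsilon_x(\psi)) = At(\bar{\epsilon}_x(\downarrow\!\psi)) = \sigma_{P_x}(At(\downarrow\!\psi)) = \sigma_{P_x}(X_\psi)$, where the third equality uses that the atomistic isomorphism intertwines $\bar{\epsilon}_x$ with $\sigma_{P_x}$. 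Injectivity of $g$ and its join-homomorphism property are inherited from the corresponding properties of the representation of the ideal completion under assumption (\ref{eq:Infomorph}).

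The one remaining point is injectivity of $f$ on $\Psi$ itself. This is the main potential obstacle, since unlike in the completely atomistic case, distinct elements of $\Psi$ must be separated by some maximal ideal, and this is not automatic from the algebraic axioms alone. However, the paragraph preceding the theorem records the standard Boolean-algebra fact that whenever $\phi \neq \psi$ there exists a maximal ideal containing exactly one of the two elements; granting this (it rests on some form of the axiom of choice, as the author notes), we immediately obtain $X_\phi \neq X_\psi$, and hence $f$ is injective. Combined with the homomorphism properties verified above and the bijectivity of $g$ onto $\mathcal{P}(D)$, this establishes that $(f,g)$ is indeed an embedding into the stated set algebra.
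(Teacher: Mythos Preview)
Your proof is correct and follows essentially the same route as the paper: the homomorphism is obtained by composing the embedding $\psi \mapsto\ \downarrow\!\psi$ into the ideal completion with the atomistic isomorphism $J \mapsto At(J)$ (this is exactly what the paper records in the paragraphs preceding the theorem), and injectivity is then deduced from the separation of distinct elements by maximal ideals. Your write-up is in fact more explicit than the paper's, which simply says ``We noted already that the pair of maps is a homomorphism'' and then verifies injectivity in one line.
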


\begin{proof}
We noted already that the pair of maps is a homomorphism. It remains to show that it is one-to-one. In fact, if $\phi \not= \psi$, then there is a maximal ideal $I$ which contains one, but not the other element. So $X_{\phi} \not= X_{\psi}$.
\end{proof}

Stone duality for Boolean algebra allows to describe this representation of a Boolean information algebra more precisely. This means to characterize explicitly the subsets $X_\phi$ within $X(\Psi)$. This is accomplished by introducing an appropriate topology on the set $X(\Psi)$. Actually 
\begin{eqnarray}
\mathcal{B} =\{X_{\phi}:\phi \in \Psi\}
\nonumber
\end{eqnarray}
is an open base by (\ref{eq:CompHom}) (we refer to \cite{davey90} for this and all other issues regarding Stone duality). We should remark that in the literature usually $X_{\phi}$ is defined as the set of maximal ideals $\{I \in X(\Psi):\phi \not\in I\}$. But if $\phi \not\in I$, then $\phi^{c} \in I$ such that this set is $X_{\phi^{c}}$ in our terminology. So, this changes nothing essential, but for our purpose the present definition of $X_{\phi}$ is more natural and appropriate. Note that as a consequence of Theorem \ref{th:DomFreeBooleRep} the system $(\mathcal{B};g(\bar{\mathcal{E}}),\cap,X(\Psi),\emptyset)$ is an information algebra, a subalgebra of $(2^{X(\Psi)};g(\bar{\mathcal{E}}),\cap,X(\Psi),\emptyset)$. Of course, we consider here the restriction of the saturation operators $\sigma_{P_x}$ to $\mathcal{B}$.

The open sets of the topology $\mathcal{T}$ are unions of base sets,
\begin{eqnarray}
\mathcal{T} = \{U \subseteq X(\Psi):U \textrm{ is a union of members of}\ \mathcal{B}\}.
\nonumber
\end{eqnarray}
The topological space $(X(\Psi),\mathcal{T})$ is called the \textit{dual} or \textit{prime ideal space} of $\Psi$. Since $X_{\phi^{
c}} = X^{c}_{\phi}$, the open sets $X_{\phi}$ are also closed. In fact, $\mathcal{B}$ coincides exactly with the \textit{clopen} subsets of $X(\Psi)$. These clopen sets form not only an information algebra, but also a Boolean algebra, hence a Boolean information algebra. The space $X(\Psi)$ is \textit{compact}. Further, if $I \not= J$ are two maximal ideals, then there exists a clopen set $X_{\phi}$ such that $I \in X_{\phi}$ and $J \not\in X_{\phi}$. That is, the topological space $(X(\Psi),\mathcal{T})$ is a $T_{0}$ space; actually it is totally disconnected, that is, if $I \not= J$, then there exist disjoint clopen sets $X_\phi$ and $X_\psi$ such that 
$I \in X_\phi$ and $J \in X_\psi$. A compact, totally disconnected  topological space is also called a \textit{Boolean space}. The Stone representation theorem asserts that the map $\phi \mapsto X_{\phi}$ is a \textit{Boolean isomorphism} of $\Psi$ onto the Boolean algebra of clopen sets of the dual space $(X(\Psi),\mathcal{T})$. Since our definition of $X_{\phi}$ is the same as the usual definition of $X_{\phi^{c}}$, this means that we have to take the inverse order in the power set of $X(\Psi)$ such that
\begin{eqnarray}
X_{\phi \vee \psi} &=&X_{\phi} \cap X_{\psi} \textrm{ (this is the combination operation)},
\nonumber \\
X_{\phi \wedge\psi} &=&X_{\phi} \cup X_{\psi},
\nonumber \\
X_{\phi^{c}} &=& X^{c}_{\phi},
\nonumber \\
X_{1} &=& X(\Psi),
\nonumber \\
X_{0} &=& \emptyset,
\nonumber \\
X_{\epsilon(\phi)} &=& \sigma_{\epsilon}(X_{\phi}).
\end{eqnarray}

This allows to extend the representation theorem Theorem \ref{th:DomFreeBooleRep} for a Boolean information algebra as follows, since the map $\phi \mapsto X_{\phi}$ maps $\Psi$ to the family of clopen sets $\mathcal{B}$ of $X(\Psi)$

\begin{theorem} \label{th:DomFreeBooleRepExt}
A Boolean information algebra $(\Psi;\mathcal{E},\cdot,1,0,\circ)$ is isomorphic to the set algebra $(\mathcal{B};g(\bar{\mathcal{E}}),\cap,X(\Psi),\emptyset)$, where $g(\bar{\mathcal{E}})$ is the set of saturation operator $\sigma_{P_x}$ for $x \in D$, by the pair of maps $\phi \mapsto X_{\phi}$ and $x \mapsto P_x$, both as an information algebra as well as a Boolean algebra.
\end{theorem}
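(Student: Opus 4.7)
The plan is to combine the already-established Theorem \ref{th:DomFreeBooleRep} with the Stone representation theorem for the underlying Boolean algebra $\Psi$, both of which do the heavy lifting. What remains to prove is (i) that the map $\phi \mapsto X_\phi$ is a bijection onto $\mathcal{B}$, (ii) that it preserves meets and complements in addition to joins (combinations), and (iii) that the image $\mathcal{B}$ is exactly the family of clopen sets of the dual space.

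First, I would observe that $\mathcal{B}$ is by definition the image of $\phi \mapsto X_\phi$, and that this map is injective by Theorem \ref{th:DomFreeBooleRep}. Hence $f : \phi \mapsto X_\phi$ is automatically a bijection $\Psi \to \mathcal{B}$. Since Theorem \ref{th:DomFreeBooleRep} also shows that $(f,g)$ preserves combination, extraction, $1$ and $0$, restricting the codomain to $\mathcal{B}$ immediately gives that $(f,g)$ is an information-algebra isomorphism onto the subalgebra $(\mathcal{B};g(\bar{\mathcal{E}}),\cap,X(\Psi),\emptyset)$ of $(2^{X(\Psi)};g(\bar{\mathcal{E}}),\cap,X(\Psi),\emptyset)$.

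Second, for the Boolean part, the decisive identity is $X_{\phi^c} = X(\Psi)\setminus X_\phi$. This follows because every maximal ideal $I$ in a Boolean algebra is prime, hence for every $\phi$ exactly one of $\phi$, $\phi^c$ lies in $I$. From this and $X_{\phi \cdot \psi} = X_\phi \cap X_\psi$, together with De Morgan in $2^{X(\Psi)}$, one obtains $X_{\phi \wedge \psi} = X_\phi \cup X_\psi$; alternatively it follows directly from primeness. So $f$ preserves all Boolean operations, and being a bijection it is a Boolean isomorphism onto $\mathcal{B}$.

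Third, I would appeal to the classical Stone duality (cited via \cite{davey90}) to identify $\mathcal{B}$ with the clopen sets of the topology $\mathcal{T}$ generated by the base $\mathcal{B}$: each $X_\phi$ is clopen since its complement $X_{\phi^c}$ is also a base element, and compactness of $X(\Psi)$ together with an ultrafilter argument shows every clopen set has the form $X_\phi$. The only obstacle worth flagging is notational hygiene: because the paper uses the \emph{information order} (which is the order opposite to the usual one on a Boolean algebra), care must be taken that what we denote $X_\phi$ corresponds to the set $\{I : \phi \in I\}$ rather than $\{I : \phi \notin I\}$ used in the standard Stone theorem. Once this convention is fixed, the cited representation theorem applies verbatim, and the isomorphism is simultaneously an information-algebra and Boolean-algebra isomorphism, as claimed.
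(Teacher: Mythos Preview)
Your proposal is correct and follows essentially the same route as the paper: the paper also derives the theorem by combining Theorem~\ref{th:DomFreeBooleRep} (the information-algebra embedding) with the classical Stone representation theorem, using the identity $X_{\phi^{c}} = X_{\phi}^{c}$ to identify $\mathcal{B}$ with the clopen subsets of the dual space and to establish the Boolean isomorphism. Your explicit flagging of the order-reversal convention is exactly the caveat the paper makes as well.
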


So, Boolean information algebras (including finite ones)  have a very satisfactory information-theoretic representation: Any piece of information is represented by the set of consistent and complete theories it is contained in. Combining two pieces of information consists in selecting the theories containing both of the pieces. Extraction of information from a piece $\phi$ means to collect all consistent and complete theories which contain the piece of information extracted from $\phi$. And that is exactly the collection of all theories which contain some piece of information $\psi$ whose extracted information equals the one of $\phi$, since
\begin{eqnarray}
X_{\epsilon_x(\phi)} = \sigma_{P_x}(X_{\phi}) = \{I \in X(\Psi): \exists \psi \in I \textrm{ such that}\ \epsilon(\psi) = \epsilon(\phi)\}.
\nonumber
\end{eqnarray}

There exists a well-known duality theory between Boolean algebras and Boolean spaces. This theory could be extended to Boolean information algebras. This has been done for the case of commutative Boolean algebras in \cite{kohlasschmid16}. This could be extended to the present more general case, but we renounce to work it out here.

\subsection{Distributive Lattice Information Algebra} \label{subsec:DistLattInfAlg}

\subsubsection{Distributive Lattice Structure of Information} \label{subsec:DistLattStruct}

Boolean algebras are distributive lattices. In the same way, in a Boolean information algebra $(\Psi;\mathcal{E},\cdot,1,0)$, the Boolean algebra $\Psi$ is a bounded distributive lattice. Therefore, we consider in this section information algebras where $\Psi$ is a bounded distributive lattice. Distributive lattices have a well developed representation and duality theory, the so-called Priestley-duality theory, generalizing the one of Boolean algebras \cite{davey90}, which we take here as a base to develop a corresponding representation theory of distributive lattice information algebras. Further, \cite{cignoli91} studied existential quantifiers on distributive lattices and his results are exactly what is needed to extend representation theory of distributive lattices to information algebras where $\Psi$ is a distributive lattice.

In the case of Boolean information algebras it was sufficient to assume $\Psi$ to be a Boolean algebra. Then it follows for instance that
\begin{eqnarray} \label{eq:DistQuantOverMeet}
\epsilon_x(\phi \wedge \psi) = \epsilon_x(\phi) \wedge \epsilon_x(\psi).
\end{eqnarray}
In case that $\Psi$ is a general distributive lattice this identity can no more be derived  \footnote{Take the Boolean algebra $\{\bot,t,f,\top\}$ and add a further element $0$ below $\bot$. Define on this distributive lattice $\epsilon(x) = x$ for $x \not=\bot,0$ and $\epsilon(x) = 0$ otherwise. Then $\epsilon$ satisfies all conditions of an existential quantifier, but not (\ref{eq:DistQuantOverMeet}).}. But it nevertheless is an essential element of a theory of distributive lattice information algebras, especially in relation to the concept of an existential quantifier on a distributive lattice. So it has to be  explicitly required in the definition of a distributive information algebra.

\begin{definition} \label{def:DistLattInfAlg}
A domain-free information algebra $(\Psi;\mathcal{E},\cdot,1,0)$ is called a distributive lattice information algebra, if
\begin{enumerate}
\item $(\Psi;\leq)$ is a distributive lattice,
\item For all $x \in D$ and $\phi,\psi \in \Psi$ (\ref{eq:DistQuantOverMeet}) holds.
\end{enumerate}
\end{definition}

An example of a distributive lattice information algebra is a lattice-valued algebra, see Section \ref{subsec:LattValAlg}.

\subsubsection{The Finite Case: Birkhoff Duality Extended}

In this subsection, we assume that $\Psi$ is \textit{finite} and derive a representation theorem for this case. Of course, the resulting theory will be a special case of the general theory and could be derived from it. But in many respects it is simpler than the general theory and thus worthwhile to develop it in its own right. In \cite{davey90} the finite case is also presented before the general case. We claim that in this case the order-generating set is given by the \textit{meet-irreducible} elements of the finite lattice $(\Psi;\leq)$.

\begin{definition}
\textit{Meet Irreducible Elements in a Lattice:} An element $\chi \not= 0$ of a lattice $(\Psi;\leq)$ is called meet irreducible, if $\chi = \phi \wedge \psi$ implies either $\chi = \phi$ or $\chi = \psi$.
\end{definition}

Here follow a few equivalent definitions of meet-irreducibility in a distributive lattice:

\begin{lemma} \label{le:MeetIrred}
Let $(\Psi;\leq)$ be a distributive lattice, and $\psi \in\Psi$, not equal to $0$. Then the following are equivalent
\begin{enumerate}
\item $\phi$ is meet-irreducible,
\item for all $\chi,\eta \in \Psi$, $\phi \geq \chi \wedge \eta$ implies $\phi \geq \chi$ or $\phi \geq \eta$,
\item for all $n=1,2,\ldots$; $\phi_{1},\ldots,\phi_{n} \in \Psi$ and $\phi \geq \phi_{1} \wedge \ldots \wedge \phi_{n}$ implies $\phi \geq \phi_{i}$ for some $i \in \{1,\ldots,n\}$.
\end{enumerate}
\end{lemma}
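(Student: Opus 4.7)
The plan is to prove the three conditions equivalent by the cycle $(1) \Rightarrow (2) \Rightarrow (3) \Rightarrow (1)$. The only step that requires distributivity is $(1) \Rightarrow (2)$; the other two are purely order-theoretic.

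For $(1) \Rightarrow (2)$, suppose $\phi$ is meet-irreducible and $\phi \geq \chi \wedge \eta$. The idea is to rewrite $\phi$ using absorption and distributivity:
\begin{eqnarray*}
\phi = \phi \vee (\chi \wedge \eta) = (\phi \vee \chi) \wedge (\phi \vee \eta).
\end{eqnarray*}
Since $\phi$ is meet-irreducible, one of the two factors must equal $\phi$, say $\phi = \phi \vee \chi$, which is exactly $\phi \geq \chi$. Similarly in the other case. This is the only place where distributivity of $\Psi$ is used, and it is essential.

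For $(2) \Rightarrow (3)$, the plan is induction on $n$. The case $n=1$ is the trivial statement $\phi \geq \phi_1 \Rightarrow \phi \geq \phi_1$. For the inductive step, write $\phi_1 \wedge \ldots \wedge \phi_{n+1} = (\phi_1 \wedge \ldots \wedge \phi_n) \wedge \phi_{n+1}$ and apply (2) to the two factors; in the first branch invoke the induction hypothesis.

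For $(3) \Rightarrow (1)$, suppose $\phi = \chi \wedge \eta$. Then trivially $\phi \geq \chi \wedge \eta$, so by (3) applied with $n=2$, either $\phi \geq \chi$ or $\phi \geq \eta$. But from $\phi = \chi \wedge \eta$ we also have $\phi \leq \chi$ and $\phi \leq \eta$, so in either case equality holds and $\phi$ is meet-irreducible.

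There is no real obstacle here; the only subtle point is remembering that distributivity enters precisely (and only) in the rewriting $\phi \vee (\chi \wedge \eta) = (\phi \vee \chi) \wedge (\phi \vee \eta)$ used in $(1) \Rightarrow (2)$, and without it the implication can fail (e.g.\ in $M_3$ or $N_5$).
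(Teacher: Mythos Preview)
Your proof is correct and follows the standard cycle $(1)\Rightarrow(2)\Rightarrow(3)\Rightarrow(1)$, with distributivity used exactly where it must be. The paper itself does not give a proof of this lemma but defers to \cite{davey90} (in the dual version for join-irreducible elements); the argument you wrote is precisely the standard one found there, so there is nothing to contrast.
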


The proof of this, as well as the following lemma, can be found in \cite{davey90}, were the proofs are given in the dual version for join-irreducible elements. Note that atoms are meet-irreducible elements and in a Boolean algebra they are the only meet-irreducible elements. In so far is the theory which follows a generalization of the case of finite Boolean information algebras. Denote the meet-irreducible elements of $\Psi$ by $\mathcal{M}(\Psi)$. The following theorem implies that this set is order-generating.

\begin{theorem} \label{th:PrimeOrdGen}
If $\phi \not\leq \psi$ for $\phi,\psi$ elements of the finite distributive lattice $(\Psi;\leq)$, then there is a meet-irreducible element $\chi$ in $\mathcal{M}(\Psi)$ such that $\psi \leq \chi$ and $\phi \not\leq \chi$.
\end{theorem}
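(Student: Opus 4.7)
The plan is to exploit finiteness to find a maximal witness and then use distributivity (indirectly, via Lemma \ref{le:MeetIrred}) to show meet-irreducibility. Since $\phi \not\leq \psi$, the set
\begin{eqnarray*}
S = \{\eta \in \Psi : \psi \leq \eta \textrm{ and } \phi \not\leq \eta\}
\end{eqnarray*}
contains $\psi$ itself and is therefore nonempty. Because $\Psi$ is finite, $S$ has a maximal element $\chi$ with respect to the information order $\leq$.

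First I would check $\chi \neq 0$: were $\chi = 0$, we would have $\phi \leq 0 = \chi$ contradicting $\chi \in S$. Next, to show $\chi$ is meet-irreducible, suppose for contradiction that $\chi = \eta_1 \wedge \eta_2$ with $\chi \neq \eta_1$ and $\chi \neq \eta_2$. Since $\chi \leq \eta_1$ and $\chi \leq \eta_2$ but both inclusions are strict, we have $\psi \leq \chi < \eta_1$ and $\psi \leq \chi < \eta_2$. The maximality of $\chi$ in $S$ then forces $\eta_1 \notin S$ and $\eta_2 \notin S$; since $\psi \leq \eta_1$ and $\psi \leq \eta_2$, the only way this can happen is $\phi \leq \eta_1$ and $\phi \leq \eta_2$. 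But then $\phi \leq \eta_1 \wedge \eta_2 = \chi$, contradicting $\chi \in S$. Hence $\chi$ is meet-irreducible, so $\chi \in \mathcal{M}(\Psi)$, and by construction $\psi \leq \chi$ and $\phi \not\leq \chi$.

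There is essentially no hard step here: once one has the right candidate (a maximal element of $S$), the verification is a short contradiction argument. The only subtle point is that this direct argument does not even require the full strength of distributivity, so Lemma \ref{le:MeetIrred} is not needed for this implication (distributivity enters into the other characterizations of meet-irreducibility, but here the raw definition suffices). By Theorem \ref{th:CharOrdGen}, this criterion shows immediately that $\mathcal{M}(\Psi)$ is order-generating in the finite distributive lattice information algebra, which is the use the theorem will be put to in the sequel.
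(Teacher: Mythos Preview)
Your proof is correct and follows essentially the same approach as the paper: define the set of elements above $\psi$ but not above $\phi$, pick a maximal element by finiteness, and verify meet-irreducibility by the obvious contradiction. Your additional remarks that $S$ is nonempty (since $\psi\in S$), that $\chi\neq 0$, and that distributivity is not actually used in this direction are all accurate and worth noting.
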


\begin{proof}
Define $X = \{\eta \in \Psi:\psi \leq \eta,\phi \not\leq \eta\}$. Since $\Psi$, hence $X$ are finite sets, there is a maximal element $\chi$ in $X$. We claim that $\chi$ is meet-irreducible. Suppose $\chi = \mu \wedge \lambda$ and $\chi < \mu$, $\chi < \lambda$. By the maximality of $\chi$ neither $\mu$ nor $\lambda$ can belong to $X$. We have $\psi \leq \chi < \mu$ and $\psi \leq \chi < \lambda$, so that $\psi \leq \mu,\lambda$. Therefore, since $\mu,\lambda \not\in X$, this implies $\phi \leq \mu,\lambda$. But then $\chi = \mu \wedge \lambda \geq \phi$ which is a contradiction, since $\chi \in X$. So we must have either $\chi = \mu$ or $\chi = \lambda$.
\end{proof}

From Theorem \ref{th:CharOrdGen} it follows that $\mathcal{M}(\Psi)$ is indeed an order-generating set. So, we have for all elements $\psi$ of $\Psi$
\begin{eqnarray*}
\psi = \inf \uparrow\!\psi \cap \mathcal{M}(\Psi).
\end{eqnarray*}
Further, the map $\psi \mapsto X_\psi = \uparrow\!\psi \cap \mathcal{M}(\Psi) = \{\chi \in \mathcal{M}(\Psi):\psi \leq \chi\}$ is a lattice homomorphism of the finite distributive lattice $(\Psi;\leq)$ into the field $(2^{\mathcal{M}(\Psi)};\subseteq)^\vartheta$ such that

\begin{eqnarray*}
X_{\phi \vee \psi} = X_{\phi \cdot \psi} &=& X_\phi \cap X_\psi, \\
X_{\phi \wedge \psi} &=& X_\phi \cup X_\psi, \\
X_1 &=& \mathcal{M}(\Psi), \\
X_0 &=& \emptyset.
\end{eqnarray*}

The map $X_\phi$ is injective, hence a lattice isomorphism between the finite lattice $(\Psi;\leq)$ and the lattice  $(\mathcal{U}(\mathcal{M}(\Psi));\subseteq)^\vartheta$ of up-sets of meet-irreducible elements of $\Psi$. In order to extend this isomorphism to the finite distributive lattice information algebra $(\Psi;\mathcal{E},\cdot,1,0)$, we must extend the map to extraction operators. Or, in other words, we must show that $\mathcal{M}(\Psi)$ is a strongly order-generating set (see Section \ref{subsec:OrdGenSets}). This follows from the following two lemmas. Remind that $\phi \equiv_x \psi$ iff $\epsilon_x(\phi) = \epsilon_x(\psi)$.

\begin{lemma} \label{le:FiniteDistrAux1}
Let $(\Psi;\mathcal{E},\cdot,1,0)$ be a finite distributive lattice information algebra and $x \in D$. Then, if $\eta$ and $\chi$ are meet-irreducible elements of $\Psi$ such that $\epsilon_x(\eta) \leq \epsilon_x(\chi)$, there exists a meet-irreducible element $\lambda$ of $\Psi$ such that $\lambda \equiv_x \chi$ and $\eta \leq \lambda$.
\end{lemma}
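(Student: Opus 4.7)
Proof proposal. The plan is to exhibit $\lambda$ as a meet-irreducible element dominating a cleverly chosen lower bound, and to use meet-irreducibility of $\chi$ to control its $\epsilon_x$-image.

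First, I would replace $\eta$ by a slightly larger element that already has the right extraction: set
\begin{eqnarray*}
\eta' = \eta \cdot \epsilon_x(\chi) = \eta \vee \epsilon_x(\chi).
\end{eqnarray*}
Using E3 together with the hypothesis $\epsilon_x(\eta) \leq \epsilon_x(\chi)$, a direct computation gives $\epsilon_x(\eta') = \epsilon_x(\eta) \vee \epsilon_x(\chi) = \epsilon_x(\chi)$. Moreover $\eta' \geq \eta$ and $\eta' \geq \epsilon_x(\chi)$. So it suffices to find a meet-irreducible $\lambda \geq \eta'$ with $\epsilon_x(\lambda) = \epsilon_x(\chi)$.

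Second, I would invoke the order-generating property (Theorem \ref{th:PrimeOrdGen} gives $\mathcal{M}(\Psi)$ is order-generating, and since $\Psi$ is finite, the representing meet is finite): writing $M = \uparrow\!\eta' \cap \mathcal{M}(\Psi)$, we have $\eta' = \inf M$. Applying $\epsilon_x$ and using that $\epsilon_x$ distributes over finite meets by the second clause of Definition \ref{def:DistLattInfAlg} (iterated), we get
\begin{eqnarray*}
\epsilon_x(\chi) \;=\; \epsilon_x(\eta') \;=\; \inf\{\epsilon_x(\lambda) : \lambda \in M\}.
\end{eqnarray*}

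Third, I would exploit meet-irreducibility of $\chi$. Since $\chi \geq \epsilon_x(\chi)$, we have $\chi \geq \bigwedge_{\lambda \in M} \epsilon_x(\lambda)$, and $M$ is finite, so by item 3 of Lemma \ref{le:MeetIrred} there exists $\lambda^* \in M$ with $\chi \geq \epsilon_x(\lambda^*)$. Applying $\epsilon_x$ yields $\epsilon_x(\chi) \geq \epsilon_x(\lambda^*)$. Conversely, $\lambda^* \geq \eta' \geq \epsilon_x(\chi)$ together with monotonicity of $\epsilon_x$ (and $\epsilon_x\epsilon_x = \epsilon_x$) gives $\epsilon_x(\lambda^*) \geq \epsilon_x(\chi)$. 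Hence $\epsilon_x(\lambda^*) = \epsilon_x(\chi)$, i.e.\ $\lambda^* \equiv_x \chi$. Since $\lambda^* \in M$, it is meet-irreducible and satisfies $\lambda^* \geq \eta' \geq \eta$, finishing the proof.

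The only delicate step is the third one: one must be careful that meet-irreducibility (item 3 of Lemma \ref{le:MeetIrred}) really produces a single $\lambda^*$ whose \emph{$\epsilon_x$-image} is below $\chi$, not merely some combination. This works cleanly because $M$ is finite (thanks to finiteness of $\Psi$) and because the distributivity $\epsilon_x(\phi \wedge \psi) = \epsilon_x(\phi) \wedge \epsilon_x(\psi)$ was postulated in Definition \ref{def:DistLattInfAlg}, without which step two would break down.
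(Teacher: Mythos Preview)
Your proof is correct and takes a genuinely different, more direct route than the paper's.

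Both arguments begin with $\eta' = \eta \cdot \epsilon_x(\chi)$ and aim to find a meet-irreducible $\lambda \geq \eta'$ with $\epsilon_x(\lambda) = \epsilon_x(\chi)$. From there they diverge. The paper introduces the auxiliary set $F = \{\psi : \psi = \epsilon_x(\psi) \not\leq \chi\}$, shows $\wedge F \in F$ and $\wedge F \not\leq \eta'$, and then invokes the separation Theorem~\ref{th:PrimeOrdGen} to produce $\lambda$ with $\eta' \leq \lambda$ and $\wedge F \not\leq \lambda$; the condition $\wedge F \not\leq \lambda$ is finally unwound to force $\epsilon_x(\lambda) \leq \chi$. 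You instead write $\eta'$ directly as the finite meet $\bigwedge M$ of the meet-irreducibles above it, push $\epsilon_x$ through this meet using (\ref{eq:DistQuantOverMeet}), and apply item~3 of Lemma~\ref{le:MeetIrred} to $\chi \geq \bigwedge_{\lambda \in M} \epsilon_x(\lambda)$ to select a single $\lambda^*$ with $\epsilon_x(\lambda^*) \leq \chi$; the reverse inequality is immediate from $\lambda^* \geq \epsilon_x(\chi)$.

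Both proofs rest on the same three ingredients---finiteness of $\Psi$, the distributivity $\epsilon_x(\phi \wedge \psi) = \epsilon_x(\phi) \wedge \epsilon_x(\psi)$, and meet-irreducibility of $\chi$---but your argument packages them more economically: it dispenses with the set $F$ and the separation step entirely, using the order-generating property of $\mathcal{M}(\Psi)$ in its ``decomposition'' form rather than its ``separation'' form. The paper's approach, on the other hand, is closer in spirit to the prime-ideal argument of Lemma~\ref{leCignoli1} in the infinite case, where one genuinely needs a separation (DPI)-style step because no finite meet decomposition is available. One minor point you leave implicit: $\eta' \neq 0$, which is needed to invoke the order-generating property; this follows because $\epsilon_x(\eta') = \epsilon_x(\chi) \leq \chi \neq 0$ together with E1.
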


\begin{proof}
Consider $\eta \cdot \epsilon_x(\chi)$ and the set $F = \{\psi:\psi = \epsilon_x(\psi) \not\leq\chi\}$. The set $F$ is finite, hence $\wedge F$ exists. Further, by (\ref{eq:DistQuantOverMeet}) we have $\wedge F = \wedge_{\psi \in F}\ \epsilon_x(\psi) = \epsilon_x(\wedge F)$. It follows from Lemma \ref{le:MeetIrred} (3) that $F = \wedge F \not\leq\chi$, hence $\wedge F \in F$. Assume that $\eta \cdot \epsilon_x(\chi) \geq \wedge F$. Then it follows from $\epsilon_x(\eta) \leq \epsilon_x(\chi)$ that $\epsilon_x(\eta \cdot \epsilon_x(\chi)) = \epsilon_x(\eta) \cdot \epsilon_x(\chi) = \epsilon_x(\chi) \geq \wedge F$. But then $\chi \geq \epsilon_x(\chi) \geq \wedge F$ which contradicts $\wedge F \in F$. Therefore we conclude that $\wedge F \not\leq \eta \cdot \epsilon_x(\chi)$. By Theorem \ref{th:PrimeOrdGen} there is then a meet-irreducible element $\lambda$ in $\Psi$  such that $\eta \cdot \epsilon_x(\chi) \leq \lambda$ and $\wedge F \not\leq \lambda$. Then $\eta \leq \eta \cdot \epsilon_x(\chi) \leq \lambda$. Further, we have $\epsilon_x(\eta \cdot \epsilon_x(\chi)) = \epsilon_x(\chi) \leq \epsilon_x(\lambda)$. But on the other hand we have also $\epsilon_x(\lambda) \leq \lambda$ and $\epsilon_x(\lambda) \not\in F$, since $\wedge F \leq \epsilon_x(\lambda) \leq \lambda$ would contradict $\wedge F \not\leq \lambda$. Therefore, since $\epsilon_x(\epsilon_x(\lambda)) = \epsilon_x(\lambda)$, this means that $\epsilon_x(\lambda) \leq \chi$, hence $\epsilon_x(\lambda) \leq \epsilon_x(\chi)$. Thus finally, we have $\epsilon_x(\lambda) = \epsilon_x(\chi)$.
\end{proof}

\begin{lemma} \label{le:FiniteDistrAux2}
Let $(\Psi;\mathcal{E},\cdot,1,0)$ be a finite distributive lattice information algebra, $x \in D$, $\chi$ a meet-irreducible element of $\Psi$ and $\phi \in \Psi$ such that $\epsilon_x(\phi) \leq \epsilon_x(\chi)$. Then there is a meet-irreducible element $\lambda$ of $\Psi$ such that $\phi \leq \lambda$ and $\lambda \equiv_x \chi$.
\end{lemma}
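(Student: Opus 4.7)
The plan is to mimic the proof of Lemma \ref{le:FiniteDistrAux1} almost verbatim, but first replace the arbitrary $\phi$ by a suitable element lying above both $\phi$ and $\epsilon_x(\chi)$, thereby bringing ourselves into a situation analogous to the one handled by the previous lemma. The key observation motivating this replacement is that any meet-irreducible $\lambda$ with $\epsilon_x(\lambda)=\epsilon_x(\chi)$ must automatically satisfy $\lambda\geq \epsilon_x(\lambda) = \epsilon_x(\chi)$ by E2, so one is forced to climb above $\epsilon_x(\chi)$.

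Concretely, I would set $\phi' = \phi\cdot \epsilon_x(\chi)$ and first verify that $\epsilon_x(\phi')=\epsilon_x(\chi)$: by E3 and the hypothesis $\epsilon_x(\phi)\leq \epsilon_x(\chi)$, one has $\epsilon_x(\phi')=\epsilon_x(\phi)\cdot\epsilon_x(\chi)=\epsilon_x(\chi)$. Then I would reintroduce the finite set $F=\{\psi\in\Psi:\psi=\epsilon_x(\psi),\psi\not\leq\chi\}$ from the proof of Lemma \ref{le:FiniteDistrAux1}, and use hypothesis (\ref{eq:DistQuantOverMeet}) plus the meet-irreducibility of $\chi$ (via Lemma \ref{le:MeetIrred}(3)) to conclude that $\wedge F=\epsilon_x(\wedge F)\in F$.

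Next I would show the crucial separation $\wedge F\not\leq \phi'$: if on the contrary $\wedge F\leq \phi'$, then applying $\epsilon_x$ yields $\wedge F=\epsilon_x(\wedge F)\leq \epsilon_x(\phi')=\epsilon_x(\chi)\leq \chi$, contradicting $\wedge F\in F$. Thus Theorem \ref{th:PrimeOrdGen} delivers a meet-irreducible $\lambda\in\mathcal{M}(\Psi)$ with $\phi'\leq\lambda$ and $\wedge F\not\leq \lambda$. Monotonicity of $\epsilon_x$ gives $\epsilon_x(\chi)=\epsilon_x(\phi')\leq \epsilon_x(\lambda)$, and $\phi\leq \phi'\leq\lambda$ is immediate.

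For the reverse inequality $\epsilon_x(\lambda)\leq \epsilon_x(\chi)$, I would argue as in the previous lemma: $\epsilon_x(\lambda)\leq\lambda$ combined with $\wedge F\not\leq \lambda$ forces $\wedge F\not\leq\epsilon_x(\lambda)$, so $\epsilon_x(\lambda)$—which is a fixed point of $\epsilon_x$—cannot lie in $F$, hence $\epsilon_x(\lambda)\leq\chi$, and applying $\epsilon_x$ yields $\epsilon_x(\lambda)=\epsilon_x(\epsilon_x(\lambda))\leq \epsilon_x(\chi)$. This gives $\lambda\equiv_x \chi$ as desired. There is no real obstacle here beyond spotting the right substitute for $\phi$; the rest of the argument is essentially a transcription of the proof of Lemma \ref{le:FiniteDistrAux1}, which is why the two lemmas are stated consecutively.
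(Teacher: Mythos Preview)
Your proof is correct. The paper's proof proceeds differently: it first applies Theorem~\ref{th:PrimeOrdGen} to the pair $(\wedge F,\phi)$ to obtain a meet-irreducible $\eta$ with $\phi\leq\eta$ and $\wedge F\not\leq\eta$, deduces $\epsilon_x(\eta)\leq\epsilon_x(\chi)$, and then invokes Lemma~\ref{le:FiniteDistrAux1} (applied to the two meet-irreducibles $\eta$ and $\chi$) as a black box to produce the final $\lambda\geq\eta\geq\phi$. Your approach short-circuits this two-stage argument by passing immediately to $\phi'=\phi\cdot\epsilon_x(\chi)$ and applying Theorem~\ref{th:PrimeOrdGen} just once to $(\wedge F,\phi')$; the resulting $\lambda$ already satisfies $\epsilon_x(\lambda)\geq\epsilon_x(\chi)$ because $\phi'\geq\epsilon_x(\chi)$, so no second invocation of the separation theorem (hidden inside Lemma~\ref{le:FiniteDistrAux1}) is needed. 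In effect your argument absorbs Lemma~\ref{le:FiniteDistrAux1} as the special case $\phi=\eta$ meet-irreducible, so the two lemmas could be merged into one. The paper's route has the minor expository advantage of isolating the meet-irreducible-to-meet-irreducible case first, but yours is more economical.
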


\begin{proof}
If $\phi \leq \chi$, then take $\lambda = \chi$. So, assume that $\phi \not\leq \chi$ and consider the set $F = \{\psi:\psi = \epsilon_x(\psi) \not\leq \chi\}$. As in the proof of the preceding lemma we conclude that $\wedge F \in F$. Assume that $\wedge F \leq \phi$. But then $\wedge F = \epsilon_x(\wedge F) \leq \epsilon_x(\phi) \leq \epsilon_x(\chi) \leq \chi$. But this contradicts $\wedge F \in F$. Hence we have $\wedge F \not\leq \phi$. Then, by Theorem \ref{th:PrimeOrdGen}, there is a meet-irreducible element $\eta$ such that $\phi \leq \eta$ and $\wedge F \not\leq \eta$. Now, $\psi = \epsilon_x(\psi) \not\leq \chi$ is equivalent to $\psi = \epsilon_x(\psi) \not\leq \epsilon_x(\chi)$. By Lemma \ref{le:MeetIrred} (3) it follows from $\wedge F \not\leq \eta$ that $\psi \in F$ implies $\psi \not\leq \eta$. Thus $\epsilon_x(\eta) \not= \psi$ for all $\psi \in F$. It follows that $\epsilon_x(\eta) \leq \chi$, hence $\epsilon_x(\eta) \leq \epsilon_x(\chi)$.

Apply now Lemma \ref{le:FiniteDistrAux1} to obtain a meet-irreducible element $\lambda$ such that $\eta \leq \lambda$ and $\epsilon_x(\lambda) = \epsilon_x(\chi)$. Then $\phi \leq \eta$ implies $\phi \leq \lambda$ and the lemma is proved.
\end{proof}
This second lemma shows that $\mathcal{M}(\Psi)$ is indeed strongly order-generating.

This allows to formulate the main theorem about finite distributive lattice information algebras. Let $\sigma_{\epsilon}$ be the saturation operator associated with this equivalence relation $\equiv_x$,
\begin{eqnarray}
\sigma_{\epsilon}(S) = \{\chi \in \mathcal{M}(\Psi): \exists \eta \in S \textrm{ such that}\ \chi \equiv_{\epsilon} \eta\}
\end{eqnarray}
defined for any subset $S$ of $\mathcal{M}(\Psi)$. As usual, $g$ maps $x \in D$ to $P_x$, where $P_x$ is the partition in $\mathcal{M}(\Psi)$ associated with the equivalence relation $\equiv_x$.

\begin{theorem} \label{th:RepFiniteDomFreeBoole}
A finite distributive lattice information algebra $(\Psi;\mathcal{E},\cdot,1,0)$ is embedded as an information algebra as well as a lattice in the set-salgebra\\ $(2^{\mathcal{M}(\Psi)};g(\mathcal{E}),\cap,\mathcal{M}(\Psi),\emptyset)$.
\end{theorem}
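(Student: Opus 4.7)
The plan is to invoke the general embedding machinery of Theorem \ref{th:GenRep_1} with $X = \mathcal{M}(\Psi)$ and then to upgrade the resulting information-algebra embedding to a lattice embedding by using the defining property of meet-irreducibility. All the technical work has in fact already been done in the preceding results; what remains is to assemble them.

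First I would verify that $\mathcal{M}(\Psi)$ is strongly order-generating in the sense of Section \ref{subsec:OrdGenSets}. Order-generation follows at once: Theorem \ref{th:PrimeOrdGen} delivers the separation property asked for in Theorem \ref{th:CharOrdGen}, so $\psi = \inf (\uparrow\!\psi \cap \mathcal{M}(\Psi))$ for every $\psi \in \Psi$, $\psi \neq 0$. The stronger property, namely that whenever $\epsilon_x(\alpha) \geq \epsilon_x(\psi)$ with $\alpha \in \mathcal{M}(\Psi)$ there exists $\gamma \in \mathcal{M}(\Psi)$ with $\alpha \equiv_x \gamma$ and $\gamma \geq \psi$, is exactly the content of Lemma \ref{le:FiniteDistrAux2} (applied with $\chi = \alpha$ and $\phi = \psi$). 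So $\mathcal{M}(\Psi)$ satisfies the hypothesis of Theorem \ref{th:GenRep_1}, and the pair of maps
\begin{eqnarray*}
\psi \mapsto X_\psi = \uparrow\!\psi \cap \mathcal{M}(\Psi), \quad x \mapsto P_x
\end{eqnarray*}
(with $P_x$ the partition of $\mathcal{M}(\Psi)$ induced by $\equiv_x$) is an information-algebra embedding of $(\Psi;\mathcal{E},\cdot,1,0)$ into $(2^{\mathcal{M}(\Psi)};g(\mathcal{E}),\cap,\mathcal{M}(\Psi),\emptyset)$.

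Next I would show that this same embedding preserves the lattice meet, so that it is also a lattice embedding. The combination-preservation identity $X_{\phi \cdot \psi} = X_\phi \cap X_\psi$ and the boundary cases $X_1 = \mathcal{M}(\Psi)$, $X_0 = \emptyset$ are already part of the information-algebra embedding. For the meet, one direction is trivial: $\phi \wedge \psi \leq \phi, \psi$ gives $X_\phi \cup X_\psi \subseteq X_{\phi \wedge \psi}$. For the converse, let $\chi \in X_{\phi \wedge \psi}$, i.e.\ $\chi$ is meet-irreducible and $\phi \wedge \psi \leq \chi$. By Lemma \ref{le:MeetIrred}(2) applied to $\chi$, this forces $\phi \leq \chi$ or $\psi \leq \chi$, i.e.\ $\chi \in X_\phi \cup X_\psi$. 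Hence $X_{\phi \wedge \psi} = X_\phi \cup X_\psi$. Injectivity of $\psi \mapsto X_\psi$ is inherited from Theorem \ref{th:CharOrdGen} (if $\phi \neq \psi$, say $\phi \not\leq \psi$, then some $\chi \in \mathcal{M}(\Psi)$ separates them), and injectivity of $x \mapsto P_x$ is the standing assumption (\ref{eq:Infomorph}).

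There is essentially no hard step: the finiteness of $\Psi$ was the only delicate issue, and it has already been consumed in the proofs of Theorem \ref{th:PrimeOrdGen} and Lemmas \ref{le:FiniteDistrAux1}, \ref{le:FiniteDistrAux2}, where it was used to guarantee the existence of maximal elements in the sets $X = \{\eta : \psi \leq \eta, \phi \not\leq \eta\}$ and of meets of finite sets $F$ of $x$-supported elements. The only point that warrants a moment's care is the meet-preservation step above, which relies crucially on the equivalent characterizations of meet-irreducibility in Lemma \ref{le:MeetIrred}; this is what makes the embedding not just an information-algebra embedding but a lattice embedding, in analogy with the Birkhoff representation for finite distributive lattices that the theorem generalizes.
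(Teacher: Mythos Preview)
Your proposal is correct and follows essentially the same approach as the paper: the paper does not give a separate proof after the theorem statement but rather assembles it from the preceding material exactly as you do, noting that Theorem \ref{th:PrimeOrdGen} with Theorem \ref{th:CharOrdGen} yields order-generation, that Lemma \ref{le:FiniteDistrAux2} yields the strong order-generation condition, and that Theorem \ref{th:GenRep_1} then applies; the lattice-homomorphism identities (in particular $X_{\phi \wedge \psi} = X_\phi \cup X_\psi$) are recorded just before the lemmas, using the same meet-irreducibility argument you spell out.
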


As part of this theorem, we have the result that
\begin{eqnarray*}
X_{\epsilon_x}(\psi) = \sigma_{P_x}(X_\psi).
\end{eqnarray*}
This shows that $\sigma_{P_x}$ maps  $\mathcal{U}(\mathcal{M}(\Psi))$ into itself. This implies that the distributive lattice information algebra $(\Psi,\mathcal{E},\cdot,1,0)$ is isomorphic to the subalgebra $(\mathcal{U}(\mathcal{M}(\Psi));g(\mathcal{E});\\\cap,\mathcal{M}(\Psi),\emptyset)$ of upsets of meet-irreducible elements of $\Psi$. So, distributive lattice information algebras are isomorphic, that is essentially identical, to information algebras of up-sets of ordered sets with intersection as combination.  This is the starting point of a duality theory, which we do not pursue here.

There are information algebras $(\Psi;\mathcal{E},\cdot,1,0)$ where $(\Psi;\leq)$ is a distributive lattice, but (\ref{eq:DistQuantOverMeet}) does not hold or is a non-distributive lattice. Then the theory developed above does not apply. On the other hand, any information algebra is embedded into a set algebra, which is distributive and where saturation operators do satisfy (\ref{eq:DistQuantOverMeet}). The point is of course that this embedding is \textit{not} a lattice homomorphism, as it is in the present case.

\subsubsection{The General Case: Priestley Duality Extended} \label{subsec:GenDistrLattInfAlg}

We now turn to the general case of a distributive lattice information algebra $(\Psi;\mathcal{E},\cdot,1,0)$. Then $(\Psi;\leq)$ is a distributive lattice and we may apply the well-known Priestley duality theory to $\Psi$ to represent it by a subset lattice. The key notion in this theory is the one of a prime ideal, see Definition \ref{def:PrimeIdeal}. Let $X(\Psi)$ denote the set of prime ideals of the distributive lattice $\Psi$. The Priestley representation theorem says that the map $\psi \mapsto \{I \in X(\Psi):\psi \not\in I\}$ is a lattice embedding of $\Psi$ into the the power set of $X(\Psi)$, \cite{davey90}. As in the section on Boolean information algebras we have to consider that the information order is the opposite of the inclusion order and for this reason we consider the map $\psi \mapsto \{I \in X(\Psi):\psi \in I\}$. However, in this context $X(\Psi)$ is not an order-generating subset of the ideal completion $I_\Psi$ of the distributive lattice information algebra as one might expect. In fact, by Theorem I-3.25 in \cite{gierz03}, if the prime element of $I_\Psi$, that is the prime ideals of $\Psi$, were order-generating in $I_\Psi$, then $I_\Psi$ must be a frame, which in general is not the case. Nevertheless  Priestley duality theory extends to distributive lattice information algebras.                                                     

Define 
\begin{eqnarray} \label{eq:PriestleyRTepMap}
X_{\psi} = \{I \in X(\Psi):\psi \in I\}.
\nonumber
\end{eqnarray}
It follows that
\begin{eqnarray}
X_{\phi \cdot \psi} = X_{\phi \vee \psi} &=& X_{\phi} \cap X_{\psi},
\nonumber \\
X_{\phi \wedge \psi} &=& X_{\phi} \cup X_{\psi}.
\nonumber
\end{eqnarray}
Further we have $X_{0} = \emptyset$ and $X_{1} = X(\Psi)$. This shows that the map $\psi \mapsto X_{\psi}$ is a lattice homomorphism from $\Psi$ into $(2^{X(\Psi)};\subseteq)^{\vartheta}$. Further, the map is one-to-one, hence an embedding of the distributive lattice into $(2^{X(\Psi)},\subseteq)^{\vartheta}$. In fact, in a bounded distributive lattice, if $\phi \not\leq \psi$, then there exists a prime ideal $P$ sucht that $\phi \not\in P$ but $\psi \in P$ \cite{davey90}. So we have the desired map $X_\psi$ with the small difference that we map $\phi$ to the set of prime ideals containing it, instead of the usual map to the set of prime ideals not containing it. This makes sense from the information-theoretic point of view. Prime ideals are consistent complete theories or collections of information elements: As ideals they are \textit{consistent} in the sense that they contain with any element all elements implied by it and with any two elements also their combination. They are \textit{complete} theories, in the sense that if they contain $\phi \wedge \psi$,  they contain $\phi$ or $\psi$. And the map $X_{\psi}$ assigns to an element $\psi$ all consistent and complete theories $I$ which are consistent with $\psi$, that is, contain $\psi$. Therefore, in the sequel we are going to express Priestley duality theory for distributive lattices in this view.

The family of sets
\begin{eqnarray}
\mathcal{B} = \{X_{\phi} \cap (X(\Psi) - X_{\psi}):\phi,\psi \in \Psi\}
\nonumber
\end{eqnarray}
is the basis of a topological space $(X(\Psi),\mathcal{T}(\Psi))$. This topological space is \textit{compact} and it is ordered by inclusion. Further, the \textit{clopen} (simultaneously open and closed) subsets are finite unions of the form $X_{\phi} \cap (X(\Psi) - X_{\psi})$ for $\phi,\psi \in \Psi$. Since $X_{1} = X(\Psi)$ belongs to $\mathcal{B}$, the sets $X_{\psi}$ are exactly the \textit{clopen up-sets} of this ordered topological space. We denote the clopen up-sets in $(X(\Psi),\mathcal{T}(\Psi))$ by $\mathcal{U}(X(\Psi))$. So, $\Psi$ is lattice-isomorphic to the lattice of clopen upsets of the topological space $(X(\Psi),\mathcal{T}(\Psi))$, more precisely to the lattice $(\mathcal{U}(X(\Psi));\subseteq)^\vartheta$ with the order inverse to inclusion: $\phi \leq \psi$ implies $X(\phi) \supseteq X_{\psi}$. Finally it holds also that if $P,Q \in X(\Psi)$ such that $P \not\subseteq Q$, then there is a clopen up-set $U$ such that $P \in U$ and $Q \not\in U$. This means that the ordered topological space $(X(\Psi),\mathcal{T}(\Psi))$ is \textit{totally order-disconnected}. A compact, totally order-disconnected topological space is called a \textit{Priestley space}, see \cite{davey90}. So, $(X(\Psi),\mathcal{T}(\Psi))$ is a Priestley space.

In order to extend this representation theory to distributive lattice information algebras, we need to extend the theory to include extraction. For this part, we use the work of \cite{cignoli91} on distributive lattices with an existential quantifier. In presenting his results, the reader should again be aware, that \cite{cignoli91} works in the lattice $(\Psi;\leq)^\vartheta$, where the order is inverse to our natural information order. We present however here Cignoli's results with respect to the lattice $\Psi$ with our usual information order.

We define $\epsilon_x(I) = I \cap \epsilon_x(\Psi)$ for any prime ideal $I \in \epsilon_x(\Psi)$. This is not to be confounded with the extraction operator $\bar{\epsilon}_x(I)$ in the ideal completion, see Section \ref{subsec:IdCompl}. We remark however, that due to Lemma \ref{le:ExtrOfideals} $\epsilon_x(I) = \epsilon_x(J)$ is equivalent to the equivalence relation $I \equiv_{\bar{\epsilon_x}} J$ in the information algebra of the ideal completion. Note that $\epsilon_x(I)$ is a prime ideal in $\epsilon_x(\Psi)$. The following results which generalize Lemmas \ref{le:FiniteDistrAux1} and \ref{le:FiniteDistrAux2}, are adapted from \cite{cignoli91}.

\begin{lemma} \label{leCignoli1}
Given $P,Q \in X(\Psi)$ such that $\epsilon_x(Q) \subseteq \epsilon_x(P)$, there is a $R \in X(\Psi)$ such that $\epsilon_x(R) = \epsilon_x(P)$ and $Q \subseteq R$.
\end{lemma}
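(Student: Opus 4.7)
The plan is to apply the classical Birkhoff--Stone prime ideal separation theorem for distributive lattices to a judiciously chosen ideal and filter of $\Psi$. Let $J$ be the ideal of $\Psi$ generated by $Q \cup \epsilon_x(P)$, so that $J = \{\chi \in \Psi : \chi \leq q \vee p \textrm{ for some } q \in Q,\ p \in \epsilon_x(P)\}$ (both pieces already being closed under joins). Observe that $\epsilon_x(P) = P \cap \epsilon_x(\Psi)$ is a prime ideal of the sublattice $\epsilon_x(\Psi)$: downward closure and join-closure are immediate, while primeness descends from the primeness of $P$ once one knows that meets in $\epsilon_x(\Psi)$ coincide with meets in $\Psi$, which is exactly (\ref{eq:DistQuantOverMeet}). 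Let $F = \epsilon_x(\Psi) \setminus \epsilon_x(P)$ be the complementary prime filter in $\epsilon_x(\Psi)$, and let $F'$ be the filter of $\Psi$ it generates. Since $\epsilon_x(\Psi)$ is closed under meets (again by (\ref{eq:DistQuantOverMeet})), $F$ is already meet-closed, so $F'$ is simply its upset in $\Psi$.

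The crux is to show $J \cap F' = \emptyset$. Suppose on the contrary $\chi \in J \cap F'$: then $\chi \leq q \vee p$ with $q \in Q$, $p \in \epsilon_x(P)$, and $\chi \geq f$ for some $f \in F$. Applying the monotone operator $\epsilon_x$ to $f \leq q \cdot \epsilon_x(p)$, using $f = \epsilon_x(f)$, $p = \epsilon_x(p)$, together with axiom E3,
\begin{eqnarray*}
f \ = \ \epsilon_x(f) \ \leq \ \epsilon_x(\epsilon_x(p) \cdot q) \ = \ \epsilon_x(p) \cdot \epsilon_x(q) \ = \ p \vee \epsilon_x(q).
\end{eqnarray*}
Now $\epsilon_x(q) \leq q$ puts $\epsilon_x(q)$ into $Q$, and $\epsilon_x(q) \in \epsilon_x(\Psi)$, so $\epsilon_x(q) \in Q \cap \epsilon_x(\Psi) = \epsilon_x(Q) \subseteq \epsilon_x(P)$ by hypothesis. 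Hence $\epsilon_x(q) \vee p \in \epsilon_x(P)$, and since $f \in \epsilon_x(\Psi)$ lies below this element, the downward closure of $\epsilon_x(P)$ in $\epsilon_x(\Psi)$ forces $f \in \epsilon_x(P)$, contradicting $f \in F$.

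With $J \cap F' = \emptyset$ established, the Birkhoff--Stone separation theorem supplies a prime ideal $R$ of $\Psi$ with $J \subseteq R$ and $R \cap F' = \emptyset$. From $J \subseteq R$ we read off both $Q \subseteq R$ and $\epsilon_x(P) \subseteq R \cap \epsilon_x(\Psi) = \epsilon_x(R)$, while $R \cap F = \emptyset$ gives $\epsilon_x(R) \subseteq \epsilon_x(\Psi) \setminus F = \epsilon_x(P)$. Consequently $\epsilon_x(R) = \epsilon_x(P)$ with $Q \subseteq R$, as required.

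The substantive step is the disjointness verification $J \cap F' = \emptyset$; it is the one place where the information-algebra structure (E3 combined with the hypothesis $\epsilon_x(Q) \subseteq \epsilon_x(P)$ and (\ref{eq:DistQuantOverMeet})) genuinely intervenes. Everything else is a textbook invocation of the classical prime ideal separation theorem applied to the distributive lattice $(\Psi;\leq)$.
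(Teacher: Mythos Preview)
Your argument is correct and follows essentially the same route as the paper: form the ideal generated by $Q \cup \epsilon_x(P)$ and the filter generated by $\epsilon_x(\Psi)\setminus\epsilon_x(P)$, verify their disjointness via E3 together with the hypothesis $\epsilon_x(Q)\subseteq\epsilon_x(P)$, and then invoke the prime ideal separation theorem to obtain $R$. Your additional remarks that $\epsilon_x(P)$ is prime in $\epsilon_x(\Psi)$ and that $F'$ is just the upset of $F$ (both using (\ref{eq:DistQuantOverMeet})) are correct elaborations, but the skeleton of the proof matches the paper's exactly.
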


\begin{proof}
Let $I$ be the ideal in $\Psi$ generated by $Q \cup \epsilon_x(P)$ and $F$ the filter in $\Psi$ generated by $\epsilon_x(\Psi) - P$. Assume $I \cap F \not= \emptyset$ and consider $\phi \in I \cap F$. Then there must be $\eta \in \epsilon_x(\Psi) - P$ such that $\eta = \epsilon_x(\eta) \leq \phi$ and a $\psi \in Q$, a $\chi \in \epsilon_x(P)$ so that $\phi \leq \psi \vee \chi = \psi \vee \epsilon_x(\chi)$. Then we obtain $\eta = \epsilon_x(\eta) \leq \phi \leq \psi \vee \epsilon_x(\chi)$ so that $\eta = \epsilon_x(\eta) \leq \epsilon_x(\psi \vee \epsilon_x(\psi)) = \epsilon_x(\psi) \vee \epsilon_x(\chi)$. But $\psi \in Q$ implies $\epsilon_x(\psi) \in \epsilon_x(Q) \subseteq \epsilon_x(P)$, hence $\epsilon_x(\psi) \vee \epsilon_x(\chi) \in \epsilon_x(P)$, which implies $\eta = \epsilon_x(\eta) \in \epsilon_x(P)$ (since $\epsilon_x(P)$ is an ideal in $\epsilon_x(\Psi)$), hence $\eta \in P$. But this is a contradiction. Therefore we conclude that $I \cap F = \emptyset$.

Then by (DPI) in \cite{davey90} there is a prime ideal $R \in X(\Psi)$ such that $I \subseteq R$ and $R \cap F = \emptyset$. This implies $Q \subseteq R$ and $R \cap (\epsilon_x(\Psi) - P) = \emptyset$. But $R \cap (\epsilon_x(\Psi) - P) = R \cap (\epsilon_x(\Psi) - \epsilon_x(P))$, hence also $R \cap (\epsilon_x(\Psi) - \epsilon_x(P)) = R \cap \epsilon_x(\Psi) \cap \epsilon_x(P)^{c} = \emptyset$. This implies $\epsilon_x(R) = R \cap \epsilon_x(\Psi) \subseteq \epsilon_x(P)$. But we derive also from $I \subseteq R$ that $\epsilon_x(P) \subseteq R$, hence $\epsilon_x(P) = \epsilon_x(R)$.
\end{proof}

\begin{lemma} \label{leCignoli2}
If $P \in X(\Psi)$, $\phi \in \Psi$ and $\epsilon_x(\phi) \in P$, then there is a prime ideal $R \in X(\Psi)$ such that $\epsilon_x(R) = \epsilon_x(P)$ and $\phi \in R$.
\end{lemma}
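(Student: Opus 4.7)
The plan is to adapt the construction used in the proof of Lemma \ref{leCignoli1}: build an appropriate ideal $I$ containing what we want $R$ to contain, and a filter $F$ whose disjointness with $R$ will enforce the reverse inclusion on extractions, then apply the Prime Ideal Theorem (DPI) to separate them. Concretely, I would take $I$ to be the ideal in $\Psi$ generated by $\{\phi\} \cup \epsilon_x(P)$ and $F$ to be the filter in $\Psi$ generated by $\epsilon_x(\Psi) \setminus P$. The definition of $I$ ensures that if a prime ideal $R$ contains $I$, then both $\phi \in R$ and $\epsilon_x(P) \subseteq \epsilon_x(R)$; the definition of $F$ ensures that a prime $R$ disjoint from $F$ will satisfy $R \cap \epsilon_x(\Psi) \subseteq P$, i.e.\ $\epsilon_x(R) \subseteq \epsilon_x(P)$.

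The main step, as in Lemma \ref{leCignoli1}, is to verify that $I \cap F = \emptyset$. Before that, one observes two easy closure facts needed to describe the generated ideal and filter concretely: first, $\epsilon_x(P)$ is closed under combination (join), since $P$ is an ideal and $\epsilon_x(\Psi)$ is closed under combination by Lemma \ref{le:SuppProp}; second, $\epsilon_x(\Psi) \setminus P$ is closed under meets, because in a distributive lattice information algebra the identity (\ref{eq:DistQuantOverMeet}) gives $\epsilon_x(\eta_1) \wedge \epsilon_x(\eta_2) = \epsilon_x(\eta_1 \wedge \eta_2) \in \epsilon_x(\Psi)$, while primality of $P$ rules out $\eta_1 \wedge \eta_2 \in P$ when $\eta_1,\eta_2 \notin P$. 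Hence a generic element of $I$ has the form $\psi \leq \phi \vee \chi$ for some $\chi \in \epsilon_x(P)$, and a generic element of $F$ has the form $\psi \geq \eta$ for some $\eta = \epsilon_x(\eta) \notin P$.

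Now suppose $\psi \in I \cap F$; then $\eta \leq \phi \vee \chi$. Applying $\epsilon_x$ and using E3 with $\chi = \epsilon_x(\chi)$ yields
\begin{eqnarray*}
\eta = \epsilon_x(\eta) \leq \epsilon_x(\phi \cdot \chi) = \epsilon_x(\phi) \cdot \epsilon_x(\chi) = \epsilon_x(\phi) \vee \chi.
\end{eqnarray*}
Since by hypothesis $\epsilon_x(\phi) \in P$ and $\chi \in \epsilon_x(P) \subseteq P$, their join lies in $P$, and by downward closure $\eta \in P$, contradicting $\eta \in \epsilon_x(\Psi) \setminus P$. Therefore $I \cap F = \emptyset$.

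By (DPI) in \cite{davey90}, there exists a prime ideal $R \in X(\Psi)$ with $I \subseteq R$ and $R \cap F = \emptyset$. Then $\phi \in I \subseteq R$ and $\epsilon_x(P) \subseteq I \subseteq R$, whence $\epsilon_x(P) \subseteq R \cap \epsilon_x(\Psi) = \epsilon_x(R)$. Conversely, $R \cap F = \emptyset$ implies $R \cap (\epsilon_x(\Psi) \setminus P) = \emptyset$, so $\epsilon_x(R) = R \cap \epsilon_x(\Psi) \subseteq \epsilon_x(P)$. Combining these two inclusions gives $\epsilon_x(R) = \epsilon_x(P)$, which completes the proof. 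The only subtle point, and the one where I would expect a reader to stumble, is the use of (\ref{eq:DistQuantOverMeet}) together with primality to keep $\epsilon_x(\Psi) \setminus P$ closed under meets; without the distributivity axiom in Definition \ref{def:DistLattInfAlg} the filter $F$ could not be described so transparently and the disjointness argument would break down.
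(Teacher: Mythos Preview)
Your argument is correct. The difference from the paper's proof is organizational rather than mathematical: the paper first separates the smaller ideal $\downarrow\!\phi$ from the same filter $F$ to obtain a prime $Q$ with $\phi\in Q$ and only the one inclusion $\epsilon_x(Q)\subseteq\epsilon_x(P)$, and then invokes Lemma~\ref{leCignoli1} as a black box to enlarge $Q$ to an $R$ with $\epsilon_x(R)=\epsilon_x(P)$. You instead take as ideal the one generated by $\{\phi\}\cup\epsilon_x(P)$, which is exactly the construction inside the proof of Lemma~\ref{leCignoli1} with $Q$ replaced by $\downarrow\!\phi$; this yields both inclusions in a single application of DPI and makes the appeal to Lemma~\ref{leCignoli1} unnecessary. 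What your route buys is a self-contained one-step proof; what the paper's route buys is modularity, reusing Lemma~\ref{leCignoli1} rather than reproving its content. Your remark that the meet-closure of $\epsilon_x(\Psi)\setminus P$ relies on (\ref{eq:DistQuantOverMeet}) together with primality of $P$ is apt and is the same point implicitly used in both proofs.
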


\begin{proof}
If $\phi \in P$, take $R = P$. Otherwise consider the principal ideal $\downarrow\!\phi$ and the filter $F$ generated by $\epsilon_x(\Psi) - P$. Assume $\downarrow\!\phi \cap F \not= \emptyset$ and consider $\psi \in \downarrow\!\phi \cap F$. Then $\psi \leq \phi$ and there is a $\chi \in F$, $\chi \not\in P$ such that $\chi = \epsilon_x(\chi) \leq \psi$. So $\chi = \epsilon_x(\chi) \leq \psi \leq \phi$, hence $\chi = \epsilon_x(\chi) \leq \epsilon_x(\phi)$. Since $\epsilon_x(\phi) \in P$ we obtain $\chi \in P$, which is a contradiction. So $\downarrow\!\phi \cap F = \emptyset$.

Again, by (DPI) in \cite{davey90} there is a prime ideal $Q \in X(\Psi)$ so that $\downarrow\!\phi \subseteq Q$ and $F \cap Q = \emptyset$. Thus, $\phi \in Q$ and $\epsilon_x(\Psi) - P = \epsilon_x(\Psi) - \epsilon_x(P) \subseteq \Psi - Q$. But this implies $\epsilon_x(Q) \subseteq \epsilon_x(P)$. In fact, assume $\psi \in \epsilon_x(Q) \subseteq Q$, hence $\psi = \epsilon_x(\psi) \not\in \epsilon_x(\Psi) - \epsilon_x(P)$. Since $\psi \in \epsilon_x(\Psi)$, this implies $\psi \in \epsilon_x(P)$. Apply now Lemma \ref{leCignoli1} to obtain a $R \in X(\Psi)$ such that $\phi \in Q \subseteq R$ and $\epsilon_x(R) = \epsilon_x(P)$.
\end{proof}

Based on these results we are now in a position to introduce information extraction in the lattice $(2^{X(\Psi)};\subseteq)^{\vartheta}$. For any subset $S$ of $X(\Psi)$ and for any $x \in D$ define  the associated saturation operator
\begin{eqnarray} \label{eq:CylOp}
\sigma_{P_x}(S) = \{P \in X(\Psi):\exists Q \in S \textrm{ so that}\ \epsilon_x(P) = \epsilon_x(Q)\}.
\end{eqnarray}

Here comes the main theorem for the representation of distributive lattice information algebras:

\begin{theorem} \label{th:MapExtr}
For all $\phi \in \Psi$ and $\epsilon_x \in E$.
\begin{eqnarray} \label{eq:MapExtr2}
X_{\epsilon_x(\phi)} = \sigma_{\epsilon_x}(X_{\phi})
\end{eqnarray}

\end{theorem}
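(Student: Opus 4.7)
The plan is to prove the two inclusions separately, using Lemma \ref{leCignoli2} for the harder direction and nothing more than the definition of a prime ideal plus axiom E2 for the easier one.

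For the inclusion $X_{\epsilon_x(\phi)} \subseteq \sigma_{P_x}(X_\phi)$, I would take any prime ideal $P$ with $\epsilon_x(\phi) \in P$. This is precisely the hypothesis of Lemma \ref{leCignoli2}, which supplies a prime ideal $R \in X(\Psi)$ with $\phi \in R$ (so $R \in X_\phi$) and $\epsilon_x(R) = \epsilon_x(P)$. By the definition (\ref{eq:CylOp}) of the saturation operator, this gives $P \in \sigma_{P_x}(X_\phi)$ immediately.

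For the reverse inclusion $\sigma_{P_x}(X_\phi) \subseteq X_{\epsilon_x(\phi)}$, let $P \in \sigma_{P_x}(X_\phi)$, so that there exists $Q \in X_\phi$ with $\epsilon_x(P) = \epsilon_x(Q)$. Since $\phi \in Q$ and by E2 we have $\epsilon_x(\phi) \leq \phi$, the fact that $Q$ is an ideal (hence a down-set) gives $\epsilon_x(\phi) \in Q$. Moreover $\epsilon_x(\phi) \in \epsilon_x(\Psi)$ (using item 3 of Lemma \ref{le:SuppProp}), so $\epsilon_x(\phi) \in Q \cap \epsilon_x(\Psi) = \epsilon_x(Q) = \epsilon_x(P) \subseteq P$. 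Therefore $\epsilon_x(\phi) \in P$, i.e., $P \in X_{\epsilon_x(\phi)}$.

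There is no real obstacle here: the technical work has been concentrated in Lemmas \ref{leCignoli1} and \ref{leCignoli2}, which in turn rested on the distributive prime ideal theorem. The only point that requires a moment of care is the backward direction's move from ``$\phi \in Q$'' to ``$\epsilon_x(\phi) \in \epsilon_x(Q)$'', which one must justify by combining E2 (to get $\epsilon_x(\phi) \le \phi$ so $\epsilon_x(\phi) \in Q$) with the idempotence of $\epsilon_x$ on $\epsilon_x(\Psi)$ (to land in $\epsilon_x(Q) = Q \cap \epsilon_x(\Psi)$). Once that is in place, the theorem follows in a couple of lines.
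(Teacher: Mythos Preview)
Your proof is correct and essentially identical to the paper's own argument: both directions use exactly the same ideas (Lemma~\ref{leCignoli2} for $X_{\epsilon_x(\phi)} \subseteq \sigma_{P_x}(X_\phi)$, and E2 together with the definition $\epsilon_x(Q) = Q \cap \epsilon_x(\Psi)$ for the reverse inclusion), with only the order of presentation swapped. If anything, your write-up is slightly more careful in explicitly noting that $\epsilon_x(\phi) \in \epsilon_x(\Psi)$ via Lemma~\ref{le:SuppProp}.
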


\begin{proof}
Consider $P \in \sigma_{\epsilon_x}(X_{\phi})$. Then there is a $Q \in X_{\phi}$ such that $\epsilon_x(Q) = \epsilon_x(P)$ and $\phi \in Q$. From $\epsilon_x(\phi) \leq \phi$ it follows that $\epsilon_x(\phi) \in \epsilon_x(Q) = \epsilon_x(P) = \epsilon_x(\Psi) \cap P$. So we see that $\epsilon_x(\phi) \in P$, hence $P \in X_{\epsilon_x(\phi)}$.

Conversely, let $P \in X_{\epsilon_x(\phi)}$, that is, $\epsilon_x(\phi) \in P$. By Lemma \ref{leCignoli2} there is a $R \in X(\Psi)$ such that $\phi \in R$ and $\epsilon_x(R) = \epsilon_x(P)$. But this means that $P \in \sigma_{\epsilon_x}(X_{\phi})$. This proves that $X_{\epsilon_x(\phi)} = \sigma_{\epsilon_x}(X_{\phi})$.
\end{proof}

The set of clopen up-sets $\mathcal{U}(X(\Psi))$ of $X(\Psi)$ is the image of $\Psi$ under the mapping $\phi \mapsto X_{\phi}$. Since the mapping is a lattice isomorphism, $\mathcal{U}(X(\Psi))$ is a distributive lattice. If $(\Psi;\mathcal{E},\cdot,1,0)$ is a distributive lattice information algebra, Theorem \ref{th:MapExtr} extends this lattice isomorphism to an information algebra isomorphism defined by the pair of maps $\psi \mapsto X_{\psi}$ and $x \mapsto P_x$. Thus $(\mathcal{U}(X(\Psi));g(\mathcal{E}),\cap,X(\Psi),\emptyset)$ becomes set algebra. We recall that the saturation operators $\sigma_{P_x}$ are existential quantifiers. We summarize this in the following representation theorem for distributive lattice information algebras:

\begin{theorem} \label{th:GenDistr.LatticeInfAlg}
A distributive lattice information algebra $(\Psi;\mathcal{E},\cdot,1,0)$ is isomorphic both as an information algebra as well as a lattice to the set algebra $(\mathcal{U}(X(\Psi));g(\mathcal{E});\cap,X(\psi),\emptyset)$.
\end{theorem}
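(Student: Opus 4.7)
The plan is to assemble the isomorphism from three pieces already in hand: classical Priestley duality for the lattice part, Theorem~\ref{th:MapExtr} for the interaction with extraction, and routine checks that the target is indeed a set algebra in the sense of Section~\ref{sec:SetAlgebra}.

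First I would dispose of the lattice part. By the standing Priestley representation recalled just before the theorem, the map $f:\Psi\to 2^{X(\Psi)}$ defined by $\psi\mapsto X_{\psi}=\{I\in X(\Psi):\psi\in I\}$ is an injective lattice homomorphism from $(\Psi;\leq)$ into $(2^{X(\Psi)};\subseteq)^{\vartheta}$, and its image is exactly the collection $\mathcal{U}(X(\Psi))$ of clopen up-sets of the Priestley space $(X(\Psi),\mathcal{T}(\Psi))$. Thus $f$ is already a lattice isomorphism onto $\mathcal{U}(X(\Psi))$ ordered by reverse inclusion, and it sends the information-algebra constants correctly: $X_{1}=X(\Psi)$ (every prime ideal contains $1$ only if $1\in I$, so one checks the convention that $1$ is the least element of $\Psi$ and hence lies in every proper ideal), $X_{0}=\emptyset$, combination $\phi\cdot\psi=\phi\vee\psi$ maps to intersection, and meets map to unions. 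None of this needs new work.

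Next I would handle the domain side. Define $g:D\to\mathrm{Part}(X(\Psi))$ by $g(x)=P_{x}$, the partition induced by the equivalence relation $I\equiv_{x}J\Leftrightarrow\epsilon_{x}(I)=\epsilon_{x}(J)$, where $\epsilon_{x}(I)=I\cap\epsilon_{x}(\Psi)$. Injectivity of $g$ rests on the standing assumption (\ref{eq:Infomorph}). To see $g$ is a join-homomorphism I would verify, using Lemma~\ref{le:ExtrOfideals} applied to principal ideals together with E5, that $I\equiv_{x\vee y}J$ iff $I\equiv_{x}J$ and $I\equiv_{y}J$, which is exactly the statement $P_{x\vee y}=P_{x}\vee P_{y}$ in the partition lattice under the information order on $\mathrm{Part}(X(\Psi))$; monotonicity $x\leq y\Rightarrow P_{x}\leq P_{y}$ is a by-product. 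Then $g(\epsilon_{x})=\sigma_{P_{x}}$ by definition.

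The central algebraic identity
\[
f(\epsilon_{x}(\phi)) \;=\; X_{\epsilon_{x}(\phi)} \;=\; \sigma_{P_{x}}(X_{\phi}) \;=\; \sigma_{g(x)}(f(\phi))
\]
is precisely Theorem~\ref{th:MapExtr}, which has already been proved on the back of Lemmas~\ref{leCignoli1} and~\ref{leCignoli2}; this is the genuinely nontrivial ingredient and the one step I expect to be the main obstacle had it not been pre-proved. As an immediate corollary, $\sigma_{P_{x}}$ maps $\mathcal{U}(X(\Psi))$ into itself, because every clopen up-set is of the form $X_{\phi}$ and $\sigma_{P_{x}}(X_{\phi})=X_{\epsilon_{x}(\phi)}\in\mathcal{U}(X(\Psi))$. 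This verifies that $\mathcal{U}(X(\Psi))$ is closed under the saturation operators in $g(\mathcal{E})$, so that $(\mathcal{U}(X(\Psi));g(\mathcal{E}),\cap,X(\Psi),\emptyset)$ really is a set algebra in the sense of Section~\ref{sec:SetAlgebra}: it is closed under intersection, contains the unit $X(\Psi)$ and the null $\emptyset$, and every element $X_{\phi}$ is saturated by $\sigma_{P_{x}}$ whenever $x$ is a support of $\phi$ in $\Psi$ (so E4 holds on the image).

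Putting the three pieces together, $(f,g)$ satisfies every clause of the homomorphism definition, $f$ is bijective onto $\mathcal{U}(X(\Psi))$ and $g$ is bijective onto $\{P_{x}:x\in D\}$, so $(f,g)$ is an information-algebra isomorphism. Because $f$ also preserves meets and complementation is not required in the distributive-lattice setting, $f$ is simultaneously a bounded-lattice isomorphism, yielding the combined conclusion of the theorem.
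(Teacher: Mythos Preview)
Your proposal is correct and follows essentially the same route as the paper. The paper does not give a separate proof of this theorem; it simply records it as a summary of the immediately preceding discussion, where the Priestley map $\psi\mapsto X_{\psi}$ is identified as a lattice isomorphism onto $\mathcal{U}(X(\Psi))$ and Theorem~\ref{th:MapExtr} is invoked to extend it to an information-algebra isomorphism via $x\mapsto P_{x}$. Your write-up is more explicit about the side conditions (closure of $\mathcal{U}(X(\Psi))$ under the $\sigma_{P_{x}}$, support axiom E4 on the image, injectivity of $g$ via (\ref{eq:Infomorph})), but the architecture is identical.
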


\cite{cignoli91} extended the duality theory of distributive lattices to lattices with quantifiers. This theory could be extended to distributive lattice information algebras; we resign however to do this here, see however \cite{kohlasschmid16}. 

%
%
%
%


\section{Conditional Independence} \label{sec:CondIndep}

\subsection{Quasi-Separoids} \label{subsec:QSep}

In Section \ref{sec:SetAlgebra} a relation of conditional independence between partitions was introduced and it was shown that it has important consequences for the set algebra, see Theorem \ref{th:LocCompBas}. Conditional independence has since long be identified as an important relation for modeling and computing in probability theory, see for instance \cite{lauritzen88,pearl88,cowell99}. In the theory or relational databases too, conditional independence plays an important role \cite{maier83}, and in different other references conditional independence \cite{dawid01,dawid79,dawid98,shenoy94c,shenoy97a,studeny93,studeny95} has been studied in relation to various other formalisms of reasoning, all of which are in fact instances of information or valuation algebras \cite{kohlas03,kohlas17}, which turns out to be a basic structure to study conditional independence. Conditional independence is a fundamental issue for information. Based on the the q-separoid structure introduced above in Section \ref{sec:SetAlgebra}, this subject is discussed here in a somewhat larger context, than in the references cited above (with the exception of \cite{kohlas17}).

We recap here the theory of q-separoids from \cite{kohlas17}. We have proposed to model a system of questions or domains by a join-semilattice $(D;\leq)$, see Section \ref{sec:FomFreeAlg}. Besides order, representing granularity of questions, a further relation between questions which  describes \textit{conditional independence} of two questions, given a third one is important, as we have seen in set algebras. Therefore, in $D$ a relation $x \bot y \vert z$ is considered which is thought to express the idea that an information relative to $x$, does restrict the possible answers to $y$ only through its part relative to $z$, and vice versa. Or, in other words, only the part relative to $z$ of an information relative to $x$ is relevant as an information relative to $y$, and vice versa. Rather than to give an explicit definition of this relation in $D$ at this place, we only require it to satisfy the following four conditions:

\begin{description}
\item[C1] $x \bot y \vert y$ or all $x,y \in D$, 
\item[C2] $x \bot y \vert z$ implies $y \bot x \vert z$,
\item[C3] $x \bot y \vert z$ and $w \leq y$ imply $x \bot w \vert z$,
\item[C4] $x \bot y \vert z$ implies $x \bot y \vee z \vert z$.
\end{description}

A join-semilattice $D$ together with a relation $x \bot y \vert z$, $(D;\leq,\bot)$, satisfying conditions C1 to C4 will be called a \textit{quasi-separoid} (or also \textit{q-separoid}). In Section \ref{sec:SetAlgebra} such a relation is defined among partitions as a basic model. In the literature two additional conditions are usually added for a relation of conditional independence \cite{dawid01}:

\begin{description}
\item[C5] $x \bot y \vert z$ and $w \leq y$ imply $x \bot y \vert z \vee w$,
\item[C6] $x \bot y \vert z$ and $x \bot w \vert y \vee z$ imply $x \bot y \vee w \vert z$.
\end{description}

Then $D$ is called a \textit{separoid}. If $(D;\leq)$ is a lattice, then yet another condition can be added:

\begin{description}
\item[C7] If $z \leq y$ and $w \leq y$, then $x \bot y \vert z$ and $x \bot y \vert w$ imply $x \bot y \vert z \wedge w$.
\end{description}

With this additional condition $D$ is called a \textit{strong separoid}. For a detailed discussion of separoids we refer to \cite{dawid01}. For example it can be shown that C1 to C3 together with C5 and C6 imply C4. For our purposes C1 to C4 are sufficient. 

Families of compatible frames, as generalizations of partitions \cite{kohlas17}, provide an important example of quasi-separoids, where $D$ is in general only a join-semilattice. Here, we discuss briefly the case where $D$ is a lattice, as for example the lattice of subsets of variables. Define $x \bot_{L} y \vert z$ to hold if and only if
\begin{eqnarray} \label{eq:CondIndepLatt}
(x \vee z) \wedge (y \vee z) = z.
\end{eqnarray}

\begin{theorem}
If $(D;\leq)$ is a lattice, then the relation $x \bot_{L} y \vert z$ defines a quasi-separoid.
\end{theorem}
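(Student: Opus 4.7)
The plan is to verify each of the four axioms C1--C4 directly from the defining identity $(x \vee z) \wedge (y \vee z) = z$, using only basic lattice identities (commutativity of meet, absorption, idempotence of join, and monotonicity). None of the steps should require a nontrivial lattice-theoretic construction; in fact I expect each axiom to reduce to a one- or two-line calculation.

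For C1, I would compute $(x \vee y) \wedge (y \vee y) = (x \vee y) \wedge y = y$, where the second equality is the absorption law. For C2, the defining condition is manifestly symmetric in $x$ and $y$ since meet is commutative, so no work is required beyond the observation. For C4, the calculation is even shorter: replacing $y$ by $y \vee z$ on the left-hand side of the definition yields $(x \vee z) \wedge ((y \vee z) \vee z) = (x \vee z) \wedge (y \vee z)$ by idempotence/associativity of join, and this equals $z$ by the hypothesis $x \bot_L y \vert z$.

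The slightly less immediate case is C3. Assuming $(x \vee z) \wedge (y \vee z) = z$ and $w \leq y$, I would argue the chain $z \leq (x \vee z) \wedge (w \vee z) \leq (x \vee z) \wedge (y \vee z) = z$. The lower bound $z \leq (x \vee z) \wedge (w \vee z)$ holds because $z$ is below both $x \vee z$ and $w \vee z$; the middle inequality uses monotonicity of meet together with $w \vee z \leq y \vee z$, which follows from $w \leq y$. Equality throughout gives $x \bot_L w \vert z$.

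I do not anticipate a real obstacle here; the main care is simply to avoid invoking distributivity, since the claim is stated for an arbitrary lattice. Each of the four verifications uses only monotonicity, commutativity, absorption, and idempotence, all of which hold in any lattice, so the argument will be fully general.
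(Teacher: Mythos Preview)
Your proposal is correct and matches the paper's own proof essentially line for line: the paper verifies C1 via $(x \vee y) \wedge (y \vee y) = y$, notes C2 by symmetry, proves C3 with exactly your chain $z \leq (x \vee z) \wedge (w \vee z) \leq (x \vee z) \wedge (y \vee z) = z$, and dispatches C4 by observing that the defining identity already has $y \vee z$ on one side. Your added remark that nothing beyond absorption, idempotence, and monotonicity is used (in particular no distributivity) is accurate and worth keeping.
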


\begin{proof}
We have $(x \vee y) \wedge (y \vee y) = y$, hence C1 is satisfied. By the symmetry of the definition, C2 holds too. If $w \leq y$, then $z \leq (x \vee z) \wedge (w \vee z) \leq (x \vee z) \wedge (y \vee z) = z$, so C3 follows. Finally from (\ref{eq:CondIndepLatt}) we see that C4 is valid.
\end{proof}

If $x \leq y$, then from $x \bot y \vert y$ (C1) it follows that $x \bot x \vert y$ by C3. Now, in some cases $x \bot x \vert y$ implies $x \leq y$. A separoid with this property is called basic, see \cite{dawid01}. We adapt this to call a quasi-separoid  \textit{basic}, if $x \bot x \vert y$ implies $x \leq y$. The following theorem was proved in  \cite{dawid01} for a basic separoid, but it is valid for basic quasi-separoids too.

\begin{theorem} \label{th:BasicCond} 
Suppose $(D;\leq)$ is a lattice. Then a q-separoid $(D;\leq,\bot)$ is basic if and only if 
\begin{eqnarray} \label{eq:BasicSep}
x \bot y \vert z \Rightarrow (x \vee z) \wedge (y \vee z) = z.
\end{eqnarray}
\end{theorem}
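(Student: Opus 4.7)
The plan is to prove both implications separately, with the converse being essentially immediate and the forward direction requiring a short chain of applications of the q-separoid axioms.

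For the easy direction, suppose the implication in (\ref{eq:BasicSep}) holds; I want to show the q-separoid is basic. Assume $x \bot x \vert y$. Applying (\ref{eq:BasicSep}) with $x$ in place of both arguments and $y$ in place of $z$ gives $(x \vee y) \wedge (x \vee y) = y$, i.e.\ $x \vee y = y$, which is exactly $x \leq y$. So the q-separoid is basic.

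For the harder direction, assume the q-separoid is basic and that $x \bot y \vert z$ holds; I want to establish $(x \vee z) \wedge (y \vee z) = z$. The strategy is to inflate the two sides of the independence to $x \vee z$ and $y \vee z$ respectively, then collapse to the meet and use basicness. Specifically, from $x \bot y \vert z$ I would apply C4 to obtain $x \bot y \vee z \vert z$, then C2 to get $y \vee z \bot x \vert z$, then C4 again to produce $y \vee z \bot x \vee z \vert z$, and C2 once more to arrive at $x \vee z \bot y \vee z \vert z$. Setting $w = (x \vee z) \wedge (y \vee z)$, we have $w \leq y \vee z$, so C3 yields $x \vee z \bot w \vert z$; and since also $w \leq x \vee z$, a second application of C2 and C3 gives $w \bot w \vert z$. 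Basicness then forces $w \leq z$, while $z \leq (x \vee z) \wedge (y \vee z) = w$ holds trivially, so $w = z$ as desired.

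The only step that requires care is the forward direction's symmetric doubling: one must apply C4 on both coordinates (interleaved with C2) in order to replace both $x$ and $y$ by $x \vee z$ and $y \vee z$, since C4 only acts on the right argument. This is a routine but mildly fiddly bookkeeping, not a real obstacle. No use of C5--C7 is needed, which is why the statement works for q-separoids and not just separoids.
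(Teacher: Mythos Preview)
Your proof is correct and follows essentially the same route as the paper: both directions match, and in the forward direction you derive $(x\vee z)\bot(y\vee z)\vert z$ via C4 and C2, pass to $w=(x\vee z)\wedge(y\vee z)$ via C3 and C2 to get $w\bot w\vert z$, and conclude by basicness. The only difference is that you spell out the interleaving of C2 with C4 and C3 more explicitly than the paper does.
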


\begin{proof}
If (\ref{eq:BasicSep}) holds, then $x \bot x \vert y$ implies $x \vee y = y$, hence $x \leq y$.

Suppose now that $x \bot y \vert z$. Then $(x \vee z) \bot (y \vee z) \vert z$ by C4 and C2. Define $w = (x \vee z) \wedge (y \vee z)$ such that $w \leq x \vee z$ and $w \leq y \vee z$. Using C3 and C2 we deduce then that $w \bot w \vert z$. So, if the quasi-separoid is basic, we obtain that $w \leq z$. Since we always have $w \geq z$, it follows that $w = z$.
\end{proof}

According to the first part of the proof of Theorem \ref{th:CommCondIndep} in Section \ref{sec:SetAlgebra}, the conditional independence relation in a lattice of partitions is basic.

If we meet both sides of (\ref{eq:CondIndepLatt}) with $x$ we obtain $x \wedge (y \vee z) = x \wedge z$ which is equivalent to
\begin{eqnarray} \label{eq:CondIndeModLatt}
x \wedge (y \vee z) \leq z.
\end{eqnarray}
This condition is equivalent to (\ref{eq:CondIndepLatt}) if the lattice $D$ is \textit{modular}. So, in this case we have $x \bot_{L} y \vert z$ if and only if (\ref{eq:CondIndeModLatt}) holds.

\begin{theorem} \label{th:CondIndModLatt}
If $(D;\leq)$ is a lattice, the relation $x \bot_{L} y \vert z$ defines a separoid if and only if $(D;\leq)$ is modular.
\end{theorem}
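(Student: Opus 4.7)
By Theorem \ref{th:BasicCond}'s preceding theorem we already know that $\bot_L$ satisfies C1--C4 on any lattice, so being a separoid amounts to verifying C5 and C6. My plan is to show that C6 is automatic in every lattice, while C5 is logically equivalent to the modular law; the ``iff'' will then follow immediately. Throughout, I write $x \bot_L y \vert z$ as $(x\vee z)\wedge(y\vee z) = z$, and I keep in mind that the inequality $(x\vee z)\wedge(y\vee z) \geq z$ is automatic so only the $\leq$ direction needs work.

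\textbf{C6 holds unconditionally.} Suppose $(x\vee z)\wedge(y\vee z) = z$ and $(x\vee y\vee z)\wedge(w\vee y\vee z) = y\vee z$. Since $x\vee z \leq x\vee y\vee z$, the element $(x\vee z)\wedge(y\vee w\vee z)$ is at most $(x\vee y\vee z)\wedge(w\vee y\vee z) = y\vee z$; it is also at most $x\vee z$, so it is bounded above by $(x\vee z)\wedge(y\vee z) = z$. Together with the trivial reverse inequality this gives $(x\vee z)\wedge(y\vee w\vee z) = z$, i.e.\ $x \bot_L y\vee w \vert z$, which is C6.

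\textbf{Modularity implies C5.} Assume $D$ is modular, $x\bot_L y\vert z$, and $w\leq y$. Then $w\leq y\vee z$, so the modular law (applied with $a=w$, $b=x\vee z$, $c=y\vee z$) gives
\begin{eqnarray*}
(x\vee z\vee w)\wedge(y\vee z) \;=\; w \vee \bigl((x\vee z)\wedge(y\vee z)\bigr) \;=\; w \vee z.
\end{eqnarray*}
Because $w\leq y$ we have $y\vee z\vee w = y\vee z$, so the left-hand side equals $(x\vee z\vee w)\wedge(y\vee z\vee w)$, proving $(x\vee z\vee w)\wedge(y\vee z\vee w) = z\vee w$, which is C5.

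\textbf{C5 implies modularity.} Assume C5 holds in $(D;\leq)$ and let $a\leq c$ and $b\in D$ be arbitrary. Set $x = b$, $y = c$, $z = b\wedge c$ and $w = a$. By absorption, $x\vee z = b$ and $y\vee z = c$, so $(x\vee z)\wedge(y\vee z) = b\wedge c = z$, i.e.\ $x\bot_L y\vert z$; and $w = a \leq c = y$. Then C5 yields $(x\vee z\vee w)\wedge(y\vee z\vee w) = z\vee w$, which unfolds (using $a\leq c$) to $(a\vee b)\wedge c = a\vee (b\wedge c)$. This is precisely the modular identity, so $(D;\leq)$ is modular.

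\textbf{Anticipated obstacle.} None of the verifications is technically deep; the only genuine step is spotting the right substitution that turns the instance of C5 into the modular identity. The substitution $z = b\wedge c$, $w = a$ (with $a\leq c$) makes the hypothesis of C5 trivial (via absorption) and forces the conclusion to coincide with $(a\vee b)\wedge c = a\vee(b\wedge c)$. Once this is spotted, the equivalence reads off essentially without calculation.
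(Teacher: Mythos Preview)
Your proof is correct, and it diverges from the paper's in one noteworthy respect. The paper first passes to the equivalent reformulation $x\wedge(y\vee z)=x\wedge z$ of $x\bot_L y\vert z$, valid under modularity, and verifies both C5 and C6 via that identity; in particular the paper's argument for C6 uses modularity. You instead prove C6 directly from the definition $(x\vee z)\wedge(y\vee z)=z$ and show it holds in \emph{every} lattice, so the equivalence really reduces to ``C5 $\Leftrightarrow$ modular''---a slightly sharper statement than the theorem asserts. Your modular-law computation for C5 is essentially the same as the paper's, just phrased without the intermediate reformulation, and your converse (extracting the modular identity from an instance of C5 with $z=b\wedge c$, $w=a\leq c$) coincides with the paper's substitution $x\bot_L y\vert x\wedge y$ followed by C5. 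So: same backbone, but your unconditional treatment of C6 is a small genuine improvement.
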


\begin{proof}
Assume $D$ modular. We are going to show that C5 and C6 are satisfied. If $D$ is modular, then $x \wedge (y \vee z) = x \wedge z$ if and only if $x \bot_{L} y \vert z$. So, if $w \leq y$, it follows $x \wedge (y \vee z \vee w) = x \wedge (y \vee z)$. Therefore, $x \wedge (z \vee w) \leq x \wedge (y \vee z \vee w) = x \wedge (y \vee z) = x \wedge z \leq x \wedge (z \vee w)$, hence $x \wedge (y \vee (z \vee w)) = x \wedge (z \vee w)$. This shows that $x \bot_{L} y \vert z \vee w$, that is C5. Further, $x \bot_{L} y \vert z$ and $x \bot_{L} w \vert y \vee z$ imply $x \wedge (y \vee z) = x \wedge z$ and $x \wedge (w \vee y \vee z) = x \wedge (y \vee z)$. Together, this leads to $x \wedge (w \vee y \vee z) = x \wedge z$, hence $x \bot_{L} (y \vee w) \vert z$. So C6 holds.

On the other hand, assume $x \bot_{L} y \vert z$ to be a separoid. By (\ref{eq:CondIndepLatt}) we have $x \bot _{L} y \vert x \wedge y$. Thus, if  $z \leq x$, by C5, it follows that $x \bot _{L} y \vert (x \wedge y) \vee z$. This means that $x \wedge (y \vee z) = (x \wedge y) \vee z$, which is modularity.
\end{proof}

Further, (\ref{eq:CondIndepLatt}) implies that 
\begin{eqnarray} \label{eq:CondIndDistLatt}
x \wedge y \leq z.
\end{eqnarray}
If the lattice $D$ is \textit{distributive}, then $(x \vee z) \wedge (y \vee z) = (x \wedge y) \vee z$. In this case (\ref{eq:CondIndepLatt}) is equivalent to (\ref{eq:CondIndDistLatt}).

\begin{theorem} \label{th:CondIndDisLatt}
If $(D;\leq)$ is a distributive lattice the relation $x \bot_{L} y \vert z$ defines a strong separoid.
\end{theorem}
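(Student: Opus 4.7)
The plan is to exploit the very clean characterization of $x \bot_L y \vert z$ available in a distributive lattice, which the text has essentially just given us: in any lattice, (\ref{eq:CondIndepLatt}) implies (\ref{eq:CondIndDistLatt}), and in a distributive lattice the converse holds since $(x \vee z) \wedge (y \vee z) = (x \wedge y) \vee z$, so $x \wedge y \leq z$ forces this to equal $z$. Thus in the distributive case,
\begin{eqnarray*}
x \bot_L y \vert z \iff x \wedge y \leq z.
\end{eqnarray*}
This reformulation collapses most of what has to be checked to trivialities about the order.

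First I would note that a distributive lattice is modular, so Theorem \ref{th:CondIndModLatt} immediately gives that $\bot_L$ is a separoid, i.e.\ conditions C1 through C6 all hold. What remains is to verify the strong axiom C7: if $z \leq y$, $w \leq y$, and both $x \bot_L y \vert z$ and $x \bot_L y \vert w$, then $x \bot_L y \vert z \wedge w$. Using the characterization above, the hypotheses read $x \wedge y \leq z$ and $x \wedge y \leq w$, whence $x \wedge y \leq z \wedge w$, which is exactly $x \bot_L y \vert z \wedge w$. (Incidentally the side conditions $z,w \leq y$ are not even needed in this argument.)

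I do not expect any obstacle here; the whole content of the theorem is the lemma-like equivalence $x \bot_L y \vert z \iff x \wedge y \leq z$, which reduces every separoid axiom to a one-line lattice manipulation. If one preferred to avoid invoking Theorem \ref{th:CondIndModLatt}, C1--C4 can be reverified directly from this characterization ($x \wedge y \leq y$ gives C1; symmetry of $\wedge$ gives C2; $w \leq y$ and $x \wedge y \leq z$ give $x \wedge w \leq z$, hence C3; and $x \wedge (y \vee z) = (x \wedge y) \vee (x \wedge z) \leq z$ by distributivity gives C4), and C5, C6 fall out similarly; but leaning on the previous theorem keeps the argument compact.
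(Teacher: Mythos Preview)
Your proposal is correct and matches the paper's own proof almost line for line: invoke Theorem~\ref{th:CondIndModLatt} via the fact that distributive lattices are modular to get C1--C6, then use the equivalence $x \bot_L y \vert z \iff x \wedge y \leq z$ to reduce C7 to the observation that $x \wedge y \leq z$ and $x \wedge y \leq w$ give $x \wedge y \leq z \wedge w$. Your side remark that the hypotheses $z,w \leq y$ are not needed is also accurate.
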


\begin{proof}
A distributive lattice is modular, so C5 and C6 hold. It remains to prove C7. Assume $D$ distributive so that  $x \bot _{L}y \vert z$ if and only if (\ref{eq:CondIndDistLatt}). Now $x \bot _{L} y \vert z$ and $x \bot _{L} y \vert w$ imply $x \wedge y \leq z$ and $x \wedge y \leq w$, hence $x \wedge y \leq z \wedge w$, which shows that $x \bot _{L} y \vert z \wedge w$. Therefore C7 is satisfied.
\end{proof}

We may also consider the relation $x \bot_{d} y \vert z$ which holds if and only if $x \wedge y \leq z$. The following theorem is due to \cite{dawid01}:

\begin{theorem}
The relation $x \bot_{d} y \vert z$ is a separoid if and only if $(D;\leq)$ is a distributive lattice. 
\end{theorem}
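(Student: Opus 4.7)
The plan is a routine implication--by--implication check for the forward direction, plus a clever specialization of C6 for the reverse direction. I would structure it as two lemmas.

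\emph{Forward direction.} Assume $(D;\leq)$ is distributive and verify the six separoid axioms for $x \bot_d y \vert z \Leftrightarrow x \wedge y \leq z$. Axioms C1, C2, C3, C5 are immediate from elementary properties of meet (reflexivity of $\leq$, commutativity of $\wedge$, monotonicity, and weakening of the right--hand side). C4 is the first place distributivity enters: from $x \wedge y \leq z$ one computes $x \wedge (y \vee z) = (x \wedge y) \vee (x \wedge z) \leq z \vee z = z$. For C6, assume $x \wedge y \leq z$ and $x \wedge w \leq y \vee z$. First use distributivity to deduce $x \wedge (y \vee z) = (x \wedge y) \vee (x \wedge z) \leq z$, hence $x \wedge w \leq x \wedge (y \vee z) \leq z$. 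Then $x \wedge (y \vee w) = (x \wedge y) \vee (x \wedge w) \leq z$, which is C6.

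\emph{Reverse direction.} Assume $\bot_d$ is a separoid; I would show that $x \wedge (y \vee z) \leq (x \wedge y) \vee (x \wedge z)$ for all $x, y, z \in D$ (the reverse inequality always holds in a lattice). This is the one genuine content of the theorem. Set
\begin{eqnarray*}
u = (x \wedge y) \vee (x \wedge z).
\end{eqnarray*}
Then $x \wedge y \leq u$ gives $x \bot_d y \vert u$, and $x \wedge z \leq y \vee u$ gives $x \bot_d z \vert y \vee u$. Applying C6 with the roles $(x, y, z, w) \mapsto (x, y, u, z)$ yields $x \bot_d (y \vee z) \vert u$, i.e.\ $x \wedge (y \vee z) \leq u$, which is precisely the distributive inequality.

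The main obstacle is spotting the right instantiation of C6 in the reverse direction; once the substitution $z \mapsto (x \wedge y) \vee (x \wedge z)$ is chosen, both hypotheses of C6 are automatic and the conclusion is exactly distributivity. The forward direction is routine once one notices that C4 and C6 are the only axioms whose validity depends on the ambient lattice structure, and both reduce in one line to the identity $x \wedge (y \vee z) = (x \wedge y) \vee (x \wedge z)$.
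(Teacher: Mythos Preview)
The paper does not supply its own proof of this theorem; it simply attributes the result to Dawid (2001). Your argument is correct and self-contained. The forward direction is handled cleanly; in particular your verification of C6 correctly uses $x\wedge w\leq x$ together with the hypothesis $x\wedge w\leq y\vee z$ to get $x\wedge w\leq x\wedge(y\vee z)\leq z$, after which distributivity finishes the job. The reverse direction is the substantive part, and your instantiation $u=(x\wedge y)\vee(x\wedge z)$ is exactly the right move: both premises of C6 become tautologies of the lattice order, and the conclusion is the distributive inequality. Nothing is missing.
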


In a distributive lattice $x \bot_{L} y \vert z$ if and only if $x \bot_{d} y \vert z$ by the discussion above. Therefore if $x \bot_{d} y \vert z$ is a separoid, it is a strong separoid by Theorem \ref{th:CondIndDisLatt}.

An important instance of a distributive lattice is the lattice of the subsets of a set $I$. If $s,t,r$ denote subsets of $I$, then $s \bot_{L} t \vert r$ if and only if $s \cap t \subseteq r$. This is then a strong separoid by the theorems above. This is the classical case of multivariate models (see Section \ref{sec:SetAlgebra}) considered in the large majority of studies on conditional independence. 

Consider two q-separoids $(D_1;\leq_1,\bot_1)$ and $(D_2;\leq_2,\bot_2)$. A map $f : D_1 \rightarrow D_2$ is called a q-separoid homomorphism, if
\begin{enumerate}
\item $f$ is a join-homomorphism, $f(x \vee_1 y) = f(x) \vee_2 f(y)$,
\item $x \bot_1 y \vert z$ implies $f(x) \bot_2 f(y) \vert f(z)$.
\end{enumerate}
If $f$ is a bijective homomorphism and $x \bot_2 y \vert z$ implies also $f^{-1}(x) \bot_1 f^{-1}(y) \vert f^{-1}(z)$, then $f$ is a q-separoid isomorphism and the two q-separoids are called isomorphic.

If $(D_1,\leq_1)$ is a join-semilattice and $(D_2;\leq_2,\bot_2)$ a q-separoid, and $f : D_1 \rightarrow D_2$ a join-homomorphism, then the relation $x \bot_1 y \vert z$ iff $f(x) \bot_2 f(y) \vert f(z)$ is a q-separoid as is easily verified. 

In the following section, the importance of conditional independence for information algebras will be shown. At this place it is important to note, that the theory of conditional independence as based on (quasi-) separoids is not only important for information algebras, where the semigroup $(\Psi;\cdot)$ is idempotent, but also more generally for non-idempotent commutative semigroups (valuation algebras, see \cite{kohlas17}). In this more general context, as for information algebras, it is important for efficient inference procedures (local computation, see \cite{laur88,shenoyshafer90,kohlasshenoy00,kohlas03} for computation in multivariate models, and \cite{kohlas17} for more general models, especially for computing with partitions). But here we do not pursue this generalization.

\subsection{Conditional Independence in Information Algebras} \label{subsec:CondIndepInfAlg}

Now we apply the theory of q-separoids to information algebras $(\Psi;\mathcal{E},\cdot,1,0)$ with strongly order-generating sets $X$. Consider the associated set algebra $(2^X;g(E),\cap,X,\emptyset)$ (Theorem \ref{th:GenRep_1}), where $g(x) = P_x$ and $P_x$ is the partition in $X$ of equivalence classes of the relation $\alpha \equiv_x \beta$ iff $\epsilon_x(\alpha) = \epsilon_x(\beta)$. Let $P_x \bot_X P_y \vert P_z$ denote the relation of conditional independence between partitions in $X$ as defined in Section \ref{sec:SetAlgebra}. Recall that $g$ is a join-homomorphism (even an isomorphism). So, lets define in $D$ a relation
\begin{eqnarray*}
x \bot_X y \vert z \textrm{ iff}\ P_x \bot_X P_y \vert P_z.
\end{eqnarray*}
Note that this relation depends on the order-generating set $X$. Acording to Section \ref{subsec:QSep} $x \bot_X y \vert z$ is then a q-separoid in the join-semilattice $(D;\leq)$. Recall that the conditional independence relation $P_x \bot_X P_y \vert P_z$ between partions $P$ in $X$ holds iff $\alpha \equiv_z \beta$ for $\alpha,\beta \in X$ implies that there is an element $\gamma \in X$ such that $\alpha \equiv_{x \vee z} \gamma$ and $\beta \equiv_{y \vee z} \gamma$. So, $x \bot_X y \vert z$ in the join semilattice $(D;\leq)$ is also defined by this condition. 

The importance of such a conditional independence relation for information algebras follows from the fact that Theorem \ref{th:LocCompBas} carries over to the information algebra.

\begin{theorem} \label{th:LocCompBasGen}
Let $(\Psi;\mathcal{E},\cdot,1,0)$ be an information algebra with strongly order-generating set $X$ and $x \bot_X y \vert z$ the relation satisfying C1 to C4 (a q-separoid) in $(D;\leq)$ defined above.
\begin{enumerate}
\item If $x \bot_X y \vert z$ and $\epsilon_x(\phi) = \phi$, $\epsilon_y(\psi) = \psi$ for $\phi,\psi \in D$, then
\begin{eqnarray*}
\epsilon_z(\phi \cdot \psi) = \epsilon_z(\phi) \cdot \epsilon_z(\psi).
\end{eqnarray*}
\item If $x \bot_X y \vert z$ and $\epsilon_x(\phi) = \phi$, then
\begin{eqnarray*}
\epsilon_y(\phi) = \epsilon_y(\epsilon_z(\phi)).
\end{eqnarray*}
\end{enumerate}
\end{theorem}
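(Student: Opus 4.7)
The plan is to reduce the statement to Theorem \ref{th:LocCompBas} by transporting everything across the embedding $(f,g)$ furnished by Theorem \ref{th:GenRep_1}, and then use the injectivity of $f$ to pull the conclusion back into $\Psi$. Recall that $f(\psi) = \uparrow\!\psi \cap X$ is a join-homomorphism preserving combination, units and null, that $g(x) = P_x$ is the partition of $X$ induced by the equivalence $\equiv_x$, and, crucially, that $f(\epsilon_x(\psi)) = \sigma_{P_x}(f(\psi))$.

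First I would observe that the hypothesis $x \bot_X y \vert z$ is by definition exactly the set-algebraic conditional independence $P_x \bot P_y \vert P_z$ among the associated partitions in $X$. The support hypotheses translate equally cleanly: $\epsilon_x(\phi) = \phi$ yields $\sigma_{P_x}(f(\phi)) = f(\epsilon_x(\phi)) = f(\phi)$, so $f(\phi)$ is saturated with respect to $P_x$, and similarly $f(\psi)$ is saturated with respect to $P_y$. Thus the hypotheses of Theorem \ref{th:LocCompBas} (1) are satisfied in the set algebra $(2^X;g(\mathcal{E}),\cap,X,\emptyset)$ for the sets $f(\phi)$ and $f(\psi)$.

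For item 1, applying Theorem \ref{th:LocCompBas} (1) gives
\begin{eqnarray*}
\sigma_{P_z}(f(\phi) \cap f(\psi)) = \sigma_{P_z}(f(\phi)) \cap \sigma_{P_z}(f(\psi)).
\end{eqnarray*}
Using that $f$ preserves combination (turning $\cdot$ into $\cap$) and intertwines extraction with saturation, the left-hand side rewrites as $f(\epsilon_z(\phi \cdot \psi))$ and the right-hand side as $f(\epsilon_z(\phi) \cdot \epsilon_z(\psi))$. Injectivity of $f$ then yields $\epsilon_z(\phi \cdot \psi) = \epsilon_z(\phi) \cdot \epsilon_z(\psi)$. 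For item 2, I would apply Theorem \ref{th:LocCompBas} (2) to $f(\phi)$ (which is $P_x$-saturated as above), obtaining $\sigma_{P_y}(f(\phi)) = \sigma_{P_y}(\sigma_{P_z}(f(\phi)))$, which translates back to $f(\epsilon_y(\phi)) = f(\epsilon_y(\epsilon_z(\phi)))$ and hence, by injectivity, to $\epsilon_y(\phi) = \epsilon_y(\epsilon_z(\phi))$.

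There is no substantive obstacle beyond bookkeeping: the whole content is already packed into Theorem \ref{th:LocCompBas} and into the fact that $(f,g)$ is an embedding of information algebras, so the q-separoid on $D$ inherited from the q-separoid on partitions faithfully reproduces both the combination property and the extraction property in $\Psi$. The only small point to keep explicit is that $\bot_X$ in $D$ was \emph{defined} as the pullback of $\bot$ on partitions along $g$, so no additional verification is needed to cross between the two q-separoids.
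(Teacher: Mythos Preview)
Your proposal is correct and follows essentially the same approach as the paper's own proof: both transport the hypotheses across the embedding $(f,g)$ of Theorem~\ref{th:GenRep_1}, invoke Theorem~\ref{th:LocCompBas} in the set algebra, and then use injectivity of $f$ to pull the identities back to $\Psi$. Your write-up is in fact slightly more careful in invoking injectivity of $f$ rather than speaking of $f^{-1}$, but the argument is the same.
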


\begin{proof}
1.) By definition $x \bot_X y \vert z$ implies $P_x \bot_X P_y \vert P_z$. Further $\phi = \epsilon_x(\phi)$ implies $f(\phi) = \sigma_{g(x)}(f(\phi))$, where $f(\phi) =\ \uparrow\!\phi \cap X$ and $g(x) = P_x$ form the embedding mappings of the previous Section \ref{sec:InfAndSetAlg}. Similarly if $\psi = \epsilon_y(\psi)$ then $f(\psi) = \sigma_{g(y)}(f(\psi))$. But then we have by Theorem \ref{th:LocCompBas},
\begin{eqnarray*}
\sigma_{g(z)}(f(\phi) \cap f(\psi)) = \sigma_{g(z)}(f(\phi)) \cap  \sigma_{g(z)}(f(\psi)).
\end{eqnarray*}
If we apply the inverse maps $f^{-1}$ and $g^{-1}$ on this equation, we get $\epsilon_z(\phi \cdot \psi) = \epsilon_z(\phi) \cdot \epsilon_z(\psi)$.

2.) is proved similarily, again using Theorem \ref{th:LocCompBas}.
\end{proof}

This theorem shows that $(\Psi;\mathcal{E},\cdot,1,0)$ is a generalized information algebra in the sense of reference \cite{kohlas17}. We call the properties expressed in item 1.) and 2.) of the theorem the \textit{combination} and \textit{extraction properties} of the information algebra resepectively. They are important for computational purposes, see \cite{kohlas17}. Note that the combination property is an extension of E3. In fact, we have $x \bot_X y \vert x$ and $\epsilon_x(\phi)$ has support $x$ and hence E3 is a particular case of the combination property.

It turns out that the conditional independence relation $x \bot_Xs y \vert z$ in $(D;\leq)$ defined above is the only one possible, that is, it does not depend on $X$. This follows from the next theorem.

\begin{theorem} \label{th:UniqueCVondIndep}
Let $(\Psi;\mathcal{E},\cdot,1,0)$ be an information algebra and $x \bot y \vert z$ any relation in $(D;\leq)$ satisfying C1 to C4 and such that the combination and extraction properties relative to $x \bot y \vert z$ hold. holds. Then $x \bot y \vert z$ implies $P_x \bot_\Psi P_y \vert P_z$, where this is the conditional independence relation among partitions of the set order-generating $\Psi - \{0\}$.
\end{theorem}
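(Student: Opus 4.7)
The goal is to show that an abstract conditional independence relation satisfying C1--C4 together with the combination and extraction properties necessarily refines the concrete set-algebra relation $P_x \bot_\Psi P_y \vert P_z$ induced on the order-generating set $\Psi \setminus \{0\}$. Unwinding definitions, this amounts to the following task: given $\alpha, \beta \in \Psi \setminus \{0\}$ with $\epsilon_z(\alpha) = \epsilon_z(\beta)$, produce a witness $\gamma \in \Psi \setminus \{0\}$ such that $\epsilon_{x \vee z}(\gamma) = \epsilon_{x \vee z}(\alpha)$ and $\epsilon_{y \vee z}(\gamma) = \epsilon_{y \vee z}(\beta)$.

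My plan is to try the obvious candidate $\gamma := \epsilon_{x \vee z}(\alpha) \cdot \epsilon_{y \vee z}(\beta)$. First, I would upgrade the hypothesis $x \bot y \vert z$ to the symmetric strengthening $(x \vee z) \bot (y \vee z) \vert z$ by two applications of C4 combined with C2. Then, using E3 and Lemma \ref{le:SuppProp},
\begin{eqnarray*}
\epsilon_{x \vee z}(\gamma) = \epsilon_{x \vee z}(\alpha) \cdot \epsilon_{x \vee z}(\epsilon_{y \vee z}(\beta)).
\end{eqnarray*}
Applying the extraction property to $(y \vee z) \bot (x \vee z) \vert z$ and to the element $\epsilon_{y \vee z}(\beta)$ (which has support $y \vee z$) collapses the second factor: $\epsilon_{x \vee z}(\epsilon_{y \vee z}(\beta)) = \epsilon_{x \vee z}(\epsilon_z(\epsilon_{y \vee z}(\beta))) = \epsilon_{x \vee z}(\epsilon_z(\beta))$, which by hypothesis equals $\epsilon_{x \vee z}(\epsilon_z(\alpha)) = \epsilon_z(\alpha)$ (using E5 and the fact that $z \leq x \vee z$). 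Since $\epsilon_z(\alpha) \leq \epsilon_{x \vee z}(\alpha)$ by item~4 of Lemma \ref{le:SuppProp}, absorbing this factor gives $\epsilon_{x \vee z}(\gamma) = \epsilon_{x \vee z}(\alpha)$. The identity $\epsilon_{y \vee z}(\gamma) = \epsilon_{y \vee z}(\beta)$ is obtained in exactly the same way by swapping roles, using C2 to invoke $x \bot y \vert z$ as $y \bot x \vert z$.

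It remains to verify $\gamma \neq 0$, which is the only step I expect to need separate care. From E2 we have $\alpha \cdot \epsilon_{x \vee z}(\alpha) = \alpha$, so if $\epsilon_{x \vee z}(\alpha) = 0$ then $\alpha = 0$; hence $\alpha \neq 0$ forces $\epsilon_{x \vee z}(\alpha) \neq 0$. Since $\epsilon_{x \vee z}(\gamma) = \epsilon_{x \vee z}(\alpha) \neq 0$ while E1 gives $\epsilon_{x \vee z}(0) = 0$, we conclude $\gamma \neq 0$.

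The main conceptual obstacle is really just recognizing which form of the combination/extraction properties to invoke, and in particular that one must symmetrize $x \bot y \vert z$ to $(x \vee z) \bot (y \vee z) \vert z$ before the extraction property gives the desired collapse $\epsilon_{x \vee z} \circ \epsilon_{y \vee z} = \epsilon_z$ restricted to elements of support $y \vee z$. Once this is in place, everything reduces to bookkeeping with E1--E5 and Lemma \ref{le:SuppProp}.
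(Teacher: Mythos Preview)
Your proof is correct and follows essentially the same route as the paper: both take the candidate $\gamma = \epsilon_{x\vee z}(\alpha)\cdot\epsilon_{y\vee z}(\beta)$, upgrade $x\bot y\vert z$ to $(x\vee z)\bot(y\vee z)\vert z$ via C2 and C4, and then use E3 together with the extraction property to reduce $\epsilon_{x\vee z}(\gamma)$ to $\epsilon_{x\vee z}(\alpha)$. Your write-up is in fact slightly more complete than the paper's, since you explicitly verify $\gamma\neq 0$ (needed for $\gamma\in\Psi\setminus\{0\}$), a point the paper's proof passes over in silence.
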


\begin{proof}
Let $\phi$ and $\psi$ any two elements such that $\epsilon_z(\phi) = \epsilon_z(\psi)$. Then $\phi' = \epsilon_{x \vee z}(\phi)$ has support $x \vee z$ and $\psi' = \epsilon_{y \vee z}(\psi)$ has support $y \vee z$ and we still have $\epsilon_z(\phi') = \epsilon_z(\psi')$. Then let $\chi = \phi' \cdot \psi'$, such that

\begin{eqnarray*}
\epsilon_{x \vee z}(\chi) = \epsilon_{x \vee z}(\phi' \cdot \psi') =\phi' \cdot \epsilon_{x \vee z}(\psi').
\end{eqnarray*}
But from $x \vee z \bot y \vee z \vert z$ it follows using the extraction property that $\epsilon_{x \vee z}(\psi') = \epsilon_{x \vee z}(\epsilon_z(\psi')) =  \epsilon_{x \vee z}(\epsilon_z(\phi'))$ so that
\begin{eqnarray*}
\epsilon_{x \vee z}(\chi)  =\phi' \cdot \epsilon_{x \vee z}(\epsilon_z(\phi')) = \phi'.
\end{eqnarray*}
In the same way it follows that $\epsilon_{y \vee z}(\eta) = \psi'$. So, $\phi \equiv_z \psi$ implies that there is an element $\chi$ such that $\phi \equiv_{x \vee z} \chi$ and $\psi \equiv_{y \vee z} \chi$. But this means that $P_x \bot_\Psi P_y \vert P_z$. 
\end{proof}

Let $X$ be any strongly order-generating subset of $\Psi$. Then, as Theorem \ref{th:LocCompBasGen} shows, the combination and extraction properties relative to this relation are satisfied. So, by the last theorem, $P_x \bot_\Psi P_y \vert P_z$ holds. Since the map $g$ is injective, any q-separoid relation $x \bot_X y \vert z$ must be identical with the one induced by $P_x \bot_\Psi P_y \vert P_z$. Hence we may drop the index $X$ in the conditional independence relation.

These results extend to locally order-generating systems $X_x$ for $x \in D$. Theorem \ref{th:LocCompBasGen} still holds, since the combination and extraction property are valid in any set algebra. Then, if $x \bot y \vert z$ is defined to hold if $P_x \bot P_y \vert P_z$ holds in $U$, Theorem \ref{th:UniqueCVondIndep} applies and this conditional independence relation in $(D,\leq)$ must still be the same as $x \bot_\Psi y \vert z$.

As a consequence of these statements we have the following theorem:

\begin{theorem} \label{th:ExtrOfElinX}
Let $(\Psi;\mathcal{E},\cdot,1,0)$ be an information algebra and $X$ a strongly order-generating set for this algebra. Suppose $x \bot y \vert z$. Then, if for $\alpha,\beta \in X$ we have $\alpha \equiv_z \beta$, there is an element $\gamma \in X$ such that $\alpha \equiv_{x \vee z} \gamma$ and $\beta \equiv_{y \vee z}s \gamma$. 
\end{theorem}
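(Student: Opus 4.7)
The plan is to combine the embedding machinery from Theorem \ref{th:GenRep_1} with the uniqueness result for conditional independence from Theorem \ref{th:UniqueCVondIndep}.

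First, I would recall that because $X$ is a strongly order-generating set, Theorem \ref{th:GenRep_1} gives an embedding $(f,g)$ of $(\Psi;\mathcal{E},\cdot,1,0)$ into the set algebra $(2^X;g(\mathcal{E}),\cap,X,\emptyset)$, where $g$ maps a domain $x$ to the partition $P_x$ of $X$ associated with the equivalence $\alpha \equiv_x \beta$ iff $\epsilon_x(\alpha)=\epsilon_x(\beta)$, and where $g$ is a join-homomorphism (in fact an isomorphism onto $\mathcal{P}(D)$). In particular, $P_{x \vee z} = P_x \vee P_z$ and $P_{y \vee z} = P_y \vee P_z$.

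Next I would invoke the uniqueness discussion following Theorem \ref{th:UniqueCVondIndep}. By Theorem \ref{th:LocCompBasGen} the relation $x \bot_X y \mid z$ defined via $P_x \bot_X P_y \mid P_z$ satisfies the combination and extraction properties, and by Theorem \ref{th:UniqueCVondIndep} applied with this q-separoid, together with the injectivity of $g$, the hypothesis $x \bot y \mid z$ on $D$ is identical with $x \bot_X y \mid z$ and hence with $P_x \bot_X P_y \mid P_z$ in the set algebra over $X$.

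Finally I would unpack the definition of conditional independence between partitions from Section \ref{sec:SetAlgebra}: $P_x \bot_X P_y \mid P_z$ means precisely that whenever $\alpha \equiv_{P_z} \beta$, there exists an element $\gamma \in X$ with $\alpha \equiv_{P_x \vee P_z} \gamma$ and $\beta \equiv_{P_y \vee P_z} \gamma$. Applied to our $\alpha,\beta \in X$ with $\alpha \equiv_z \beta$ (i.e.\ $\alpha \equiv_{P_z} \beta$), and substituting $P_{x \vee z}$ for $P_x \vee P_z$ and $P_{y \vee z}$ for $P_y \vee P_z$ via the join-homomorphism property of $g$, we obtain $\gamma \in X$ with $\alpha \equiv_{x \vee z} \gamma$ and $\beta \equiv_{y \vee z} \gamma$, which is the desired conclusion.

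The proof is essentially a matter of bookkeeping: the only non-trivial step is already subsumed in the uniqueness statement Theorem \ref{th:UniqueCVondIndep}, which ensures that the abstract relation $x \bot y \mid z$ in $D$ must coincide with the concrete partition-level relation induced on $X$. Once that identification is in place, the conclusion is just the definition of conditional independence among partitions, so I do not expect a genuine obstacle here.
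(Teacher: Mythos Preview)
Your proposal is correct and follows essentially the same route as the paper: the theorem is presented there without a separate proof, simply as a consequence of the uniqueness discussion after Theorem~\ref{th:UniqueCVondIndep}, which identifies the abstract relation $x\bot y\mid z$ with the partition relation $P_x\bot_X P_y\mid P_z$ on $X$, after which the conclusion is just the definition of conditional independence among partitions. Your write-up spells out the bookkeeping (the embedding of Theorem~\ref{th:GenRep_1}, the join-homomorphism property of $g$, and the unpacking of the partition definition) a bit more explicitly than the paper does, but the argument is the same.
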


This holds in particular for $X = \Psi/\{0\}$, in an atomistic information algebra it holds for atoms, for meet-irreducible elements in a finite distributive lattice information algebra and it holds for prime or maximal ideals in Boolean information algebras since these are all strongly order-generating sets in these cases.  It holds also for prime ideals in distributive lattice information algebras although these elements are not order-generating; it is in fact sufficient that an embedding exists. Whereas for general and atomistic information algebra also a direct proof of this result is available, no such proof is known so far for prime elements and prime ideals. The theorem extends also to locally order-generating sets in the following way:

\begin{theorem}
Let $(\Psi;\mathcal{E},\cdot,1,0)$ be an information algebra and $X_x$ for $x \in D$ a system of locally order-generating sets for this algebra. Suppose $x \bot y \vert z$. Then, if for $\alpha \in X_x$ and $\beta \in X_y$ we have $\epsilon_z (\alpha) = \epsilon_z(\beta)$, there is an element $\gamma \in X_{x \vee y \vee z}$ such that $\epsilon_{x \vee z}(\alpha) =  \epsilon_{x \vee z}(\gamma)$ and $\epsilon_{y \vee z}(\beta) = \epsilon_{y \vee z}(\gamma)$. 
\end{theorem}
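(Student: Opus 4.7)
The plan is to mirror the proof of Theorem \ref{th:ExtrOfElinX} by lifting the statement into the set algebra $(2^U;g(\mathcal{E}),\cap,U,\emptyset)$ furnished by Theorem \ref{th:GenRep_2}, where $U$ is the universe of consistent maps on the tuple system $\{X_w\}_{w\in D}$, and let $(f,g)$ denote the associated embedding. By the uniqueness of conditional independence (Theorem \ref{th:UniqueCVondIndep}, whose extension to locally order-generating systems is recorded immediately after it), the hypothesis $x \bot y \vert z$ in $D$ corresponds under $g$ to the partition-theoretic independence $P_x \bot P_y \vert P_z$ in $Part(U)$.

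Concretely, I would first extend $\alpha \in X_x$ to a consistent map $a \in U$ with $a_x = \alpha$, invoking property 4 of the tuple system (and Zorn's lemma along a well-ordering of $D$ to reach every domain), and do the same for $\beta$ to obtain $b \in U$ with $b_y = \beta$. Next, exploiting $\epsilon_z(\alpha) = \epsilon_z(\beta)$ together with the local-order-generating property of $X_z$ and property 5 of the tuple system, I would refine the lifts so that $a_z = b_z$, giving $a \equiv_{P_z} b$. The partition independence $P_x \bot P_y \vert P_z$ then yields a consistent map $c \in U$ with $c_{x \vee z} = a_{x \vee z}$ and $c_{y \vee z} = b_{y \vee z}$. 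Setting $\gamma := c_{x \vee y \vee z} \in X_{x \vee y \vee z}$, consistency of $c$ gives $\epsilon_{x \vee z}(\gamma) = c_{x \vee z} = a_{x \vee z}$ and $\epsilon_{y \vee z}(\gamma) = c_{y \vee z} = b_{y \vee z}$; choosing the upward lifts $a_{x \vee z}$ and $b_{y \vee z}$ as the canonical images of $\alpha$ and $\beta$ in $X_{x \vee z}$ and $X_{y \vee z}$ respectively then gives the required equalities $\epsilon_{x \vee z}(\gamma) = \epsilon_{x \vee z}(\alpha)$ and $\epsilon_{y \vee z}(\gamma) = \epsilon_{y \vee z}(\beta)$.

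The main technical obstacle will be the alignment step: arranging simultaneously that $a_x = \alpha$, $b_y = \beta$, and $a_z = b_z$, all within the constraints of the consistency condition $a_w = \pi_w(a_{w'})$. This is the analogue of the direct alignment performed for strongly order-generating sets in the proof of Theorem \ref{th:ExtrOfElinX}, where a single common element of $X$ could be produced at once; here, because the tuple-system elements live at different domains, the alignment has to be obtained inductively using property 5 and the compatibility supplied by $\epsilon_z(\alpha) = \epsilon_z(\beta)$, before the partition independence can be invoked.
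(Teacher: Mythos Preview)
Your route through the embedding of Theorem~\ref{th:GenRep_2} is exactly what the paper intends; the statement is given there without proof, as the local analogue of Theorem~\ref{th:ExtrOfElinX} obtained via the same machinery. There is, however, a genuine gap in your final identification. You obtain $\epsilon_{x\vee z}(\gamma)=c_{x\vee z}=a_{x\vee z}$, where $a_{x\vee z}\in X_{x\vee z}$ is a \emph{lift} of $\alpha$ satisfying $\pi_x(a_{x\vee z})=\alpha$; but the asserted equality is $\epsilon_{x\vee z}(\gamma)=\epsilon_{x\vee z}(\alpha)=\alpha$ (the last step because $\alpha$ has support $x\le x\vee z$). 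A lift of $\alpha$ to $X_{x\vee z}$ is in general strictly larger than $\alpha$: already in the locally atomistic case an atom relative to $x$ is typically not an atom relative to $x\vee z$ when $z\not\le x$, so $a_{x\vee z}\ne\alpha$. Calling $a_{x\vee z}$ a ``canonical image'' of $\alpha$ does not make it equal to $\alpha$, and your conclusion does not follow.

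The alignment step you single out is also more delicate than your sketch suggests. The hypothesis $\epsilon_z(\alpha)=\epsilon_z(\beta)$ is an equation in $\Psi$, and $\epsilon_z(\alpha)$ need not lie in $X_z$ when $z\not\le x$; meanwhile $a_z=\pi_z(a_{x\vee z})$ depends on the chosen lift $a_{x\vee z}$, not directly on $\epsilon_z(\alpha)$. Property~5 of the tuple system controls one extension at a time, but arranging simultaneously $a_x=\alpha$, $b_y=\beta$, and $a_z=b_z$ needs a coordination argument you have not supplied. Both difficulties have the same source: $z$ need not lie below $x$ or $y$, so the passages $X_x\to X_{x\vee z}$ and $X_y\to X_{y\vee z}$ genuinely add information that your argument does not control.
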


The results of this sections clarify greatly the understanding and meaning of conditional independence in information algebras. In the next section an important special case will be addressed.

\subsection{Commutative Information and Set Algebras} \label{subsec:CommAlg}

We have noted in Section \ref{subsc:ExplInfAlg} that in many cases the extraction operators commute and the corresponding order $(D;\leq)$ forms a lattice. In this case, the set of extraction operators is closed under composition. In the general framework of information algebras $(\Psi;\mathcal{E},\cdot,1.0)$ we may start with the additional assumption that $\mathcal{E}$ is closed under composition and composition is commutative. Clearly, composition is associative, hence $(\mathcal{E},\circ)$ is an idempotent, commutative semigroup. In such a semigroup we may define an order $\epsilon_x \leq \epsilon_y$ iff $\epsilon_x \cdot \epsilon_y = \epsilon_y \cdot \epsilon_x = \epsilon_x$. It follows that $(\mathcal{E};\leq)$ is a meet-semilattice under this order

\begin{lemma}
If $(\mathcal{E},\circ)$ is an idempotent, commutative semigroup, then
\begin{eqnarray*}
\epsilon_x \cdot \epsilon_y = \epsilon_y \cdot \epsilon_x = \inf\{\epsilon_x,\epsilon_y\} = \epsilon_x \wedge \epsilon_y.
\end{eqnarray*}
\end{lemma}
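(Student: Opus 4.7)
The plan is to verify directly that $\eta := \epsilon_x \circ \epsilon_y$ is the greatest lower bound of $\{\epsilon_x, \epsilon_y\}$ in the order $\epsilon_u \leq \epsilon_v$ iff $\epsilon_u \circ \epsilon_v = \epsilon_v \circ \epsilon_u = \epsilon_u$. This is a two-part standard semilattice argument, powered by commutativity and idempotency of $\circ$.

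First I would check that $\eta$ is indeed a common lower bound. Using commutativity and idempotency, compute
\begin{eqnarray*}
\eta \circ \epsilon_x = (\epsilon_x \circ \epsilon_y) \circ \epsilon_x = \epsilon_x \circ \epsilon_x \circ \epsilon_y = \epsilon_x \circ \epsilon_y = \eta,
\end{eqnarray*}
and $\epsilon_x \circ \eta = \eta$ by commutativity, hence $\eta \leq \epsilon_x$. The argument for $\eta \leq \epsilon_y$ is symmetric. This also shows, en passant, that the order is well-defined and that $\epsilon_x \circ \epsilon_y = \epsilon_y \circ \epsilon_x$ as required in the statement.

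Second, I would verify the infimum property. Suppose $\zeta \in \mathcal{E}$ is any lower bound of $\epsilon_x$ and $\epsilon_y$, so $\zeta \circ \epsilon_x = \zeta$ and $\zeta \circ \epsilon_y = \zeta$. Then
\begin{eqnarray*}
\zeta \circ \eta = \zeta \circ \epsilon_x \circ \epsilon_y = \zeta \circ \epsilon_y = \zeta,
\end{eqnarray*}
and $\eta \circ \zeta = \zeta$ by commutativity, so $\zeta \leq \eta$. Therefore $\eta = \inf\{\epsilon_x, \epsilon_y\}$, which is what had to be shown.

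There is essentially no obstacle: the lemma is a textbook fact about idempotent commutative semigroups, and the only subtlety worth flagging is that commutativity of $\circ$ makes the two symmetric conditions in the definition of $\leq$ collapse to one, so no asymmetry has to be tracked in the computations.
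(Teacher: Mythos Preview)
Your proof is correct and follows essentially the same approach as the paper: show that $\epsilon_x \circ \epsilon_y$ is a lower bound of both factors using idempotency, then show that any common lower bound $\zeta$ satisfies $\zeta \circ (\epsilon_x \circ \epsilon_y) = \zeta$, hence lies below it. The paper's argument is identical in structure, though stated more tersely (and with a minor slip in the direction of the inequality when introducing the test element).
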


\begin{proof}
By idempotency we have $\epsilon_y \circ \epsilon_x \circ \epsilon_x = \epsilon_y \circ \epsilon_x$, so we have $\epsilon_y \circ \epsilon_x \leq \epsilon_x$ and in the same way $\epsilon_y \circ \epsilon_x \leq \epsilon_y$. Consider $\epsilon_z \in \mathcal{E}$, such that $\epsilon_x,\epsilon_y \leq \epsilon_z$. Then, we have
\begin{eqnarray*}
\epsilon_z \circ \epsilon_x \circ \epsilon_y = \epsilon_z \circ \epsilon_y = \epsilon_z, 
\end{eqnarray*}
 hence $\epsilon_z \leq \epsilon_y \circ \epsilon_x$, which shows that $\epsilon_y \circ \epsilon_x$ is the infimum of $\epsilon_x$ and $\epsilon_y$ in $\mathcal{E}$.
\end{proof}

Now,  $(D;\leq)$ is always a join-semilattice. We assume that the map $x \mapsto \epsilon_x$ is injective, hence bijectieve between $(D;\leq)$ and $(\mathcal{E};\leq)$. The map is also order-preserving.

\begin{lemma}
If $x \leq y$, then $\epsilon_x \leq \epsilon_y$.
\end{lemma}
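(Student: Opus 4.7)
The plan is to reduce the claim directly to two facts already established in Lemma \ref{le:SuppProp}. The definition on $\mathcal{E}$ says $\epsilon_x \leq \epsilon_y$ iff $\epsilon_x \circ \epsilon_y = \epsilon_y \circ \epsilon_x = \epsilon_x$, so I only need to verify both compositions equal $\epsilon_x$ pointwise whenever $x \leq y$.

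First I would handle $\epsilon_x \circ \epsilon_y = \epsilon_x$. This is immediate from item 5 of Lemma \ref{le:SuppProp}, which says that $x \leq y$ implies $\epsilon_x(\epsilon_y(\psi)) = \epsilon_x(\psi)$ for every $\psi \in \Psi$. Next I would handle $\epsilon_y \circ \epsilon_x = \epsilon_x$. For any $\psi \in \Psi$, item 3 of Lemma \ref{le:SuppProp} tells us that $x$ is a support of $\epsilon_x(\psi)$. Applying axiom E5 together with the hypothesis $x \leq y$, we conclude that $y$ is likewise a support of $\epsilon_x(\psi)$, that is $\epsilon_y(\epsilon_x(\psi)) = \epsilon_x(\psi)$.

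Combining the two pointwise identities yields $\epsilon_x \circ \epsilon_y = \epsilon_y \circ \epsilon_x = \epsilon_x$, which is precisely the definition of $\epsilon_x \leq \epsilon_y$ in the meet-semilattice $(\mathcal{E};\leq)$. There is no real obstacle here; the only subtlety is keeping straight which direction of the composition is handed to us by item 5 (namely, applying the coarser operator after the finer one) versus the direction supplied by E5 together with item 3 (applying the finer operator after the coarser one to an element that already has support $x$). Both collapse to $\epsilon_x$, finishing the argument.
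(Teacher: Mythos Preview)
Your proof is correct and follows essentially the same idea as the paper: use Lemma~\ref{le:SuppProp} (item~3) together with E5 to obtain $\epsilon_y(\epsilon_x(\psi)) = \epsilon_x(\psi)$ for all $\psi$. The only difference is that you also verify $\epsilon_x \circ \epsilon_y = \epsilon_x$ via item~5, whereas the paper omits this because the ambient assumption in this section is that $(\mathcal{E};\circ)$ is commutative, so the second equality is automatic once one is established; your extra step is harmless but redundant in context.
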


\begin{proof}
By Lemma \ref{le:SuppProp} $x$ is a support of $\epsilon_x(\psi)$, hence if $x \leq y$, then $y$ is also a support of $\epsilon_x(\psi)$. So, for any $\psi \in \Psi$ we have $\epsilon_y(\epsilon_x(\psi)) = \epsilon_x(\psi)$, hence $\epsilon_y \circ \epsilon_x = \epsilon_x$ or $\epsilon_x \leq \epsilon_y$.
\end{proof}

Therefore, the map $x \mapsto \epsilon_x$ is an order isomorphism. This implies, that the join-semilattice $(D;\leq)$ is a lattice and so is $(\mathcal{E};\leq)$. In particular, we have
\begin{eqnarray*}
\epsilon_x \cdot \epsilon_y = \epsilon_y \cdot \epsilon_x = \epsilon_{x \wedge y}.
\end{eqnarray*}

Consider now any strongly order-generating set $X$ for the information algebra $(\Psi;\mathcal{E},\cdot,1,0)$. Then we have
\begin{eqnarray*}
\uparrow\!\epsilon_x(\psi) = \sigma_{P_x}(\uparrow\!\psi)s
\end{eqnarray*}
and therefore, the saturation operators $\sigma_{P_x}$ for $x \in D$ commute, that is the set algebra $(2^X;g(\mathcal{E}),\cap,X,\emptyset)$ is a commutative set algebra. Therefore, we call an information algebra $(\Psi;\mathcal{E},\cdot,1,0)$, where $(\mathcal{E};\circ)$ is an idempotent commutative semigroup also a commutative information algebra. By Theorem \ref{th:CommCondIndep} we have then  that $P_x \bot_X P_y \vert P_z$ holds iff $(P_x \vee P_z) \wedge (P_y \vee P_z) = P_z$. This induces then in the lattice $(D;\leq)$ the unique conditional independence relation $x \bot y \vert z$, which holds iff $(x \vee z) \wedge (y \vee z) = z$. We have then in particular $x \bot y \vert x \wedge y$. So, in a commutative information algebra we have the following result, specializing Theorem \ref{th:ExtrOfElinX} 

\begin{theorem}
Let $(\Psi;\mathcal{E},\cdot,1,0)$ be a commutative information algebra and $X$ a strongly order-generating set for this algebra. Then, if for $\alpha,\beta \in X$ we have $\alpha 
\equiv_{x \wedge y} \beta$, then there is an element $\gamma \in X$ such that $\alpha \equiv_x \gamma$ and $\beta \equiv_y \gamma$. 
\end{theorem}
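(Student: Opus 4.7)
The plan is to derive the result as a direct specialization of Theorem \ref{th:ExtrOfElinX} to the commutative setting, using the characterization of conditional independence from Theorem \ref{th:CommCondIndep} that has just been established. The key observation is that in a commutative information algebra, the lattice $(D;\leq)$ is equipped with the unique conditional independence relation $x \bot y \vert z \Leftrightarrow (x \vee z) \wedge (y \vee z) = z$, and this relation always holds with $z = x \wedge y$.

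More precisely, first I would verify that $x \bot y \vert x \wedge y$ holds in the commutative information algebra. By the absorption laws in the lattice $(D;\leq)$ we have $x \vee (x \wedge y) = x$ and $y \vee (x \wedge y) = y$, so
\begin{eqnarray*}
(x \vee (x \wedge y)) \wedge (y \vee (x \wedge y)) = x \wedge y,
\end{eqnarray*}
which by the established characterization means precisely $x \bot y \vert x \wedge y$.

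Next, I would invoke Theorem \ref{th:ExtrOfElinX} with $z = x \wedge y$. That theorem, applied to the strongly order-generating set $X$ and the independence $x \bot y \vert x \wedge y$, states that whenever $\alpha \equiv_{x \wedge y} \beta$ for $\alpha,\beta \in X$, there exists $\gamma \in X$ with $\alpha \equiv_{x \vee (x \wedge y)} \gamma$ and $\beta \equiv_{y \vee (x \wedge y)} \gamma$. Applying absorption once more to rewrite $x \vee (x \wedge y) = x$ and $y \vee (x \wedge y) = y$ yields $\alpha \equiv_x \gamma$ and $\beta \equiv_y \gamma$, which is exactly the claim.

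I expect no real obstacle here: the genuine technical work was carried out earlier in Theorem \ref{th:CommCondIndep} (identifying the conditional independence relation for commuting partitions) and Theorem \ref{th:ExtrOfElinX} (lifting the ``witness'' property from partitions to strongly order-generating elements via the embedding $(f,g)$). The present statement only combines these two ingredients with the absorption identities of the lattice $(D;\leq)$, so the proof is essentially a one-line specialization rather than a new argument.
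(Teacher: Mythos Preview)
Your proposal is correct and matches the paper's approach exactly: the paper presents this theorem explicitly as a specialization of Theorem \ref{th:ExtrOfElinX}, having just noted that in the commutative case the unique conditional independence relation satisfies $x \bot y \vert x \wedge y$. Your use of the absorption identities to reduce $x \vee (x \wedge y)$ and $y \vee (x \wedge y)$ to $x$ and $y$ is precisely the missing line the paper leaves implicit.
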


This holds as before, in the general case, in particular for $X = \Psi/\{0\}$, in an atomistic information algebra it holds for atoms, in a finite distributive lattice information algebra it holds for prime elements, for Boolean information algebra it holds for maximal ideals and for distributive lattice information algebras, it holds for prime ideals. The result extends also to locally order-generating sets in the following way:

\begin{theorem}
Let $(\Psi;\mathcal{E},\cdot,1,0)$ be a commutative information algebra and $X_x$ for $x \in D$ a system of locally order-generating sets for this algebra. Then, if for $\alpha \in X_x$ and $\beta \in X_y$ we have $\epsilon_{x \wedge y} (\alpha) = \epsilon_{x \wedge y}(\beta)$, then there is an element $\gamma \in X_{x \vee y}$ such that $\epsilon_x(\alpha) =  \epsilon_x(\gamma)$ and $\epsilon_y(\beta) = \epsilon_y(\gamma)$. 
\end{theorem}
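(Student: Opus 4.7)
The plan is to derive this result as a direct corollary of the preceding theorem (the one concerning locally order-generating sets under a general conditional independence hypothesis $x \bot y \vert z$). In the commutative setting the author has established that $(D;\leq)$ is in fact a lattice, and that the (unique) conditional independence relation satisfies $x \bot y \vert z$ iff $(x \vee z) \wedge (y \vee z) = z$. The natural choice is therefore to specialise to $z := x \wedge y$.

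First I would verify that $x \bot y \vert x \wedge y$ always holds. Using the lattice characterization and absorption, one has $(x \vee (x \wedge y)) \wedge (y \vee (x \wedge y)) = x \wedge y$, so the hypothesis of the previous theorem is satisfied with $z = x \wedge y$. Next I would invoke the previous theorem (for locally order-generating sets in the presence of $x \bot y \vert z$): given $\alpha \in X_x$ and $\beta \in X_y$ with $\epsilon_{x \wedge y}(\alpha) = \epsilon_{x \wedge y}(\beta)$, there exists $\gamma \in X_{x \vee y \vee (x \wedge y)}$ with $\epsilon_{x \vee (x \wedge y)}(\alpha) = \epsilon_{x \vee (x \wedge y)}(\gamma)$ and $\epsilon_{y \vee (x \wedge y)}(\beta) = \epsilon_{y \vee (x \wedge y)}(\gamma)$.

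Finally, absorption in the lattice $(D;\leq)$ gives $x \vee y \vee (x \wedge y) = x \vee y$, $x \vee (x \wedge y) = x$, and $y \vee (x \wedge y) = y$, so the conclusion rewrites exactly to the desired statement: $\gamma \in X_{x \vee y}$ with $\epsilon_x(\alpha) = \epsilon_x(\gamma)$ and $\epsilon_y(\beta) = \epsilon_y(\gamma)$.

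I do not anticipate a substantive obstacle here; the entire content of the theorem is already packaged into the previous (more general) result, and the only work is to check that the commutative hypothesis grants us both the existence of meets in $D$ and the validity of $x \bot y \vert x \wedge y$, both of which have been explicitly noted in the paragraph preceding the theorem.
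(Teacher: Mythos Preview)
Your proposal is correct and is precisely the approach the paper intends: the theorem is stated without proof as a specialization of the preceding locally-order-generating result with $z = x \wedge y$, after the paper has already noted that in the commutative case $(D;\leq)$ is a lattice and that $x \bot y \vert x \wedge y$ always holds. Your use of absorption to simplify $x \vee y \vee (x \wedge y)$, $x \vee (x \wedge y)$, and $y \vee (x \wedge y)$ is exactly what is needed to recover the stated conclusion.
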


This concludes our discussion of conditional independence of information, at least how this concept expresses itself in information algebras. Order, hence idempotency of combination plays an important role in our analysis. It is an open question how these results carry over to non-idempotent algebras, valuation algebras \cite{kohlas03}, where at least in regular and separative algebras \cite{kohlas17}, we also have at least a preorder, which may serve for an extension of the present results.


\section{Conclusion}

Any information algebra as defined here can be embedded into a set-algebra. Exploiting structural particularities of an information algebra there exist possibly several representations in different set algebras. This concerns especially atomistic algebras, Boolean information algebras and distributive lattice algebras. In set-algebras a conditional independence relation can be defined among partitions. Since partitions in this context represent domains or questions, this is a relation between questions or domains. Based on this conditional independence relation it can be shown that set-algebras enjoy the combination and extraction property, which is important for efficient  local computation schemes for inference in these algebras  \cite{kohlas17}. Using the set-algebra representations of information algebras this conditional independence relation can be transported to the domains underlying the information algebra and it turns out that the information algebra inherits the combination and extraction properties from the set-algebras. This shows that the weak axiomatic definition of information algebras of this paper is in fact equivalent to the one given in \cite{kohlas17}, where an abstract conditional independence relation was postulated. This result gives also a concrete interpretation of the abstract relation used in \cite{kohlas17}. 

The representation of information algebras by set-algebras is essentially based on the concept of strongly order-generating subsets of the algebra. This is only a sufficient concept for obtaining an embedding as the case of distributive lattice information algebras shows. Also it was noted that the weaker concept of locally order-generating sets are also sufficient to obtain embeddings. Locally atomistic information algebras provide an example for this. There exist more examples of such structures. We only mention locally Boolean or locally distributive lattice information algebras, structures, which we do not discuss here. 

In applications similar algebraic structures like information algebras without the idempotency axiom are also very important \cite{shenoyshafer90,kohlas03}. Such so-called valuation algebras model many different formalisms of probabilistic inference and various other uncertainty formalisms. Conditional independence plays also an important role in these formalisms, hence in valuation algebras \cite{shenoy94c,kohlas03}, and in particular q-separoids seem to be fundamental also in this context \cite{kohlas17}. Now, the results presented here depend strongly on the order between pieces of information induced by idempotency.  The question arises therefore, to what extend the results of this paper can be extended to the non-idempotent valuation algebras. As discussed in \cite{kohlas17}  order (in fact per-order) can also be defined among valuations. For the theory of separoids, pre-order is sufficient as shown in \cite{dawid01}. So, this is possibly the key to extend the present theory to the more general valuation algebras. 

\bibliography{tcslit}
\bibliographystyle{authordate3}


\end{document}